\tikzstyle{block}=[draw opacity=0.7,line width=1.4cm]
\def\infinity{\rotatebox{90}{8}}
\tikzset{snake arrow/.style=
{-triangle 45,
line width=1.4pt,
decorate,
decoration={snake,amplitude=1mm,segment length=10mm,post length=2mm}},
}
\newcommand{\ignore}[1]{}
\newtheorem{problem}{Problem}
\newtheorem{definition}{Definition}
\newtheorem{theorem}{Theorem}
\newtheorem{lemma}{Lemma}
\newtheorem{corollary}{Corollary}
\newtheorem{conclusion}{Conclusion}
\newtheorem{example}{Example}
\newlength{\oldtextfloatsep}\setlength{\oldtextfloatsep}{\textfloatsep}
\newlength{\oldfloatsep}\setlength{\oldfloatsep}{\floatsep}
\newcommand{\moveup}{\vspace*{-2mm}}
\newcommand{\moveups}{\vspace*{-1mm}}
\newcommand{\tabcaption}[1]{\vspace*{-3mm}\caption{#1}\vspace*{-5mm}}
\newcommand{\figcaption}[1]{\vspace*{-3mm}\caption{#1}\vspace*{-5mm}}
\newcommand{\thatsymbol}{\fontencoding{T1}\selectfont \TH}
\newcommand{\expectation}{\mathbb{E}}
\newcommand{\flow}{\Gamma}
\newcommand{\diameter}{\mathcal{D}}
\newcommand{\pathset}{\mathbb{P}}
\newcommand{\Out}{\textbf{Out}}
	\newcommand{\In}{\textbf{In}}
\newcommand{{\kempegreedy}}{G{\scriptsize{REEDY}}}
\newcommand{{\ourgreedy}}{ScoreG{\scriptsize{REEDY}}}
\newcommand\subparagraph{%
  \@startsection{subparagraph}{5}
  {\parindent}
  {3.25ex \@plus 1ex \@minus .2ex}
  }
\newcommand*\samethanks[1][\value{footnote}]{\footnotemark[#1]}
\DeclareMathOperator*{\argmax}{arg\,max}
\begin{document}
\sloppy

\clubpenalty=10000
\widowpenalty = 10000

\conferenceinfo{SIGMOD'16}{San Francisco, USA}
\title{Holistic Influence Maximization: Combining Scalability and Efficiency with Opinion-Aware Models}

\ignore{
\numberofauthors{3}

\author{
\alignauthor
Sainyam Galhotra\thanks{The first two authors have contributed equally to this work.}\\
\alignauthor
Akhil Arora\samethanks\\
\alignauthor
Shourya Roy\\
%
\sharedaffiliation
	\affaddr{\{sainyam.galhotra, akhil.arora, shourya.roy\}@xerox.com}\\
	\affaddr{Text \& Graph Analytics (TGA), Xerox Research Centre India (XRCI), Bangalore, India}
}

\def\sharedaffiliation{
\end{tabular}
\begin{tabular}{c}}
}

\numberofauthors{1}

\author{
	\alignauthor
		Sainyam Galhotra\thanks{The first two authors have contributed equally to this work.}  ~  Akhil Arora\samethanks  ~  Shourya Roy\\
		\affaddr{Text and Graph Analytics (TGA), Xerox Research Centre India (XRCI), Bangalore, India}\\
		\email{sainyam@cs.umass.edu ~\{akhil.arora, shourya.roy\}@xerox.com}
}

\ignore{
\numberofauthors{1}
\author{\alignauthor Sainyam Galhotra, Akhil Arora, Shourya Roy \\
\affaddr{Xerox Research Centre India, Bangalore} \\ 
\email{\{sainyam.galhotra, akhil.arora, shourya.roy\}@xerox.com}
}
}

\date{\today}

\maketitle
\begin{abstract}
The steady growth of graph data from social networks has resulted in wide-spread research in finding solutions to the influence maximization problem. In this paper, we propose a \emph{holistic} solution to the influence maximization (\emph{IM}) problem. (1) We introduce an \emph{opinion-cum-interaction} (OI) model that closely mirrors the real-world scenarios.
Under the OI model, we introduce a novel problem of \emph{Maximizing the Effective Opinion (MEO)} of influenced users. We prove that the \emph{MEO} problem is NP-hard and cannot be approximated within a constant ratio unless P=NP. (2) We propose a heuristic algorithm \emph{OSIM} to efficiently solve the MEO problem. To better explain the \emph{OSIM} heuristic, we first introduce \emph{EaSyIM} -- the opinion-oblivious version of \emph{OSIM}, a scalable algorithm capable of running within practical compute times on commodity hardware. In addition to serving as a fundamental building block for \emph{OSIM}, \emph{EaSyIM} is capable of addressing the scalability aspect -- \emph{memory consumption} and \emph{running time}, of the IM problem as well.

Empirically, our algorithms are capable of maintaining the deviation in the spread always within 5\% of the best known methods in the literature. In addition, our experiments show that both \emph{OSIM} and \emph{EaSyIM} are effective, efficient, scalable and significantly enhance the ability to analyze real datasets.
\ignore{
\\ \textbf{ToDO -- Writing in the abstract}\\
1) What should be called as EaSyIM -- the union algorithm OR the aggregation algorithm?\\
	The aggregation algorithm is called EaSyIM.
2) Add the error bound (if possible) for EaSyIM\\
	a) Union algorithm on general graphs with depth (d)...\\
	b) The error of aggregation algorithm over union algorithm on DAGs..\\
3) Remove point number (3), if it does not make much sense..
}
%
\end{abstract}

\ignore{
\vspace{1.5mm}
\noindent
{\bf Categories and Subject Descriptors:} H.2.8 {[{\bf Database Management}]}: {Database Applications} -- \emph{Data Mining}

\vspace{1mm}
\noindent
{\bf Keywords:} Social Networks, Opinion, Influence Maximization, Viral Marketing, {{\kempegreedy}} Algorithm, Scalability
}

\category{H.2.8}{Database Management}{Database Applications}[Data Mining]

\keywords{Social Networks, Opinion, Influence Maximization, Viral Marketing, {{\kempegreedy}} Algorithm, Scalability, Efficiency}
\section{Motivation}
\label{sec:intro}

\ignore{
\textbf{@SR: \\ (ignore (AA \& SG))\\
1) Once we mention IC and LT in the introduction, can we briefly introduce them as well. This would make sense here and we simply state in one line (very briefly) that how is our model different than them.\\
2) Please also use the word activation of a node while explaining them, so that we can use this word safely in the subsequent sections. [Model mainly]\\
3) We think we should also include the examples presented by us in our ID. This would show clearly the holistic nature and motivate the model, objective function and cost-effectiveness. If needed we can leave the cost-effective example.[Owing to space constraints].\\
4) [A mild suggestion, Please guide us on this] \\
For making the consistency between sentiment, polarity and opinion. We can say (in the introduction) that the word sentiment/polarity is interchangeably used to represent the feeling about an event. More-formally we introduce a notion of an opinion later in the document and design our problems around it. \\
}
}




Growth and pervasiveness of online social networks is no longer a new phenomenon. They have become an integral part of the day-to-day life of almost every Internet user. Their wide-spread reach paves the way for a host of applications -- (1) \emph{Viral marketing/ad-targeting} \cite{virad1, inf2, virad2}, (2) \emph{Outbreak detection} \cite{disease, celf}, (3) \emph{Community formation, evolution and detection} \cite{communityFormation1,im_community_detect2,im_community_analysis1}, (4) \emph{Recommendations} using social-media \cite{reco2, reco3} and many more. The \emph{influence maximization} (IM) problem \cite{kempe} with its applicability in solving the above mentioned problems and beyond, has thus, been one of the most widely studied problems over the past decade. This problem is to identify a set of \emph{seed nodes} so that the overall \emph{spread} of information in a network, which is the potential collective impact of imparting that piece of information to these nodes, is maximized.

Given that the objective of IM is to maximize the \emph{spread} of information about a content, which can be a product, person, event and many more, an important aspect of this problem is the underlying \emph{information diffusion} model. A diffusion model defines the dynamics of information propagation and also controls the way this information is perceived by the nodes in a network. Nevertheless, a substantially large fraction of the literature in this field has focussed on devising efficient and scalable algorithms \cite{kempe, celf, celfPlus, opt1, tim, skim, mia, pmia, irie, simpath, ldag, parallel_im, im_prunedmc, im_staticgreedy, im_rank} for IM using the classical information diffusion models \cite{kempe, inf2}. The two fundamental diffusion models proposed by Kempe et al. \cite{kempe} -- \emph{Independent Cascade} (IC) and \emph{Linear Threshold} (LT) have been almost exclusively followed in majority while extended in a small fraction of all the subsequent work.

\begin{figure}[t]
\centering
\scalebox{0.4}{

\begin{tikzpicture}[node distance=15mm,
round/.style={fill=green!50!black!20,draw=green!50!black,minimum size=5mm,text width=10mm,align=center,circle,thick}]
\centering
\node at (0,1.1) {\textbf{\huge $\mathbf{o_{B}=0}$}};
\node at (0,-3.1) {\textbf{\huge $\mathbf{o_{D}=-0.3}$}};
\node at (7.1,0) {\textbf{\huge $\mathbf{o_{C}=0.6}$}};
\node at (-7.1,0) {\textbf{\huge $\mathbf{o_{A}=0.8}$}};
\node at (-3.3,.35) {\textbf{\huge $\mathbf{p_{BA}=0.1}$}};
\node at (-3,-.55) {\textbf{\huge $\mathbf{\varphi_{BA}=0.7}$}};
\node at (3.3,.4) {\textbf{\huge $\mathbf{p_{BC}=0.1}$}};
\node at (2.8,-.6) {\textbf{\huge $\mathbf{\varphi_{BC}=0.8}$}};
\node at (-2.8,-1.5){\textbf{\huge $\mathbf{p_{AD}=0.8}$}};
\node at (-3.6,-2.5) {\textbf{\huge $\mathbf{\varphi_{AD}=0.9}$}};
\node at (2.8,-1.6) {\textbf{\huge $\mathbf{p_{CD}=0.9}$}};
\node at (3.3,-2.5) {\textbf{\huge $\mathbf{\varphi_{CD}=0.1}$}};s

\node[round] (a) {\huge $\mathbf{B}$};
\node[round] (c) [below right =.03cm and 6cm of a]{\huge $\mathbf{C}$};
\node[round] (b) [below left =.03cm and 6cm of a]{\huge $\mathbf{A}$};
\node[round] (d) [below left =.03cm and 6cm of c]{\huge $\mathbf{D}$};

\draw[-latex,-triangle 45,line width=1.6pt](a) to[bend right = 10] (b);
\draw[-latex,-triangle 45,line width=1.6pt](a) to[bend left = 10] (c);
\draw[-latex,-triangle 45,line width=1.6pt](b) to[bend right = 10] (d);
\draw[-latex,-triangle 45,line width=1.6pt](c) to[bend left = 10] (d);
\end{tikzpicture}

}
\figcaption{\textbf{A sample representation of the Twitter network.}}
\label{fig:oi_example}
\vspace{-1mm}
\end{figure}
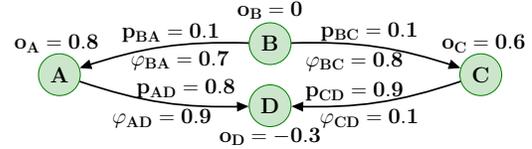

Having said that, the most obvious limitation of these models is that the notion of \emph{spread}, of a set of seed-nodes $S$, is defined as a function of the number of nodes that get activated using these seeds. The problem with this definition is two fold -- (1) Each node is considered to be contributing fully and positively towards the \emph{spread} of information about a content without considering the personal \emph{opinion} of the nodes which could be even negative, (2) A newly active node is always considered to perceive the information with the same intent as that of the node that activated it, while the former may tend to disagree with the latter owing to the \emph{interactions} between them in the past. These scenarios are not scarce in real-world settings, where a node can possess an \emph{opinion} (Def.~\ref{def:opinion}) towards a content which is the subject of IM. Moreover, usually past \emph{interactions} (Def.~\ref{def:interaction}) between a pair of nodes govern the way in which any new information originating from one of the two would be perceived by the other and vice-versa. To better understand these aspects, let us consider a (sample) real-world scenario in Example~\ref{example:oi1}.
\begin{example}
\label{example:oi1}
	Suppose Apple wants to market, utilizing the notion of viral-marketing, the iPhone 6 Plus using the Twitter network given a marketing budget of $1$ seed-node. Figure~\ref{fig:oi_example} portrays a snapshot of such a network consisting of $4$ nodes. Nodes $A$ and $C$ follow $B$, while $D$ follows $A$ and $C$. The opinion ($o$) of a node towards the new iPhone can be calculated using its opinion towards similar products in the past -- previous iPhone models or phones belonging to the same market segment. For example, the node $A$ has $o_A=0.8$, which means that she held highly positive opinions towards previous iPhone models or similar products. The opinion for other nodes can be computed in a similar way. Moreover, each edge possesses an additional parameter, called the interaction ($\varphi$) probability. In our example, the directed edge $BC$ has $\varphi_{BC}=0.8$ which means that the node $C$ perceives the information originating from the node $B$ with the same intent $4$ out of $5$ times. In other words, $C$ agrees with $B$ $4$ out of $5$ times while disagrees otherwise.
\end{example}

In the light of above discussions, it is intuitive that despite of being widely adopted, the fundamental (opinion-oblivious) diffusion models lack the power to leverage the expressibility of the social networks of today, namely -- \emph{Facebook}, \emph{Twitter} etc. Thus, there is a need to explore potential extensions, over and above these models, with the ability to better model information diffusion under real-world scenarios. To further this cause, we incorporate the concepts of \emph{opinion} and \emph{interaction} \cite{opinion, twitter_opinion, twitter_topic} from the social media domain and propose a new \emph{Opinion-cum-Interaction} (OI) model of information diffusion capable of addressing the limitations discussed above. Using the OI model, we define a more realistic notion of \emph{opinion-spread} (Defs.~\ref{def:op_spread} and~\ref{def:eop_spread}).

Let us try to analyze the importance of \emph{opinion-spread} using the example in Figure~\ref{fig:oi_example}. As discussed earlier, opinion-oblivious diffusion models, viz. IC, rely on the influence probability ($p$) alone without paying heed to the personal opinion of the activated nodes, which may even be negative. Thus, the likelihood of $C$ being chosen as a seed under the IC model is fairly high, owing to the high value of $p_{CD}$ which in turn results in high (opinion-oblivious) \emph{spread}. However, owing to $o_{D}$ being negative it will always decrease the overall \emph{opinion-spread} if activated. Thus, $C$ might not be the best choice for a seed-node. Moreover, a model capable of capturing the effect of \emph{opinions}, viz. OI, will penalize $C$ and thus decrease its likelihood of being chosen as a seed node.
\ignore{
\begin{example}
\label{example:oi}
	Suppose Apple wants to market, utilizing the notion of viral-marketing, their most recent iPhone 6 Plus using the Twitter network given a marketing budget of $1$ seed-node. Figure~\ref{fig:oi_example} portrays a snapshot of such a network consisting of $4$ nodes along with all the model-specific parameters. Nodes $B$ and $C$ follow $A$, while $D$ follows $B$ and $C$. For the sake of brevity, the analysis of spread in this scenario assumes (1) the IC model and (2) the OI model using IC at the first-layer. The influence probability (of an edge) $p$ possesses the same meaning as in the previous models. The opinion ($o$) of a node towards the new iPhone can be calculated using its opinion for similar products in the past -- previous iPhone models or phones belonging to the same market segment. For example, the node $B$ has $o_B=0.8$ which means that she held highly positive opinions towards previous iPhone models or similar products in the past. The opinion for other nodes can be computed in a similar way. Moreover, the OI model introduces an additional parameter with each edge, called the interaction ($\varphi$) probability. In our example, the directed edge $AC$ has $\varphi_{AC}=0.8$ which means that the node $C$ perceives the information originating from the node $A$ with the same intent $4$ out of $5$ times. In other words, $C$ agrees with $A$ $4$ out of $5$ times while disagrees otherwise. Under the IC model, the expected spread of the nodes are: $\sigma(A)=0.3628,\ \sigma(B)=0.8,\ \sigma(C)=0.9$ and $\sigma(D)=0$. Thus, $C$ is selected as the seed node. Under the OI model, the expected opinion-spread of the nodes are: $\sigma^o(A)=-0.022564,\ \sigma^o(B)=p_{BD}\big(\varphi_{BD}(o_D+o_B)/2 + (1-\varphi_{BD})(o_D-o_B)/2\big)=0.8\big(0.9(0.8-0.3)/2+0.1(-0.3-0.8)/2\big)=0.136,\ \sigma^o(C)=-0.351$ and $\sigma^o(D)=0$. Thus, $B$ is selected as the seed node. An interesting observation is that, the seed identified using the IC model, i.e. $C$, would have resulted in the worst-possible opinion-spread. This clearly shows the importance of the notion of opinion-spread under the OI model over opinion-oblivious spread using the IC model, in real-world scenarios where opinion of nodes and interaction between nodes play an important part in governing the process of information propagation.
\end{example}
}

We strengthen the above mentioned observation further, by plotting the \emph{opinion-spread} against the number of seeds for different real datasets (details in Sec.~\ref{sec:exp}). Figure~\ref{fig:model_motivate} shows that the opinion-spread obtained by the seeds selected using the OI model is much better when compared to that using the IC model, and thus, establishes the importance of \emph{OI} over the IC/LT models of information-diffusion with better conformance to the real-world scenarios.

As mentioned above, there has not been much research in devising \emph{opinion-aware} information-diffusion models, with \emph{IC-N} \cite{negIC} and \emph{OC} \cite{ovm} being the only two models capable of incorporating \emph{negative opinion} in the diffusion process. Despite of their capability to model opinions, they possess the following limitations.

{\bf(1) Constrained and Specific}: The IC-N model is rather simplistic, where a negatively activated node retains its opinion throughout its lifetime, while a \emph{quality-factor} qf (same for each node) defines the probability of transitioning to a negative opinion for the positively activated nodes. Although, a uniform qf allows retention of submodularity, which further facilitates extension of existing algorithms for IM to this setting, it renders the model too \emph{specific}.

{\bf(2) Simplistic diffusion dynamics}: The IC-N model possesses a strict constraint that a negatively activated node will always activate other nodes as negative. This assumption ignores the personal opinion of a node completely, and will be violated when the target node possesses a highly positive opinion about a content.

The OC model, where the change in opinion of a node is dependent upon its own opinion and the opinion of the nodes that activate it, does not possess the above two limitations. The OC model thus, facilitates a more involved process of modelling opinions at an added cost of losing submodularity.

{\bf(3) Lack capability to capture Interaction}: Both IC-N and OC lack the capability to capture the way in which information is perceived between a pair of nodes (\emph{interaction}) in the network. 

{\bf(4) Lack of backward-compatibility with fundamental diffusion models}: The IC-N model is tuned to work with IC at the first layer, while the OC model is designed to work with LT alone.

The OI model attempts to address these limitations by allowing -- (1) Nodes to possess different opinions, (2) Edges to possess an \emph{interaction} probability for modelling the perception of information between a node-pair, and (3) Capability of working with both IC and LT at the first layer. In addition to the analytical explanations, Figure~\ref{fig:model_motivate} also shows that the opinion-spread with seeds selected using OI is better when compared to those selected using OC. In summary, the OI model is a more generic, natural and realistic model when compared to IC-N and OC.
\ignore{
As mentioned above, there has not been much research in devising \emph{opinion-aware} information-diffusion models. The \emph{IC-N} model \cite{negIC} was one of the first models to incorporate negative opinion in information-diffusion. This model was rather simplistic, where a negatively activated node retained its opinion throughout its lifetime, while for the positively activated nodes a \emph{quality-factor} $qf$ (same for each node) defined the probability of transitioning to a negative opinion. The same $qf$ for each node facilitated retention of submodularity which further made it easy to extend existing efficient algorithms for IM to this setting. The OI model is much more generic as, unlike IC-N (same qf on all nodes), nodes are allowed to possess different opinions. The spread function ($\Gamma(\cdot)$) is no longer submodular. The IC-N model possesses a strict constraint that a negatively activated node will always activate other nodes as negative. This assumption ignores the personal opinion of a node completely, and will be violated when the target node possesses a highly positive opinion about a content. The OI model being generic does not possess such strict assumptions. In the IC-N model, the perception of information between a node-pair is governed by $qf$ (a node property), whereas the OI model allows \emph{interaction}, associated to an edge between the node pair, to govern the same and thus, OI is a more natural model. In addition, the OI model can work with both IC and LT models as the first layer while IC-N is just tuned for the IC model. Next, the OC model \cite{ovm} was introduced, where the change in opinion of a node is dependent upon its own opinion and the opinion of the nodes that activate it. Similar to the OI model, the $\Gamma(\cdot)$ function is not submodular for the OC model as well. However, the OC model possesses the following limitations. (1) The OC model has no notion of interaction, and hence modelling of how the information is perceived between a pair of nodes is not possible and (2) The OC model is designed to work with the $LT$ model alone. Figure~\ref{fig:model} also shows that the opinion-spread with seeds selected using OI is better when compared to those selected using OC. The OI model handles these limitations with being a more generic and realistic model when compared to OC \cite{ovm}.
}
\ignore{
\begin{figure}[t]
\centering
\includegraphics[width = 0.4\linewidth]{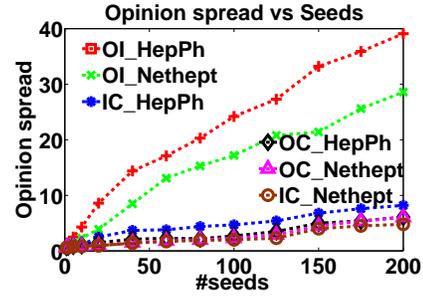}
\figcaption{Opinion spread for different diffusion models.}
\label{fig:model_motivate}
\end{figure}

	\subfloat[Opinion-Spread (OI vs IC)]
	{
		\scalebox{0.3}{
			\includegraphics[width = 0.99\linewidth]{motivate_model}
		}
		\label{fig:model_motivate}
	}
\end{figure}

\begin{figure}[t]
\centering
	\subfloat[OI vs OC and IC]
	{
		\scalebox{0.33}{
			\includegraphics[width = 0.99\linewidth]{motivate_model}
		}
		\label{fig:model}
	}
	\subfloat[$\lambda=1$ vs $\lambda=0$]
	{
		\scalebox{0.35}{
			\includegraphics[width = 0.99\linewidth]{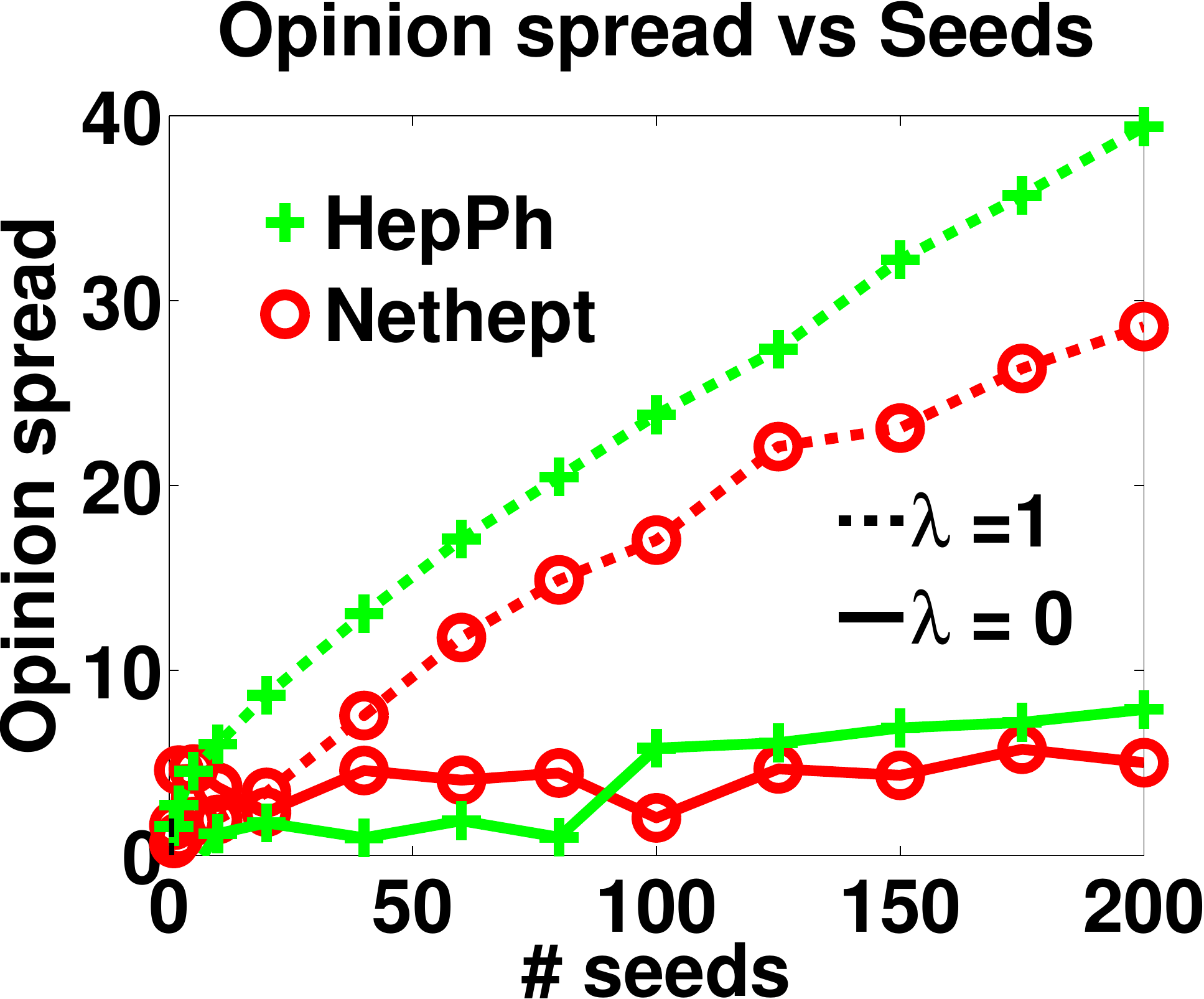}
		}
		\label{fig:objective}
	}
\figcaption{\textbf{Opinion spread for different diffusion models.}}
\label{fig:model_motivate}
\end{figure}
}

\begin{figure}[t]
\centering
	\scalebox{0.65}{
		\includegraphics[width = 0.99\linewidth]{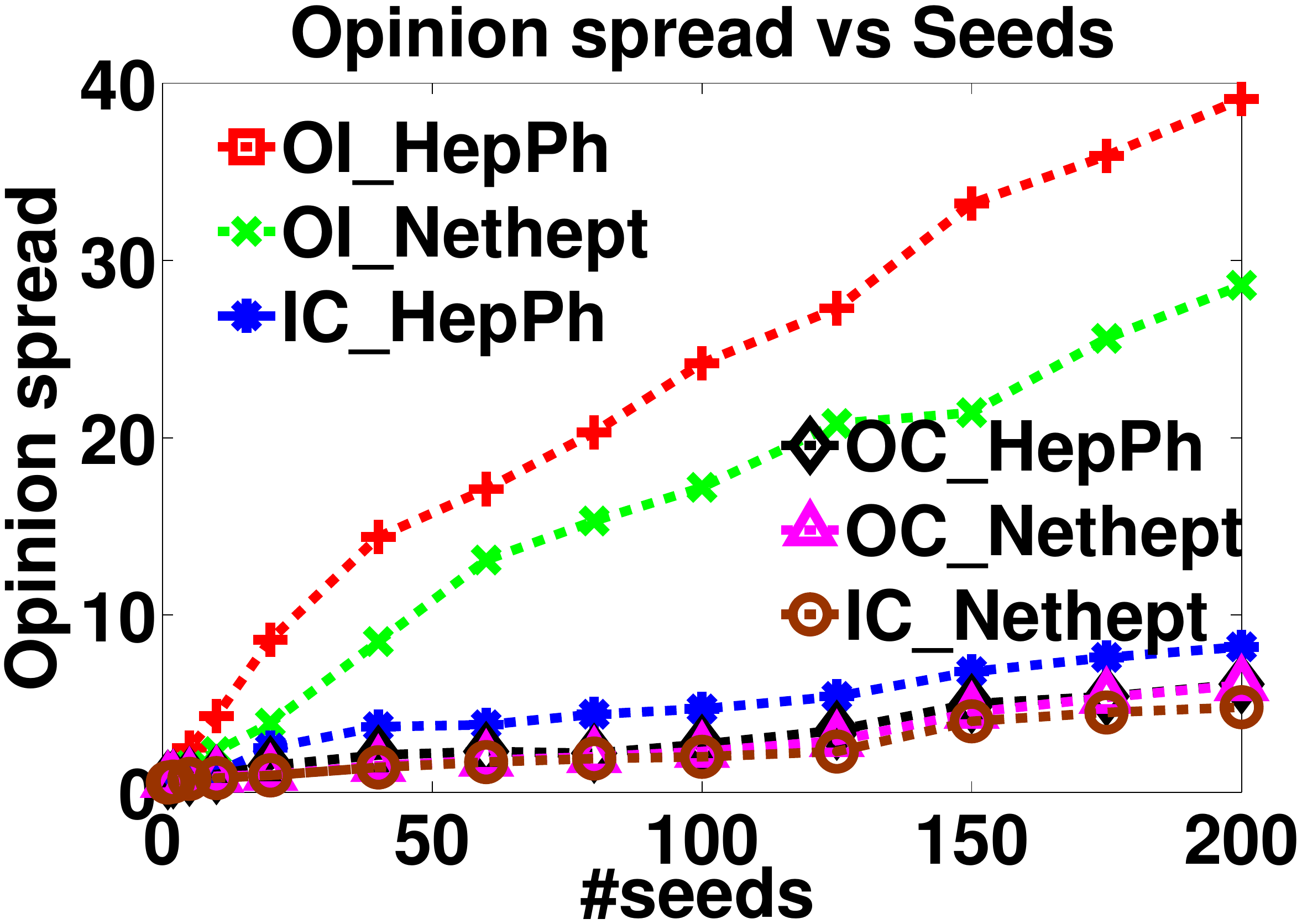}
	}
\figcaption{\label{fig:model_motivate}\textbf{Opinion spread for different diffusion models.}}
\end{figure}

Having successfully motivated (both qualitatively and empirically) and differentiated (from current state-of-the-art) the OI model, we propose a novel problem (MEO) of maximizing the \emph{spread} of information under opinion-aware settings. Since \emph{MEO} is NP-hard and difficult to approximate within a constant ratio, we incorporate ideas from the \emph{opinion-oblivious} scenarios to design \emph{scalable} algorithms for the \emph{opinion-aware} case. We propose \emph{OSIM}, capable of maximizing the \emph{opinion-spread} while accommodating for the change of \emph{opinion} as the information propagates. To this end, we first analyze a representative set of algorithms for IM under the fundamental diffusion models.

\ignore{
\textbf{Contributions.} In summary, we make the following contributions. We propose a \emph{holistic} solution to the IM problem. We introduce the OI model (Sec.~\ref{subsec:model}) which, to the best of our knowledge, is the most generic information diffusion model with the ability to address most of the limitations in the existing models and closely mirror real-world scenarios. Under this model, we propose a novel problem of maximizing the effective opinion of influenced users, called \emph{MEO} (Sec.~\ref{subsec:prob_formulation}) and present its tractability analysis (Sec.~\ref{subsec:model_analysis}). Since \emph{MEO} is not approximable, after analyzing the state-of-the-art algorithms for the opinion-oblivious case we first propose \emph{EaSyIM} (Sec.~\ref{subsubsec:easyim}) an effective, efficient and scalable algorithm for the IM problem and then extend it further to design \emph{OSIM} (Sec.~\ref{subsubsec:osim}) to solve the \emph{MEO} problem. Both of our algorithms run in linear time ($O(k\diameter(m+n))$) and space ($O(n)$). We provide a thorough theoretical analysis (Sec.~\ref{subsec:score_analysis}) for both the algorithms to prove their effectiveness. We also believe that such an in-depth analysis for the class of algorithms that estimate influence using a function of simple paths is not present in previous works. Finally, through an in-depth empirical analysis (Sec.~\ref{sec:exp}) we show that our algorithms are the first to possess the capability of handling \emph{huge} graphs on \emph{commodity-hardware} and provide the best trade-off between \emph{running-time} and \emph{memory-consumption} while retaining the quality of \emph{spread}.
}
Although, the Kempe's {\kempegreedy} algorithm \cite{kempe} provides the best possible approximation guarantees for the IM problem, it is inefficient. The CELF++ \cite{celfPlus} algorithm with optimizations over {\kempegreedy} renders it as the most efficient technique within this class of algorithms. However, these techniques never improved the asymptotic time-complexity of IM and thus, cannot be employed for large graphs. To further this, recent works employed the use of \emph{sampling} with \emph{memoization} \cite{opt1,tim,skim,imm}, with TIM$^{+}$ and its improvement IMM being the most efficient, to achieve superior efficiency while retaining approximation guarantees. However, these algorithms cannot be termed \emph{scalable} owing to their \emph{exorbitantly high} memory footprint. For example, the memory footprint of TIM$^{+}$ can be as high as 100 GB for a graph with $1M$ nodes and $3M$ edges which is not uncommmon these days. Owing to these limitations, neither CELF++ nor TIM$^+$ serve as potential candidates for extensions to \emph{opinion-aware} scenarios. Our proposition, \emph{EaSyIM}, tackles these problems by incorporating the idea that influence of a node can be estimated using a function of the number of simple paths starting at that node. Note that paths of length $l$ from a node $u$ can be calculated as the sum of all paths of length $l-1$ from its neighbors. We exploit this idea to devise data-structures and algorithms capable of running in \emph{linear space} and \emph{time}. This renders our algorithm \emph{scalable} yet \emph{efficient}. Therefore, \emph{EaSyIM} is chosen for extensions to the \emph{opinion-aware} case to produce \emph{OSIM}, which being fundamentally similar possesses the same analysis as \emph{EaSyIM}.\\\\
\textbf{Contributions.} In summary, we make the following contributions. We propose a \emph{holistic} solution to the IM problem. Towards that,
\begin{itemize}
	\item We introduce the OI model (Sec.~\ref{subsec:model}) which, to the best of our knowledge, is the most generic information diffusion model with the ability to address most of the limitations in the existing models and closely mirror real-world scenarios. In addition to this, we are the first to motivate the need of \emph{opinion-aware} models by citing a novel application – analyzing churn, using real-world data (Sec.~\ref{subsubsec:analyzing_churn}).
	\item Under OI, we propose a novel problem of maximizing the effective opinion of influenced users, called \emph{MEO} (Sec.~\ref{subsec:prob_formulation}).

	\item We propose effective, efficient and scalable algorithms (Sec.~\ref{sec:alg}), namely -- \emph{OSIM} and \emph{EaSyIM}, that run in linear time $O(k\diameter(m+n))$ and space $O(n)$.
	\item We provide thorough theoretical analyses (Sec.~\ref{subsec:score_analysis}) to prove the effectiveness of our algorithms. We believe that such an in-depth analysis for the algorithms that estimate influence as a function of simple paths is not present in the literature.
	\item Through an in-depth empirical analysis (Sec.~\ref{sec:exp}), we show that our algorithms are the first to possess the capability of handling \emph{huge} graphs on \emph{commodity-hardware} and provide the best trade-off between \emph{running-time} and \emph{memory}-\emph{consumption}, while retaining the quality of \emph{spread}.
\end{itemize}
\ignore{
\begin{figure}[t]
\centering
	\subfloat[Time]
	{
		\scalebox{0.34}{
			\includegraphics[width = 0.99\linewidth]{motivate_time}
		}
		\label{fig:time_motivate}
	}
	\subfloat[Memory]
	{
		\scalebox{0.34}{
			\includegraphics[width = 0.99\linewidth]{motivate_memory}
		}
		\label{fig:memory_motivate}
	}
\figcaption{Efficiency and scalability comparison. (a) {\bf Time}: \emph{EaSyIM} (l=3) vs CELF++. (b) {\bf Memory}: \emph{EaSyIM} (l=3) vs TIM$^+$ ($\epsilon=0.1$).}
\label{fig:algorithm_motivate}
\end{figure}
}

The rest of the paper is organized as follows. In Section~\ref{sec:problem}, we introduce a generic opinion-aware information diffusion model and present the problem statement. Section~\ref{sec:alg} describes our algorithm and its analysis. In Section~\ref{sec:exp}, experimental results are presented. Section~\ref{sec:related} highlights the related work before Section~\ref{sec:conc} concludes.

\ignore{
Humans have been termed social animals, and possess the never ending need of interacting with other human beings. This truth has probably been one of the reasons for the unprecedented growth of online social networks over the last few years. These networks came into existence to create and maintain social relationships at various levels such as family, friends and professional colleagues. Researchers and commercial organizations have found several applications in these networks viz. personalized/community based product recommendations \cite{reco1,reco2}, advertisement targeting \cite{adTarget}, marketing \cite{market}, estimating the propagation of diseases to model epidemics \cite{disease}, mining the trending topics, predicting election results or for that matter planning election campaigns \cite{election} etc. Specifically in marketing there is an important application towards identifying a set of people who can help propagate certain information (e.g. about a newly launched product) maximally within a network. A person in a network is preferred if she has higher influence on her neighbors leading to higher spread of information. A technical formulation of this application was first proposed as {\it Influence maximization} in 2003 by Kempe et. al.~\cite{}. 

Given a network $G(V,E)$; $|V|=n,\ |E|=m$, with edge weights ($p(e) \mid e \in E$) denoting the pair-wise influence probabilities, and a budget constraint $k$ the objective of influence maximization is to select a set $S$ of $k$ seed-nodes ($|S|=k$) with the ability to maximize the spread of information over this network. Diffusion
models are used to explain and simulate the spread of information
in social networks. Kempe et al. \cite{kempe} in their seminal work proved that finding an optimal solution for the influence maximization problem is NP-Hard and were the first to prove that a simple greedy algorithm can provide the best approximation guarantees in polynomial time. They incorporated the use of two fundamental diffusion models -- {\it Independent Cascade (IC)} and {\it Linear Threshold (LT)} for information propagation. However, the algorithm proposed by them had two sources of inefficiency. The first is that it took $O(kmn)$ time to produce a solution, while the second one is that it requires an additional factor of a large number of Monte Carlo (MC) simulations ($\approx10K$) to obtain the expected value of the \emph{spread}.

Over the next decade or so, a number of work has appeared to address the first issue with \emph{CELF++} \cite{celfPlus} being the most efficient of all, but there has not been much work in improving the second. More recently, Tang et al. \cite{tim} have come up with an algorithm (TIM)\footnote{For notation and details please refer \cite{tim}.} that runs in $O((k+l)(m+n)\log {n}/\epsilon^{2})$ expected time and produces a $(1-\frac{1}{e}-\epsilon)$-approximate solution, where $\epsilon$ is a constant, with probability as high as $1-{n^{-l}}$. While this is the fastest known algorithm for influence maximization it cannot be termed \emph{scalable} as it has a high memory footprint. The worst case space complexity of TIM is $O(n^2\log{{n \choose k}}/\epsilon^{2})$, which can be very high for small values of $\epsilon$. For example, the memory footprint of TIM can be as high as 100 GB for a graph with a million nodes and close to 3 million edges (Details in Sec.~\ref{sec:exp}) which is not unreasonable today with \emph{Facebook}, \emph{Twitter} etc. have over billions of nodes and trillions of edges. {\color{red} does twitter have billion nodes}

Almost all the above work on influence maximization assumed propagation of {\it any} information is desirable. However, in real life settings concerning spread of information, it is important to distinguish between favorable and unfavorable information propagation. For example, a marketing company would not like to select an influential person as seed who has been against their line of products. In fact a more influential person in this case would be less preferred than a less influential person. Hence it is important to distinguish between positive and negative (and possibly neutral) information propagation. While Zhang et. al. have modeled this problem as every individual (node) having certain polarity towards information under consideration~\cite{ovm}, Li et. al. have suggested having links with polarities indicating positive and negative relationships (friend and foe)~\cite{plosone}. 

In this paper, we propose an efficient algorithm \emph{ASIM} for scalable polarized influence maximization which provides the best tradeoff between \emph{memory-consumption} and \emph{running-time} and is capable of handling real-world large scale networks on moderately sized machines. We argue that our algorithm can be efficiently parallelized since each MC simulation is independent of the other. Moreover, a single iteration of \emph{ASIM} takes $O(kd(m+n))$ (Details in Sec.~\ref{sec:alg}) which is faster than TIM thus effectively it can perform better on overall time\footnote{Considering the ease of availability of multiple processing units (cores) in a single machine when compared to large amount of memory (RAM). Details in Sec.~\ref{sec:exp}.} while keeping the memory footprint $150-200$ times smaller when compared to the latter.

{\bf Contributions: We can use the flow of information as shared by NegIC and OVM, specifically NegIC some of our proofs are on the same lines as NegIC. (DAG proofs)
}
}
\ignore{
Social networks have become pervasive owing to the exponential growth in their popularity. The scale at which these networks operate today is humongous -- \emph{Facebook}, \emph{Twitter} etc. have over billions of nodes and trillions of edges. This wide-spread reach paves the way for a host of applications with huge impact. The influence maximization problem with applications in \emph{viral marketing} is one such example.

The exponential growth in the popularity of social networks has paved the way for a host of applications that benefit largely from the data derived using these networks.  Under the realistic setting of a constrained budget, \emph{Viral Marketing} has been widely studied as the influence maximization problem in the literature. 

More formally, given a network $G(V,E)$; $|V|=n,\ |E|=m$, with edge weights ($p(e) \mid e \in E$) denoting the pair-wise influence probabilities, and a budget constraint $k$ the objective of influence maximization is to select a set $S$ of $k$ seed-nodes ($|S|=k$) with the ability to maximize the spread of information over this network.

Kempe et al. \cite{kempe} in their seminal work proved that finding an optimal solution for the influence maximization problem is NP-Hard and were the first to prove that a simple greedy algorithm can provide the best approximation guarantees in polynomial time. They incorporated the use of two fundamental diffusion models -- Independent Cascade (IC) and Linear Threshold (LT) for information propagation. However, the algorithm proposed by them had two sources of inefficiency. The first is that it took $O(kmn)$ time to produce a solution, while the second one is that it requires an additional factor of a large number of Monte Carlo (MC) simulations ($\approx10K$) to obtain the expected value of the \emph{spread}.

Considerable amount of work has been done to cater to the first aspect -- optimizing the running time of this greedy algorithm, with \emph{CELF++} \cite{celfPlus} being the most efficient of all, but there has not been much work in improving the second. More recently, Tang et al. \cite{tim} have come up with an algorithm (TIM)\footnote{For notation and details please refer \cite{tim}.} that runs in $O((k+l)(m+n)\log {n}/\epsilon^{2})$ expected time and produces a $(1-\frac{1}{e}-\epsilon)$-approximate solution, where $\epsilon$ is a constant, with probability as high as $1-{n^{-l}}$. While this is the fastest known algorithm for influence maximization it cannot be termed \emph{scalable} as it has a high memory footprint. The worst case space complexity of TIM is $O(n^2\log{{n \choose k}}/\epsilon^{2})$, which can be very high for small values of $\epsilon$. For example, the memory footprint of TIM can be as high as 100 GB for a graph with a million nodes and close to 3 million edges (Details in Sec.~\ref{sec:exp}). This huge requirement is tough to be honored by commodity hardware.

The rest of the paper is organized as follows. In Section~\ref{sec:problem}, we introduce a generic opinion-aware information flow model before the problem statement is presented. Section~\ref{sec:alg} describes our algorithm and its analysis. In Section~\ref{sec:exp}, experimental results are presented. Section~\ref{sec:related} highlights the related work before Section~\ref{sec:conc} concludes.
}

\section{Opinion Aware IM}
\label{sec:problem}

In this section, we first introduce the basic concepts of the IM problem and build upon them to describe the Opinion-cum-Interaction (OI) model for information diffusion in detail. The notations used in the rest of the paper are summarized in Table~\ref{tab:terminology}.

\subsection{Preliminaries}
\label{subsec:prelims}

The objective of the IM problem is to capture the dynamics of information diffusion for maximizing the spread of information in a network. To this end, we first define the notions of \emph{seed} and \emph{active} nodes in the context of fundamental (opinion-oblivious) information diffusion models, namely -- IC and LT. Next, we use these concepts to define the notion of \emph{spread} of information in a network.

\begin{definition}[Seed Node]
\label{def:seed}
	A node $v \in V$ that acts as the source of information diffusion in the graph $G(V,E)$ is called a seed node. The set of seed nodes is denoted by $S$.
\end{definition}

\begin{definition}[Active Node]
\label{def:active}
	A node $v \in V$ is deemed active if either (1) It is a seed node ($v \in S$) or (2) It receives information, under the dynamics of information diffusion models, from a previously active node $u \in V_{(a)}$. Once activated, the node $v$ is added to the set of active nodes $V_{(a)}$.
\end{definition}

Given a seed node $s \in S$ and a graph $G(V,E)$, an information diffusion model $\mathcal{I}$ defines a step-by-step process for information propagation. Having defined the notions of seed and active nodes, we now introduce the dynamics of the IC and LT models. For both the IC and LT models, the first step requires a seed node $s \in S$ to be activated and added to the set of active nodes $V_{(a)}$. Under the IC model, at any step $i$ each newly activated node $u \in V_{(a)}$ gets one independent attempt to activate each of its outgoing neighbours $v \in \Out(u)$ with a probability $p_{(u,v)}$. However, under the LT model, a node $v$ gets activated if the sum of weights $w_{(u,v)}$ of all the incoming edges $(u,v)$ originating from active nodes $\forall u \in \In(v)_{(a)}$\footnote{$X_{(a)}$, for any set $X$, denotes the set of active nodes in that set.} exceeds the activation threshold $\theta_v$ of v, i.e., $\sum_{u \in \In(v)_{(a)}} w(u,v) \geq \theta_v$. Once a node becomes active, it remains active in all the subsequent steps. This diffusion process runs for each seed-node $s \in S$ until no more activations are possible. Eventually, the cardinality of the set of active nodes $|V_{(a)}|$, barring the number of seeds $k=|S|$, constitute the spread of a given set of seed nodes. Formally,
\vspace{-0.3mm}
\begin{definition}[Spread]
\label{def:spread}
	Given an information diffusion model $\mathcal{I}$ (opinion-oblivious), the spread $\Gamma(S)$ of a set of seed nodes $S$ is defined as the number of nodes, barring the nodes in $S$, that get activated using these seed nodes. Mathematically, $\Gamma(S)=|V_{(a)}|-|S|$.
\end{definition}
\vspace{-0.3mm}

\begin{table}
\centering
\small
\scalebox{0.85}{
\begin{tabular}{|c|c|}\hline
\textbf{Item} & \textbf{Definition}\\\hline
\hline
$V$ & Set of vertices; $|V| = n$. \\\hline
$E$ & Edge set for the nodes in $V$; $|E|=m$.\\\hline
$\diameter$ & Diameter of the graph.\\\hline
$S$ & Set of seed nodes.\\\hline
$k=|S|$ & Number of seed nodes.\\\hline
$\Out(u)$ & Set of outgoing neighbours of $u$. \\\hline
$\In(v)$ & Set of incoming neighbours of $v$. \\\hline
$p_{(u,v)}$ & Influence probability of $u$ on $v$ for IC.\\\hline
$\theta_v$ & Activation threshold of $v$; $\theta_v \in [0,1]$.\\\hline
$o_v$ & Personal opinion of $v$; $o_v \in [-1,1]$.\\\hline
$w_{(u,v)}$ & Weight on the edge $u \rightarrow v$ for LT; $w_{(u,v)} \in [0,1]$.\\\hline
$\varphi_{(u,v)}$ & Interaction probability from $u$ to $v$.\\\hline
$\flow(S)$ & Spread obtained by set of seed nodes ($S$).\\\hline
$\sigma(S)$ & Expected value of spread by seeds in $S$; $\expectation (\flow(S))$.\\\hline
$V_{(a)}$ & The set of active nodes; $V_{(a)} \subseteq V$. \\\hline
$\lambda$ & Penalty parameter on negative opinion spread.\\\hline
$l$ & Maximum path length for score-assignment, $1\leq l\leq\diameter$.\\\hline
$\mathcal{L}(u\leadsto v)$ & Length of a given $u\leadsto v$ path.\\\hline
$\gamma_{v}(u)$ & Contribution of $v$ to $\sigma(u)$, using approximate scores (Sec.~\ref{subsec:score}).\\\hline
$\gamma^{*}_{v}(u)$ & Contribution of $v$ to $\sigma(u)$, using {{\kempegreedy}}\cite{kempe}.\\\hline
$\pathset_{uv}$ & The set of all $u$-$v$ paths; $\{u\leadsto v\}$\\\hline
$t_{(u,v)}$ & \#$u$-$v$ paths -- \#paths starting at $u$ and ending at $v$.\\\hline
$s^d_{(u,v)}$ & Contribution of $v$, via a $d$-length $u$-$v$ path, to $\sigma(u)$.\\\hline
$\Delta^l(u)$ & Score assigned to $u$, using all $u\leadsto v$ paths of length $\leq l$.\\\hline

\end{tabular}
}
\tabcaption{\textbf{Summary of the notations used.}}
\label{tab:terminology}
\end{table}

Given the above defined concepts, the IM problem aims at identifying a set of seed nodes capable of maximizing the \emph{expected spread} of information in a network under a fixed budget on the number of seed nodes. More formally, Given a graph $G=(V,E)$, a fundamental (opinion-oblivious) information diffusion model $\mathcal{I}$ (IC/LT) with specifics defined as above and a budget $k$, find a set of seed nodes, $S \subseteq V \mid k=|S|$, that maximizes the expected value of information spread $\sigma(S)=\expectation[\flow(S)]$ in this graph.

\subsection{Opinion-cum-Interaction (OI) Model}
\label{subsec:model}
The OI model for information diffusion serves as an extension over the IC and LT models to facilitate opinion-aware IM. The fundamental models are modified to include a second layer attributed to modelling the diffusion and change of \emph{opinion}(Def.~\ref{def:opinion}) in the network. As opposed to the IC/LT models, where a newly activated node (oblivious to its \emph{opinion}) is always considered to be contributing positively towards the information spread (Def.~\ref{def:spread}), the OI model considers the \emph{spread} (opinion-spread, Defs.~\ref{def:op_spread} and~\ref{def:eop_spread}) of information under an \emph{opinion-aware} scenario -- where the contribution of a newly activated node could as well be negative.

\ignore{
\subsubsection{Motivation}
\label{subsubsec:model_motivate}
We explained the need for opinion aware information propagation in social networks in Sec.~\ref{sec:intro}. An opinion consists of two key components: a \textit{target} and a \textit{sentiment}  on the target, where target can be any entity or aspect of the entity about which an opinion has been expressed, and sentiment is a positive, negative, or neutral sentiment, or a value (say [-1,+1]) expressing the strength/intensity of the sentiment $s$~\cite{opinion}.

Opinion belongs to an individual and it is their personal subjective preference. For example, Alice is positive (or +0.85) about Apple iPhone 5, but negative (-0.35) about Samsung Galaxy note. Hence if a viral marketer is attempting to spread awareness about an Apple product then opinion value based on similar Apple products (0.85 in our case) should be attached to the node corresponding to Alice. For Samsung products the value attached would be 0.35. It is not a important for this work that how one arrives at these numbers but we assume these are available apriori. 

Now consider Bob who is Alice's connection in a social network. Depending on how much influence Alice has on Bob, Bob will receive and possibly propagate information shared by Alice. This is the commonly used influence in prior art. However Alice can positively or negatively influence Bob depending on relationship they share. Hypothetically based on past interactions, it is observed that Bob and Alice rarely have exhibited similar preferences with respect to topics they discussed. Hence, if Alice endorses a new Apple product it is possible that Bob may actually get negatively activated even though he had a neutral opinion (0) to Apple products. Hence each edge should have a probability depicting if the node at one end of the edge  has a positive, negative or neutral influence to the node at the other end.

\ignore{
1) In a realistic setting where opinions play a significant part\\

2) opinion is personal preference towards newly occuring event and other blah blah...\\

3) Opinion possess two things: sentiment and value.\\
    Describe learning from the past using examples on similar topics so that it is trivial to understand.\\
    This is an estimate of how a user behaves to a new event using his behaviour on previous similar events.\\

4) past interactions on several unrelated facets describe how the two users behave overall..\\
    Describe cases using examples so that it is easy to understand.\\
    u to v might not be same as v to u. Explain in the example.\\
    This probability is an estimate of how any possible user-pair behave collectively.\\
}
}

The OI model can be easily tuned, with minor modifications, to work with both IC and the LT models. The specifics of OI, thus, change depending on the underlying fundamental information diffusion model. Before moving to the model specifics, it is useful to introduce the notions of \emph{opinion} ($o$) of a node and \emph{interaction} ($\varphi$) between two nodes. Usually, the objective of IM is to maximize the spread of information about a specific content viz. product, topic, person, event etc. Having said that, the opinion of a node is always defined in the context of a particular content to denote the personal preference of this node towards that content. Formally,

\vspace{-0.3mm}
\begin{definition}[Opinion]
\label{def:opinion}
	The opinion of a node $v \in V$ consists of two sub-components -- (1) an orientation $($\{negative,\ neutral,\ positive\}$)$ towards a content and (2) strength $([0,1])$ that quantifies its preference towards that content. The opinion of a node is, thus, denoted as $o_v \in [-1,1]$.
\end{definition}
\vspace{-0.3mm}

As opposed to \emph{opinion} which is associated with properties of a node alone, \emph{interaction} aims at capturing the effect of the dyadic relationship between any two nodes to better model the process of information diffusion. Formally,

\ignore{
    The interaction (directed) between two nodes $u,v \in V$, denoted as $\varphi_{(u,v)} \in [0,1]$, is defined as the probability with which the contribution of $u$ to the polarity of the opinion associated with $v$ is the same as its own polarity.
	The interaction probability (directed) between two nodes $u,v \in V$, denoted as $\varphi_{(u,v)} \in [0,1]$, is defined as a fraction of the times an information content shared by the node $u$ gets accepted by the node $v$ with the same orientation as that of $u$.
}

\vspace{-0.3mm}
\begin{definition}[Interaction]
\label{def:interaction}
	%
	The interaction probability (directed) between two nodes $u,v \in V$, denoted as $\varphi_{(u,v)} \in [0,1]$, is defined as a fraction of the times an information content shared by $u$ gets accepted by $v$ with the same orientation as that of $u$.
\end{definition}
\vspace{-0.3mm}

For example, if two nodes $u$ and $v$ agree with each other $1$ out of $5$ times in the past, then $\varphi_{(u,v)}=\varphi_{(v,u)}=1/5=0.2$.
As discussed in Sec.~\ref{sec:intro}, the opinion of a node towards a new content can be estimated from the previously held opinions by this node towards similar contents in the past. As opposed to this, the interaction probabilities (directed, $\varphi_{(u,v)}$ might not be equal to $\varphi_{(v,u)}$) between two nodes is rather generic and can be estimated\footnote{We have discussed one possible way to estimate opinion and interaction. These notions are rather generic and other approaches can be employed for their estimation as well.} by accounting for all possible interactions between these nodes in the past. Next, we describe the specifics of the OI model.


We model the underlying network as a directed graph $G=(V,E)$ with parameters $p,w,\theta,o,\varphi$, where $|V| = n$ and $|E| = m$ denote the set of vertices and edges respectively. The parameters have similar meanings as discussed earlier in this section. Oblivious to the underlying information diffusion model (IC or LT), each node $v \in V$ possesses an opinion towards a new content denoted by $o_v \in [-1,1]$. Specifically the sign of $o_v$ indicates the \emph{orientation} and the value denotes the \emph{strength}. More formally, $o_v > 0$, $o_v = 0$, and $o_v < 0$ indicate positive, neutral and negative orientation respectively. In addition, $o_v=-1$ denotes a strong negative opinion while $o_v=+1$ denotes a strong positive opinion. Moreover, each edge $(u,v) \in E$ possess $\varphi_{(u,v)} \in [0,1]$, that refers to the probability of a node $v$ acquiring the same opinion as that of $u$, considering only the contribution from node $u$. $\varphi_{(u,v)}=0$ indicates that $v$ never agrees with $u$, while $\varphi_{(v,u)}=0.5$ indicates that $u$ agrees with $v$ half of the time. The dynamics of the model are as follows.

Under the IC model, the first step involves activation of each seed node $s$, in the selected seed set $S$ ($|S|$ = $k$), with its opinion\footnote{The final opinion of the seeds is same as their initial personal opinion, $o_s'=o_s$.} $o_s, \forall s \in S$, while all other nodes remain inactive. At any step $i$, if a node $u$ (which was activated at step $i-1$) activates another node $v$ then the final opinion ($o_v'$) of $v$ is dependent upon both, its initial opinion $o_v$ and the final opinion ($o_u'$) of the node $u$. More formally, each node $u$, contributes $o_u'$ with a probability $\varphi_{(u,v)}$ and  $- o_u'$ with a probability $1 - \varphi_{(u,v)}$. The final opinion  of the node $v$ is $o'_v = \frac{o_v + (-1)^\alpha o_u'}{2}$, where $\alpha=0$ with a probability of $\varphi_{(u,v)}$ and $\alpha=1$ with a probability of $1-\varphi_{(u,v)}$.

Under the LT model, the first step witnesses the same set of initializations as done for OI under the IC model. At any step $i$, if a node $v$ gets activated (by the set of nodes $\In(v)_{(a)}$, activated at previous steps) then the final opinion $o_v'$ of $v$ is dependent upon both, its initial opinion $o_v$ and the final opinion $o_u'$ for all the nodes $u \in \In(v)_{(a)}$. The contributions made by each node to the opinion of the node $v$ is the same as discussed above for OI under the IC model. The final opinion of the node $v$ is $o'_v = (o_v + \frac{1}{|\In(v)_{(a)}|}\sum\limits_{u \in \In(v)_{(a)}}(-1)^{\alpha_{(u,v)}}o_u')/2$, where $\alpha_{(u,v)}=0$ with a probability of $\varphi_{(u,v)}$ and $\alpha_{(u,v)}=1$ with a probability of $1-\varphi_{(u,v)}$.

Once a node becomes active, it remains active with the same effective opinion in all the subsequent steps. The information propagation process runs until no more activations are possible. With these extensions, we aim to solve the IM problem under settings that closely mirror real-world scenarios. Before moving ahead, we revisit Example~\ref{example:oi1} and analyze the difference between \emph{spread} and \emph{opinion-spread} with stating the importance of the latter in real-world scenarios using Example~\ref{example:oi2}.

\begin{example}
\label{example:oi2}
	Let us consider the construction described in Example~\ref{example:oi1} with all the model-specific parameters present in Figure~\ref{fig:oi_example}. For the sake of brevity, the analysis of spread in this scenario assumes (1) the IC model and (2) the OI model using IC at the first-layer. The influence probability ($p$, of an edge) possesses the same meaning as for the fundamental models. Since $p_{AD}=0.8$, the expected spread of $A$ under the IC model $(\sigma(A))$ is $0.8$. Similarly for other nodes: $\sigma(B)=0.3628,\ \sigma(C)=0.9$ and $\sigma(D)=0$. Thus, $C$ is selected as the seed node. Since $D$ agrees with $A$ with a probability of $\varphi_{AD}$ and disagrees otherwise, the expected opinion-spread of $A$ under the OI model is expressed as $\sigma^o(A)=p_{AD}\big(\varphi_{AD}(o_D+o_A)/2 + (1-\varphi_{AD})(o_D-o_A)/2\big)=0.8\big(0.9(-0.3+0.8)/2+0.1(-0.3-0.8)/2\big)=0.136$. Similarly for other nodes: $\sigma^o(B)=-0.022564,\ \sigma^o(C)=-0.351$ and $\sigma^o(D)=0$. Thus, $A$ is selected as the seed node. An interesting observation is that, the seed identified using the IC model, i.e. $C$, would have resulted in the worst-possible opinion-spread. This clearly shows the importance of the notion of opinion-spread under the OI model, over opinion-oblivious spread using the IC model, in real-world scenarios where opinion of nodes and interaction between nodes play an important part in governing the process of information propagation.
	%
\end{example}

Next, we formally state the \emph{MEO} problem followed by its tractability analysis.

\subsection{Problem Formulation}
\label{subsec:prob_formulation}

Using the concepts described in the previous sections, we formally define the notion of spread under the opinion-aware (OI) model, called \emph{Opinion Spread}. While, under the opinion oblivious models spread can simply be stated as the total number of nodes that get activated with a given set of seed nodes, a more involved notion of spread is required under the \emph{opinion-aware} settings.

\begin{definition}[Opinion spread]
\label{def:op_spread}
Opinion spread of a seed set $S$, denoted by $\flow^{o}(S)$, is defined as the sum of final opinions of  the nodes in the activated set $V_{(a)}\setminus S$, when $S$ is the chosen seed set, i.e., $\flow^{o}(S) = \sum\limits_{v\in V_{(a)}\setminus S}o_v'$.
\end{definition}

\begin{definition}[Effective opinion spread]
\label{def:eop_spread}
    Effective opinion spread of  a seed set $S$, denoted by $\flow^{o}_{\lambda}(S)$, is defined as the weighted difference between the opinion spread of nodes with positive orientation and the  opinion spread of nodes with negative orientation, i.e., $\flow^{o}_{\lambda}(S) = \big(\sum\limits_{o_v'>0}o_v' - \lambda \sum\limits_{o_v'<0}|o_v'|\big); \forall v\in V_{(a)}\setminus S$, in the set of activated nodes $V_{(a)}\setminus S$, where $\lambda$ is the penalty on negative opinion spread.
\end{definition}

We call the problem of maximizing the effective opinion spread under the OI model as \emph{Maximizing the Effective Opinion of the Influenced Users (MEO)} problem, which is defined formally as follows.

\begin{problem}
	Given a graph $G=(V,E)$, the opinion-aware (OI) model with specifics defined as in Sec.~\ref{subsec:model} and a budget $k$, find a set of seed nodes, $S\subseteq V \mid k=|S|$, that maximizes the expected value of the effective opinion spread $\sigma^o_{\lambda}(S)=\expectation[\flow^{o}_{\lambda}(S)]$.
\end{problem}

\subsection{Properties of MEO under the OI model}
\label{subsec:model_analysis}

\subsubsection{NP-hardness}

\begin{lemma}
\label{thm:nphard}
The MEO problem is NP-hard. 
\end{lemma}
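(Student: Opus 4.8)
The plan is to prove NP-hardness by exhibiting the opinion-oblivious IM problem as a special case of MEO, and then invoking the known NP-hardness of IM under the IC model due to Kempe et al.~\cite{kempe} (which itself follows from a reduction from \textsc{Set Cover}). First I would fix the decision versions of both problems in the standard way: given a threshold $t$, decide whether some seed set $S$ with $|S|=k$ achieves $\sigma(S)\ge t$ (for IM) or $\sigma^o_\lambda(S)\ge t$ (for MEO). The reduction then takes an arbitrary IM instance $(G,k,t)$ and produces an MEO instance on the \emph{same} graph $G$ with the same $k$ and $t$, choosing $\lambda=1$ (any fixed value works), setting $o_v=1$ for every $v\in V$, and setting $\varphi_{(u,v)}=1$ for every edge $(u,v)\in E$. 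This transformation is clearly computable in linear time.

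The crux is to verify that, under these parameters, the effective opinion spread of every seed set coincides with its ordinary spread. I would argue by induction on the diffusion step at which a node is activated that every activated node $v$ has final opinion $o'_v=1$. For the base case, each seed $s$ keeps $o'_s=o_s=1$. For the inductive step under IC at the first layer, since $\varphi_{(u,v)}=1$ the random sign exponent $\alpha$ is deterministically $0$, so $o'_v=\frac{o_v+o'_u}{2}=\frac{1+1}{2}=1$; the LT case is identical, because once every contributing $u$ already has $o'_u=1$ we get $o'_v=\bigl(o_v+\frac{1}{|\In(v)_{(a)}|}\sum_{u\in\In(v)_{(a)}}o'_u\bigr)/2=\frac{1+1}{2}=1$. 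Consequently there are no negatively-oriented activated nodes, the penalty term vanishes, and $\flow^o_\lambda(S)=\sum_{v\in V_{(a)}\setminus S}o'_v=|V_{(a)}\setminus S|=\flow(S)$ on every realization of the diffusion. Taking expectations over the (identically distributed) IC/LT activation randomness gives $\sigma^o_\lambda(S)=\sigma(S)$ for all $S$, so the constructed MEO instance answers ``yes'' exactly when the IM instance does.

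The main obstacle, though minor, is establishing this opinion-collapse cleanly: one must check that the only randomness distinguishing MEO from IM lives in the $\alpha$-signs, and that fixing $\varphi\equiv 1$ removes it deterministically, so that the activation distribution---and hence the expected spread---is literally unchanged between the two instances. Given that, the polynomial-time reduction shows MEO is at least as hard as IM, which completes the proof. (An alternative, fully self-contained route would reduce directly from \textsc{Set Cover} by mimicking Kempe et al.'s bipartite gadget with all opinions and interactions set to $1$; I would prefer the special-case argument above, since it reuses the established result with essentially no extra work.)
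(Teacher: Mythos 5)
Your proposal is correct and follows essentially the same route as the paper: the paper's proof also specializes MEO to opinion-oblivious IM by setting $o_v=1$ for all $v\in V$ and $\varphi_{(u,v)}=1$ for all $(u,v)\in E$, then invokes the NP-hardness of IM from Kempe et al. You merely make explicit (via the induction showing all final opinions collapse to $1$ and hence $\sigma^o_\lambda(S)=\sigma(S)$) the verification the paper leaves implicit in its one-line "generalization of an NP-hard problem" argument.
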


\begin{proof}
The IM problem is reducible to an instance of the MEO problem under the OI model, when $o_v=1$, $\forall v \in V$ and $\varphi_{(u,v)} = 1,$ $\forall (u,v) \in E$. It is known that any generalization of a NP-hard problem is also NP-hard. Since, the IM problem is NP-hard \cite{kempe}, the MEO problem is NP-hard as well.
\hfill{}
\end{proof}

\subsubsection{Submodularity}

A function $f(\cdot)$ is submodular if the marginal gain from adding an element to a set $S$ is at least as high as the marginal gain from
adding it to a superset of $S$. Mathematically, 
\begin{align*}
	f(S \cup \{x\}) - f(S) \geq f(T \cup \{x\}) - f(T)
\end{align*}
for all elements $x$ and all pairs of sets $(S,T)$, where $S \subseteq T$. For submodular and monotone functions, the greedy algorithm of iteratively adding the
element with the maximum marginal gain approximates the optimal solution within a factor of ($1 - 1/e$) \cite{submodular}.

\begin{lemma}
\label{thm:non_monotone_submodular}
The opinion spread as a function $\Gamma(\cdot)$ in a graph $G$, is neither monotone nor submodular.
\end{lemma}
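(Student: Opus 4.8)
The plan is to refute both properties by exhibiting explicit small instances of the OI model on which they fail; since the lemma only asserts the \emph{absence} of monotonicity and of submodularity, one counterexample for each direction suffices, and both can be built on graphs with a handful of nodes and deterministic edges ($p_{(u,v)}=1$) so that the only remaining randomness is the agreement coin governed by $\varphi$.

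For non-monotonicity the key observation is that activating a node whose personal opinion is negative can only \emph{lower} the opinion-spread. In fact Example~\ref{example:oi2} already does the work: there $\sigma^{o}(\{C\})=-0.351<0=\sigma^{o}(\emptyset)$, yet $\emptyset\subseteq\{C\}$, so enlarging the seed set strictly decreases the (expected) opinion-spread, contradicting monotonicity. A minimal two-node witness makes this transparent: take $u\to v$ with $o_v<0$, $o_u=0$ and $p_{(u,v)}=\varphi_{(u,v)}=1$; then $v$ is activated with $o'_v=o_v/2<0$, so $\sigma^{o}(\{u\})=o_v/2<0=\sigma^{o}(\emptyset)$.

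The harder half is non-submodularity, and I expect this to be the main obstacle, because the violation cannot come from the \emph{activation} pattern in isolation---under IC (and, through random thresholds, under LT) the set of activated nodes is itself submodular~\cite{kempe}---so it must be engineered through the opinion layer. The mechanism I would exploit is \emph{opinion repair}: by the averaging rule the final opinion $o'_v$ of a node jointly reached by several seeds is a \emph{blend} of their opinions, so a positively-oriented seed $x$ can partially cancel the damage inflicted by a negatively-oriented seed at a shared downstream node. Hence the marginal value of inserting $x$ is small when the current set is benign, but large once the set already contains a negative seed whose effect $x$ can offset. Concretely I would take a pivotal node $t$ with two in-neighbours---a negative seed $a$ (with $o_a<0$) and the candidate $x$ (with $o_x>0$)---wired so that $t$ is activated in both the $\{a\}$ and $\{a,x\}$ configurations, and choose the weights, opinions and interaction probabilities so that $o'_t$ is strongly negative when only $a$ is present but is pushed to neutral-or-positive once $x$ is added (e.g. $o_t=1$, $o'_a=-1$, $o'_x=1$ under LT gives $o'_t=\tfrac12$). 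Writing $S\subseteq T$ with $a\in T\setminus S$ and $x\notin T$, the blend then yields
\begin{align*}
\sigma^{o}(T\cup\{x\})-\sigma^{o}(T) > \sigma^{o}(S\cup\{x\})-\sigma^{o}(S),
\end{align*}
which is exactly the failure of diminishing returns; the same effect is only amplified by the effective-opinion objective, whose map $x\mapsto\max(x,0)-\lambda\max(-x,0)$ is convex for $\lambda>1$ and so reinforces the violation for $\sigma^{o}_{\lambda}$. The delicate points---and the reason this is the crux---are handling the randomness correctly (the agreement coins of $\varphi$, and, under the random-threshold convention, the thresholds $\theta_t$) so that the inequality survives after taking expectations, and tuning the parameters so that $x$ genuinely flips the sign or magnitude of $o'_t$ rather than being washed out by the averaging.
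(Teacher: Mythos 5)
Your proposal is correct in substance and takes a genuinely different route from the paper. The paper handles both properties with a single bipartite gadget: seeds $x_i$ with opinion $+1$ each feed two \emph{private} neutral children with $p=1$, all interaction probabilities $\varphi=1$ except on the last seed's two edges, where $\varphi=0$; adding seeds then makes the spread run $1\rightarrow 0\rightarrow 1$, which cleanly exhibits non-monotonicity. However, because each seed's children are disjoint, every node's marginal gain on that instance is \emph{constant} (the function is additive, hence modular), so the paper's non-submodularity claim implicitly compares marginal gains of \emph{different} elements rather than the same element added to nested sets. Your structural observation --- that under IC/LT activation alone the spread of disjointly acting seeds cannot violate submodularity, so any violation must be engineered through the opinion-blending layer at a jointly influenced node --- is therefore not merely a different route but arguably a tighter one, and your ``opinion repair'' mechanism does yield a strict counterexample once one finishing touch is added: as sketched, if $x$ alone activates $t$, its marginal gain on the small set is $(o_t+o'_x)/2=1$ in your numbers, which \emph{exceeds} the repair gain $(o'_x-o'_a)/4=1/2$ on $\{a\}$, so submodularity would not be violated there. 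Instead take $S=\emptyset$, $T=\{a\}$, and LT weights $w_{(x,t)}<\theta_t\le w_{(a,t)}$ (e.g.\ $w_{(x,t)}=0.4$, $w_{(a,t)}=0.6$, all $\varphi=1$, $o_a=-1$, $o_x=1$): then $x$ alone activates nothing (marginal gain $0$), while adding $x$ to $\{a\}$ shifts $o'_t$ from $(o_t+o'_a)/2$ to $\big(o_t+(o'_a+o'_x)/2\big)/2$, a gain of $(o'_x-o'_a)/4>0$; the inequality also survives uniform random thresholds, where the gains become $0.2\,o_t+0.2$ versus $0.2\,o_t+0.3$. Your two-node witness for non-monotonicity is fine and matches the paper's conclusion via a smaller instance. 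One peripheral slip: the map $x\mapsto\max(x,0)-\lambda\max(-x,0)$ is convex iff $\lambda\le 1$ (for $\lambda>1$ its slope drops from $\lambda$ to $1$ at the origin, so it is concave), but nothing in your argument depends on this since the lemma concerns the plain opinion spread.
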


\begin{proof}
Let us consider a bipartite graph $G(V,E)$ (Figure~\ref{fig:monotone}) containing two sets of nodes $X$ and $Y=V\setminus X$. The set $X$ contains $n_x=|X|$ nodes while the set $Y$ contains $n_y=|Y|, n_y \geq 2n_x$ nodes. Moreover, let us assume that the opinions of the nodes $\forall x_i \in X$ to be $o_{x_i}=+1$, while that of nodes $\forall y_j \in Y$ to be $o_{y_j}=0$. There exist directed edges from each node $x_i \in X$ to two consecutive nodes $y_{2i-1},y_{2i} \in Y$. Further, the influence probabilities are initialized as $p_{(u,v)}=1, \forall (u,v) \in E$ and the interaction probabilities $\varphi$ associated with all but the last two edges is $1$, which attain a value of $0$. Mathematically, $\varphi_{x_i,y_j}=1 \mid 1\leq i\leq n_x-1, j=2i-1$ or $j=2i$ while $\varphi_{x_i,y_j}=0 \mid i=n_x, j=2i-1$ or $j=2i$. An analysis of the properties of the spread function $\Gamma(\cdot)$, assuming the above construction, is presented next.

Consider a scenario where the set of seeds $S$ is initialized with a single node $x_i \mid 1 \leq i \leq n_x-1$. With this seed, both of the nodes $y_{2i-1}$ and $y_{2i}$ will get activated with final opinion $+1/2$. Hence, the effective spread $\Gamma^o(S) = 2\times(1/2) = +1$. Now if the node $x_{|X|}$ is added to the set $S$, the nodes $y_{2|X|-1}$ and $y_{2|X|}$ will also get activated. The final opinion of these newly activated nodes will be $-1/2$ each. The effective spread of the updated seed set is $\Gamma^o(S) = 1 - 1 = 0$. On adding another node $x_j \mid j\neq i$ and $1\leq j\leq n_x-1$ to the set $S$, the nodes $y_{2j-1}$ and $y_{2j}$ will become active with a final opinion of $1/2$ each. Thus, the effective spread is now $\Gamma^o(S) = 0+1 = 1$. In the above case it can be seen that the effective spread varied from $1\rightarrow0\rightarrow1$ on subsequent additions of nodes to the seed set. This clearly shows that the opinion spread function $\flow^o(S)$ is neither monotone nor submodular.
\hfill{}
\end{proof}	

\begin{figure}[t]
	\centering
	\subfloat[Submodularity]
	{
		\scalebox{0.39}{
		\begin{tikzpicture}[node distance=15mm,
		round/.style={fill=green!50!black!20,draw=green!50!black,text width=10mm,align=center,circle,thick}]
		\centering
		\node at (0,2.5) {\huge $\mathbf{o_s = 0}$};
		\node at (3.2,2.5) {\huge$\mathbf{o_t = 1}$};
		\node at (0,-7.5) {\huge Layer 1};
		\node at (3.3,-7.5) {\huge Layer 2};

		\node[round, scale=0.9, label={\huge $\mathbf{s_1}$}] (u1) {};
		\node[round, scale=0.9, label={\huge $\mathbf{s_{n_s-1}}$}] (u2) [below = 2.4cm of u1] {};
		\node[round, scale=0.9, label={\huge $\mathbf{s_{n_s}}$}] (um) [below =0.7cm of u2] {};

		\node[round, scale=0.5, minimum size=5pt, label={\LARGE $\mathbf{t_1}$}] (v1) [above right =.13cm and 2.7cm of u1 ] {};
		\node[round, scale=0.5, label={\LARGE $\mathbf{t_2}$}] (v2) [below = .6cm of v1] {};
		\node[round, scale=0.5, label={\LARGE $\mathbf{t_{2n_s-3}}$}] (v3) [below = 1.8cm of v2] {};
		\node[round, scale=0.5, label={\LARGE $\mathbf{t_{2n_s-2}}$}] (v4) [below = .5cm of v3] {};
		\node[round, scale=0.5, label={\LARGE $\mathbf{t_{2n_s-1}}$}] (v5) [below = 0.5cm of v4] {};
		\node[round,scale = 0.5, minimum size=5pt, label={\LARGE $\mathbf{t_{2n_s}}$}] (v6) [below =.5cm of v5] {};

		\draw[thick] (0,-2.4) ellipse (1.1cm and 4.4cm);
		\draw[thick] (3.3,-2.4) ellipse (1.1cm and 4.4cm);

		\draw[loosely dotted,thick, line width=2.5pt](0,-1.2) -- (0,-1.8);
		\draw[loosely dotted,thick, line width=2.5pt](3.3,-1.2) -- (3.3,-1.8);

		\path [->, -triangle 45,line width=1.4pt] (u1) edge (v1)
		edge (v2)
		(u2) edge (v3)
		(u2) edge (v4)
		(um) edge (v5)
		(um) edge (v6)
		; 
		\end{tikzpicture}
		}
		\label{fig:monotone}
	}
	\subfloat[Tractability]
	{
		\scalebox{0.35}{
			\begin{tikzpicture}[node distance=15mm,
			round/.style={fill=green!50!black!20,draw=green!50!black,minimum size=5mm,text width=10mm,align=center,circle,thick, line width=1.6pt}]
			\centering
			\node at (0,2) {\huge $\mathbf{o_x = 0}$};
			\node at (3.1,2) {\huge$\mathbf{o_y = \frac{1}{n}}$};
			\node at (6.2,2) {\huge$\mathbf{o_z = -\frac{1}{2n}}$};
			\node at (9.7,-4.5) {\huge$\mathbf{o_s = \frac{1}{n}-1}$};
			\node at (9.7,-3.5) {\huge \textbf{sink}};

			\node at (0,-6) {\huge\textbf{ Layer 1}};
			\node at (3.3,-6) {\huge \textbf{Layer 2}};
			\node at (6.6,-6) {\huge \textbf{Layer 3}};

			\node[round] (u1) {\huge $\mathbf{x_1}$};
			\node[round] (u2) [below of=u1] {\huge $\mathbf{x_2}$};
			\node[round] (um) [below =1cm of u2] {\huge $\mathbf{x_m}$};

			\node[round] (v1) [right =2cm of u1] {\huge $\mathbf{y_1}$};
			\node[round] (v2) [below of=v1] {\huge$\mathbf{y_2}$};
			\node[round] (vm) [below =1cm of v2] {\huge$\mathbf{y_n}$};

			\node[round] (x1) [right =2cm of v1] {\huge$\mathbf{z_1}$};
			\node[round] (x2) [below of=x1] {\huge$\mathbf{z_2}$};
			\node[round] (xm) [below =1cm of x2] {\huge$\mathbf{z_{m+n-2}}$};

			\node[round] (s) [rectangle, below right=.3cm and 2cm of x2] {\huge $\mathbf{s}$};

			\draw[thick] (0,-2) ellipse (1cm and 3.5cm);
			\draw[thick] (3.3 ,-2) ellipse (1cm and 3.5cm);
			\draw[thick] (6.6,-2.) ellipse (1cm and 3.5cm);

			\draw[loosely dotted,thick, line width = 1.6pt](0,-2.5) -- (0,-3);
			\draw[loosely dotted,thick, line width = 1.6pt](3.3,-2.5) -- (3.3,-3);
			\draw[loosely dotted,thick, line width = 1.6pt](6.6,-2.5) -- (6.6,-3);

			\path [->, ,-triangle 45,line width=1.6pt] (u1) edge (v1)
			edge (v2)
			edge (vm) 
			(u2) edge (v1)
			edge (vm)
			(um) edge (v1)
			edge (vm)
			(v1)edge (x1)
			edge (x2)
			edge (xm)
			(v2)edge (x1)
			edge (x2)
			edge (xm)
			(vm)edge (x1)
			edge (x2)
			edge (xm)
			(x1)edge (s)
			(x2)edge (s)
			(xm)edge (s)
			; 
			\end{tikzpicture}
		}
		\label{fig:submod}
	}
	\figcaption{Reductions for the submodularity and tractability analysis of MEO.}
	\label{fig:reductions_meo}
\end{figure}
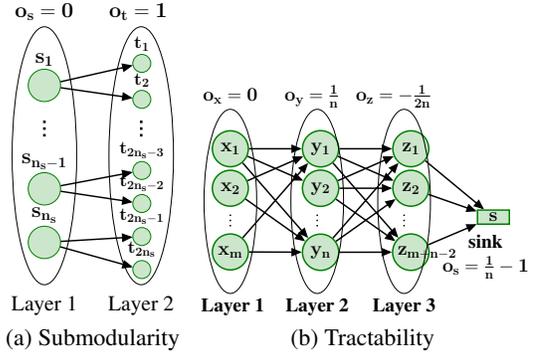

Using Lemma~\ref{thm:non_monotone_submodular}, it is evident that a greedy algorithm cannot produce a $1-1/e$ approximate solution to the \emph{MEO} problem. Next, we further prove that $\nexists$ any algorithm capable of approximating \emph{MEO} within a constant ratio.

\begin{theorem}
\label{thm:non_approximate}
Approximating MEO within a constant ratio is not possible unless $P=NP$.
\end{theorem}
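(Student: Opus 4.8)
The plan is to establish inapproximability via a gap-introducing reduction from a known NP-hard decision problem, exploiting the fact (already shown in Lemma~\ref{thm:non_monotone_submodular}) that the opinion spread can take the value zero and can be made negative. The key leverage is this: if I can construct instances of MEO where deciding whether the optimal effective opinion spread is strictly positive versus at most zero is NP-hard, then no constant-factor approximation can exist. This is because any multiplicative $c$-approximation algorithm, when run on an instance whose optimum is some positive value $\sigma^o_\lambda > 0$, must return a solution of value at least $\sigma^o_\lambda/c > 0$; conversely on an instance whose optimum is $\le 0$ it can return at most $0$. Thus the approximation algorithm would distinguish the two cases and solve the underlying NP-hard problem, forcing $P=NP$.

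**First I would** fix the reduction source. A natural candidate is \textsc{Set Cover} (or equivalently \textsc{Vertex Cover}), the same engine Kempe et al.\ used for IM hardness, reformulated so that the target value is exactly zero in the ``hard'' direction. Guided by the tractability-reduction figure (Figure~\ref{fig:submod}), I would build a layered graph with seed-candidate nodes $x_i$ in Layer~1, intermediate nodes $y_j$ in Layer~2 (encoding set-cover elements), negative-opinion nodes $z_\ell$ in Layer~3, and a sink $s$ whose opinion $o_s = \frac{1}{n}-1$ is strongly negative. The opinions are tuned (as the figure suggests: $o_x=0$, $o_y=\frac1n$, $o_z=-\frac{1}{2n}$) so that the \emph{small positive} contributions of the reachable $y$-nodes are exactly cancelled or overwhelmed by the negative contribution routed to the sink $s$ unless a seed set achieving full coverage with $k$ seeds exists. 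The arithmetic would be arranged so that: if a set cover of size $k$ exists, the optimal effective opinion spread is a fixed positive constant; otherwise every size-$k$ seed set yields effective opinion spread at most $0$.

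**The hard part will be** calibrating the opinion values and the interaction probabilities $\varphi$ so that the gap is \emph{exactly} at the sign boundary (positive vs.\ nonpositive) regardless of which seeds are chosen, since the final opinion $o'_v$ of each activated node is itself a probabilistic average depending on the path of activation and the $\varphi$ values along it. I would set $\varphi_{(u,v)}=1$ on all ``information'' edges so that opinions propagate deterministically with the averaging rule $o'_v = (o_v + o'_u)/2$, which removes the randomness and makes the expected effective opinion spread a clean deterministic function of the activated set; this keeps the analysis tractable and mirrors the $\varphi=1$ simplifications used in Lemma~\ref{thm:nphard} and Lemma~\ref{thm:non_monotone_submodular}. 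The delicate counting is ensuring that the positive mass from Layer~2 and the fixed negative mass delivered to the sink $s$ are balanced so that incomplete coverage strictly fails to exceed zero while complete coverage strictly exceeds it.

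**Finally I would** close the argument by the gap amplification observation above: since the optimum value is bounded away from zero on one side and is at most zero on the other, and since a hypothetical constant-ratio approximation preserves strict positivity, such an approximation would decide \textsc{Set Cover} in polynomial time, contradicting $P \neq NP$. I expect the main obstacle to be verifying the boundary calculation rigorously across all seed selections rather than any conceptual difficulty, so the bulk of the written proof will be the instance construction and the case analysis of the two coverage regimes.
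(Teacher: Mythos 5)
Your proposal is correct and follows essentially the same route as the paper's own proof: a gap reduction from \textsc{Set Cover} using the layered construction of Figure~\ref{fig:submod} with opinions $o_x=0$, $o_y=\tfrac{1}{n}$, $o_z=-\tfrac{1}{2n}$, $o_s=\tfrac{1}{n}-1$ and $p=\varphi=1$, calibrated so the optimum is $\tfrac{1}{2n}>0$ exactly when a size-$k$ cover exists and at most $0$ otherwise, whence any constant-ratio approximation would decide \textsc{Set Cover}. The only remaining work is the boundary arithmetic you flag, which the paper carries out exactly as you anticipate.
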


\begin{proof}
We show that the classical set cover problem can be decided in polynomial time if an approximation, with a constant ratio, for MEO exists in polynomial time. To this end, we reduce the set cover problem to an instance of MEO. Given a set of elements $Q = \{ q_1,\ldots,q_n\}$ and a collection of subsets $R = \{R_1,\ldots,R_m\}$ where $R_i \subseteq Q, \forall i \in \{1,\ldots,m\}$, the set-cover decision problem returns true if $\exists C\subseteq R; |C|=k$ and $\cup C_j  = Q,\ \forall C_j \in C$. 

Now given the set-cover problem, we construct a graph as shown in Figure~\ref{fig:submod} to obtain an instance of MEO by adding three layers of nodes in addition to a sink node. For each subset $R_i \in R$, a node $x_i$ is added in the first layer of the graph with opinion $o_{x_i} = 0$. In the second  layer, for each element $q_i\in Q$, we add a node $y_i$ with opinion $o_{y_i}=\frac{1}{n}$. We add a third layer containing $m + n -2$ nodes denoted by $z_i$ with opinion $o_{z_i} =-\frac{1}{2n} $. Along with this, we add a sink node $s$ with opinion $o_s = -1+\frac{1}{n}$. Now, a directed edge  $(x_i,y_j)$ is  added iff $q_i \in R_j$. For each node $y_i$ in the second layer, edges $(y_i,z_j), \forall j \in \{1,2,\ldots,m+n-2\}$ are added in the graph. To finalize this construction, we add the edge $(z_i,s), \forall i \in \{1,2,\ldots,m+n-2 \}$ . 

For the IC model, all the edges $(u,v) \in E$ are assigned $p_{(u,v)} = 1$ and $ \varphi_{(u,v)} = 1$. Moreover, the LT model uses the same activation threshold for each node $\theta_u = 1$. It can be seen that for both these models, the seeds should be chosen from the first layer (any of the $x_i$'s), because they will activate $y_j$'s which in turn will activate any of the $z_i$'s and thus, $s$ would always be activated.

We instantiate MEO with $\lambda = 1$. Without loss of generality, assuming $\exists$ a set-cover of size $k$, $C = \{x_1,x_2,\ldots,x_k \}$, then choosing these nodes as seeds ensures the maximum spread. The final opinion of $y_i$, $o_{y_i}' = (0+\frac{1}{n})/2=\frac{1}{2n}$. Similarly, the final opinion of $z_i$, $o_{z_i}' = \frac{1}{2n}-\frac{1}{2n}=0$ and the final opinion of $s$, $o_{s}'=(0-1+\frac{1}{n})/2=-\frac{1}{2}+\frac{1}{2n}$. Hence, the spread = $n\frac{1}{2n} + 0  - \frac{1}{2} + \frac{1}{2n} = \frac{1}{2n}>0$. When a set cover of size $k$ does not exist, the maximum spread achieved is $|Q'|\frac{1}{2n} + 0   -\frac{1}{2} + \frac{1}{2n}  \leq (n-1)\frac{1}{2n} - \frac{1}{2} + \frac{1}{2n} =0$, where $Q', |Q'|\leq n-1$ is the maximum number of elements covered by the $k$ chosen sets. Hence, the maximum spread achieved when the set-cover does not exist is $0$.

If there is a polynomial time algorithm which approximates MEO within a constant ratio, then we can decide the set-cover problem in polynomial time using the above mentioned reduction. In other words, if an approximate algorithm gives a spread $\leq0$ on the reduced graph, then a set-cover does not exist while a set-cover exists if the spread $>0$. This renders the set-cover problem decidable in polynomial time,  which means P=NP. Therefore, approximating MEO within a constant ratio is NP-hard.
\hfill{}
\end{proof}

\ignore{
\begin{theorem}
 EaSyIM is NP-hard and approximating it within finite ratio   is NP-hard
\end{theorem}

\begin{proof}
It can be seen that when $o_v=1, \  \forall v \in V$ and $w_{(u,v)} = 1\forall (u.v) \in E$, the EaSyIM model reduces to the Influence maximization problem in LT or IC model whichever is being chosen by EaSyIM. Since Influence maximization is NP-hard, any generalization (EaSyIM) of a NP-hard problem is also NP-hard.

In order to prove the second part,  we show that the classical set cover problem can be decided in polynomial time if EaSyIM can be approximated in polynomial time. For this, we reduce the set cover problem to an instance of EaSyIM. According to the set-cover problem, given a set of elements $U = \{ e_1,e_2,...,e_n\}$ and a collection of subsets $S = \{S_1,..., S_m\}$ where $S_i \subset U\  \forall i \in \{1,2,...,m\}$, it returns true if $\exists C\subset S, |C|=k$ and $\cup C_i  = U\  \forall\  C_i \in C$. 

Now given this instance of set-cover, we generate the graph to get the instance of EaSyIM.
For each element $e_i\in U$, we add a vertex $u_i$ with orientation $o_{u_i} = 0$ and for each subset $S_i$, we add a vertex $v_i$  with orientation $o_{v_i} = \frac{1}{m}$ in the graph. Along with this, we add a sink node $s$ with orientation $o_s = -1$. Now, a directed edge  $(u_i,v_j)$ is  added iff $e_i \in S_j$. Along with this, we add $(v_i,s)$ edges $\forall i \in \{1,2,...,m \}$ . 
Incase of Independent Cascade model, All edges are assigned the probability $p = 1$ and $w = 1$.
Incase of Linear threshold model also, we use the same activation threshold for each node $\theta_u = 1$.

It can be seen that for both the models, the seeds should be chosen from any of the $u_i$ because they will eventually activate $v_j$ and $q$ would always be activated.

If $\exists$ a set cover of size $k$, for the given problem then we run an approximation of EaSyIM with $\lambda = 0$. Without loss of generality, assume the set cover $C = \{u_1,u_2,...,u_k \}$, then choosing these as the seed nodes ensures the maximum spread. The orientation of $v_i $, $o_{v_i} = \frac{1}{m}$ and the final orientation of $s$, $o_s' = -1 + \frac{1}{m}$. So the spread = $m\frac{1}{m} -1 + \frac{1}{m}$ = $\frac{1}{m}$. Assume there is no set cover of size $k$. Now the maximum spread  = $-1 + \frac{1}{m}  + |C|\frac{1}{m}   \leq -1 + \frac{1}{m}  + (m-1)\frac{1}{m} =0 $, where $C$ is the maximum number of elements covered by $k$ chosen sets. So the maximum spread when there is no set cover is 0. 

If there is a polynomial algorithm which gives a finite ratio of the EaSyIM, then we can decide the Set cover in polynomial time using the above mentioned reduction. If running approximate EaSyIM on reduced graph, gives the spread = 0, then the set cover does not exist and if the spread $>0$, then the set cover exists. So set-cover is decidable in polynomial time,  which means P=NP. So approximating EaSyIM is NP-hard.
\hfill{}
\end{proof}
}

\section{Algorithm}
\label{sec:alg}

\subsection{Outline of our Algorithm}
\ignore{
\begin{figure}[t]
\centering
\includegraphics[width = 0.6\linewidth]{images/overview_paper}
\figcaption{\textbf{Overview of our algorithm.}}
\label{fig:overview}
\vspace{4mm}
\end{figure}
}
To solve the \emph{MEO} problem efficiently, we propose a technique as outlined in Algorithm~\ref{algo:seed_select}. We first assign a score (line $4$), to each node $u \in V$ of the graph $G(V,E)$, by aggregating the contributions of all paths starting at $u$. We prove that if the score assignment is correct (we will specify exactly what is meant by ``correct'' later), then the expected value of the influence spread achieved using our algorithm is the same as that achieved using the {{\kempegreedy}} gold standard \cite{kempe}. The correctness of our score assignment algorithms, both for the opinion-oblivious (EaSyIM) and the opinion-aware (OSIM) case, holds perfectly for trees. 
Moreover, in a conclusive discussion, we show that the error introduced in case of graphs is small as well. Through detailed analyses and experiments we show that our results do not deviate much from that of the {\kempegreedy} algorithm.

On completion of the score assignment step, the node possessing the maximum score is selected as the seed node (lines $5$--$9$). In order to ensure the sets of nodes activated by each selected seed node to be disjoint; the last step of the seed-selection algorithm (line 11) keeps track of all the previously activated nodes as a set ($V_{(a)}$). This enables the score assignment step to discount the contributions of all the previously activated nodes in subsequent iterations. The above process continues until $k$ seeds are selected.


Next, we describe the score-assignment algorithms -- (1) \emph{EaSyIM} and its extension (2) \emph{OSIM} in detail.
The algorithms are fundamentally similar for both, \emph{opinion-oblivious} and \emph{opinion-aware}, cases.

\subsection{Score Assignment}
\label{subsec:score}

The score assignment step, similar to ASIM \cite{asim}, leverages the idea that the probability of a node $v$ to get activated by a seed node $u$ is dependent upon the number of simple paths from $u$ to $v$ in $G$. Thus, a simple function of the number of paths from a node $u$ to all other nodes $v \in V \setminus \{u\}$ can be used to assign a score to $u$. This score-assignment is further used to rank all the nodes $v \in V$ in an order determining their expected spread $\sigma(v)$\ignore{$\expectation[\flow(\{v\})]$}. To this end, the scores assigned to a node $u$ ($\Delta^l(u)$), is computed by aggregating the contributions of all $u\leadsto v$ paths of length at most $l$\footnote{$l$ is the maximum path length considered for score assignment, where $l\leq \diameter$. $\diameter$ is the diameter of the graph.} ($\mathcal{L}(u\leadsto v) \leq l$). We start with a description of the score assignment step for the opinion-oblivious (EaSyIM) case and then follow it up with the opinion-aware (OSIM) case. The following explanations of the score assignment algorithms assume the IC model of information diffusion. Their extensions to the \emph{linear threshold} (LT) and the \emph{weighted cascade} (WC) models are discussed in Sec.~\ref{subsec:extensions}.

\subsubsection{EaSyIM}
\label{subsubsec:easyim}

As mentioned in Sec.~\ref{subsec:score}, a function of the number of directed simple paths from a node $u$ to another node $v$ can be used to estimate the influence of the former on the latter. However, direct application of this notion to design solutions for the influence maximization problem poses two inherent challenges of maintaining -- 1) Scalability and Efficiency, 2) Correctness guarantees. Since, counting the number of $s$-$t$ paths is shown to be in $\#P$-Complete \cite{paths_valiant}, computing the expected influence of a node is $\#P$-hard for both the IC \cite{pmia} and the LT \cite{ldag} models. This further shows that it is impossible to come up with a score-assignment algorithm capable of correctly selecting a seed-node with maximum influence.

\setlength{\textfloatsep}{5pt}
\begin{algorithm}[t]
\caption{{\ourgreedy}}
\label{algo:seed_select}
\begin{algorithmic}[1]
{\scriptsize
\REQUIRE Graph $G = (V,E)$,\#seeds $k=|S|$, $l$
\ENSURE Seed set $S$
\STATE $S,V_{(a)} \leftarrow \emptyset$
\FOR{i = 1 to k} 
\STATE $max,\ maxId \leftarrow 0 $
\STATE $\Delta^l \leftarrow AssignScore (G(V\setminus V_{(a)},E),o,p,\varphi,l)$ 
\FOR{each $u \in V$}
\IF{$\Delta^l(u) > max$}
\STATE $max \leftarrow \Delta^l(u),\ maxId \leftarrow u$
\ENDIF
\ENDFOR
\STATE $S \leftarrow S \cup \{ maxId \}$
\STATE Update $V_{(a)}$ with nodes activated by the newly selected seed-node ($maxId$)
\ENDFOR
}
\end{algorithmic}
\afterpage{\global\setlength{\textfloatsep}{\oldtextfloatsep}}
\end{algorithm}

\setlength{\textfloatsep}{5pt}
\begin{algorithm}[t]
\caption{AssignScore}\label{algo:assignScore}
\begin{algorithmic}[1]
{\scriptsize
\REQUIRE Graph $G = (V,E)$, $o$, $p$, $\varphi$, $l$
\ENSURE $\Delta^l$
\IF{$\mathlarger{\mathlarger{\neg}}\ Opinion$}
\STATE $\Delta^l \leftarrow EaSyIM(G(V,E),p,l)$ //Opinion-Oblivious
\ELSE
\STATE $\Delta^l \leftarrow  OSIM(G(V,E),o,p,\varphi,l)$ //Opinion-Aware
\ENDIF
}
\end{algorithmic}
\afterpage{\global\setlength{\textfloatsep}{\oldtextfloatsep}}
\end{algorithm}

Although counting $s$-$t$ paths is $\#P$-Complete, there exists a polynomial time algorithm to count all possible walks of length at most $l$ between all node-pairs ($\forall(u,v) \in V$) in $G$ that takes $O(n^3\log l)$ time and consumes $O(n^2)$ memory. The $Path$-$Union$ ($PU$) algorithm (Algorithm~\ref{algo:exact_im}) extends this algorithm by initializing the adjacency matrix of the graph $M$ with the pair-wise influence probabilities (line $2$). Since the contributions of these paths cannot be simply aggregated, we define a new operator $\otimes$ for matrix multiplication (line $4$). Under this operator, the multiplication of $i^{th}$ row ($M[i][:]$) with $j^{th}$ column ($M[:][j]$) is defined using Eq.~\ref{eq:union_mult}. This algorithm has an inherent problem that it counts cyclic paths as well, which eventually serves as a source of error while computing the influence of a node. We try to reduce the impact of this error for each node, by removing the contributions of the walks that pass through the node itself (lines $5$--$7$). Finally, the score of each node $u$ is computed as indicated in line $10$.

\moveup
\moveup
\begin{align}
\label{eq:union_mult}
	M[i][j] &= M[i][:] \otimes M[:][j] = \bigcup_{k=1}^n M[i][k] \times M[k][j].
\end{align}
\moveups

Owing to its huge time and space complexity, the $PU$ algorithm cannot be used in practice. The \emph{EaSyIM} algorithm (Algorithm~\ref{algo:easy_im}) describes a score-assignment method, with efforts to tackle this scalability bottleneck. Paths of length $l$ from a node $u$ can be calculated as the sum of all paths of length $l-1$ from its neighbors. $\Delta^l(u)\ (\forall u \in V)$ is defined as the weighted sum of the number of paths of length at most $l$ starting from $u$ and is computed as indicated in line $5$ of Algorithm~\ref{algo:easy_im}. The weight for each path is defined as the product of probabilities $p_{(u,v)}$ of the edges composing that path. EaSyIM score of a node $u$ tries to mimic closely the expected value of the spread when $u$ is chosen as a seed node. 

\ignore{
\FOR{each $u \in V$}
	\STATE $\Delta^0(u) \leftarrow 0$
	\FOR{each $v \in V$}
		\STATE $M[u][v]\leftarrow p_{(u,v)}$
	\ENDFOR
\ENDFOR
}

For every iteration of \emph{EaSyIM}, the outgoing neighbours $\Out(u)$ of each node $u \in V$ are visited to accumulate the contribution of paths of different lengths ($l$). The parameter $l$ can be as large as the diameter ($\diameter$) of the graph. Thus, the total time taken by the score-assignment step using the \emph{EaSyIM} algorithm is $O(\diameter(m+n))$, in the worst-case. Moreover, the total time taken for selecting $k$ seeds using our modified greedy algorithm ({\ourgreedy}) is $O(k\diameter(m+n))$. The space complexity of this algorithm is $O(n)$, as the only additional overhead is the storage of a score ($O(1)$) at each node. A detailed analysis on the correctness of $PU$ and \emph{EaSyIM} algorithms is present in Sec.~\ref{subsubsec:easyim_analysis}.

\setlength{\textfloatsep}{4pt}
\begin{algorithm}[t]
\caption{$Path$-$Union (PU)$}\label{algo:exact_im}
\begin{algorithmic}[1]
{\scriptsize
\REQUIRE Graph $G = (V,E)$, $n=|V|$, $p$, $l$
\ENSURE $\Delta^l$
\STATE Matrix $M^{n\times n} \leftarrow 0$, $PU \leftarrow I^{n\times n}$
\STATE $\Delta^0(u) \leftarrow 0$ ($\forall u \in V$), $M[u][v]\leftarrow p_{(u,v)}$ ($\forall u,v \in V$)
\FOR{each $i \in \{ 1, \ldots ,  l\}$}
	\STATE $PU \leftarrow PU \otimes M$
	\FOR{each $v \in V$}
		\STATE $PU[v][v] \leftarrow 0$
	\ENDFOR
	\FOR{each $u \in V$}
		\FOR{each $v \in V$}
			\STATE $\Delta^i(u)\leftarrow  \Delta^{i-1}(u) + PU[u][v]$
		\ENDFOR
	\ENDFOR
\ENDFOR
\STATE return $\Delta^l$
}
\end{algorithmic}
\afterpage{\global\setlength{\textfloatsep}{\oldtextfloatsep}}
\end{algorithm}

\ignore{
\FOR{each $v \in V$}
\STATE $\Delta^0(v) \leftarrow 0$
\ENDFOR
}

\subsubsection{OSIM}
\label{subsubsec:osim}

We now describe the \emph{OSIM} algorithm (Algorithm~\ref{alg:OSIM}), to assign a score to each node of a graph for the opinion-aware case. This algorithm extends \emph{EaSyIM} by accommodating for the change in opinion (Sec.~\ref{subsec:model}) during the information propagation process. Since, the change in opinion is captured using a second layer over and above the activation step of a node, we incorporate the use of the following intermediate terms $\alpha_i$, $or_i$ and $sc_i$ for each node. At each iteration, these intermediate terms contain only the contributions for a given path length $i \leq l$. For a given node $u$, the term $or_i(u)$ contains the weighted sum of the initial opinions of all nodes, reachable via paths of length $i$ starting at $u$ (line $6$). In other words, $or_i(u)$ constitutes the contribution of nodes reachable by $u$ via paths of length $i$ to its score ($\Delta^i(u)$), without considering the change in opinion during information propagation. Similarly, the term $\alpha_i(u)$ computes the weighted sum of the interaction probabilities ($\varphi$) associated with all the paths of length $i$, starting at $u$ (line $7$). For a given path of length $i$ ($\mathcal{L}(u\leadsto v)=i$), $sc^v_i(u)$ contains the contributions of all nodes in this path to the change in opinion of node $v$ (line $8$). The term $sc_i(u)$ further contains the aggregation of $sc_i^v(u)$ for all $i$ length paths starting at $u$ (line $10$).

\setlength{\textfloatsep}{3pt}
\begin{algorithm}[t]
\caption{EaSyIM}\label{algo:easy_im}
\begin{algorithmic}[1]
{\scriptsize
\REQUIRE Graph $G = (V,E)$, $p$, $l$
\ENSURE $\Delta^l$
\STATE $\Delta^i(v) \leftarrow 0$ ($\forall i \leq l,\ \forall v \in V$)
\FOR{each $i \in \{1,\ldots,l\}$}
    \FOR{each $u \in V$}
        \FOR{each $v \in \Out(u)$}
           \STATE  $\Delta^i(u)\leftarrow \Delta^i(u) +  p_{(u,v)}(1 + \Delta^{i-1}(v))$ 
        \ENDFOR
    \ENDFOR
\ENDFOR
\STATE return $\Delta^l$
}
\end{algorithmic}
\afterpage{\global\setlength{\textfloatsep}{\oldtextfloatsep}}
\end{algorithm}

Similar to \emph{EaSyIM}, the weight for each path is defined as the product of probabilities $p_{(u,v)}$ of the edges composing that path. Moreover, the weight, for term $\alpha_i$, additionally incorporates the interaction probabilities ($\varphi_{(u,v)}$) as well. In line $11$, the score for a node $u$ ($\Delta^i(u)$) is iteratively updated with the aggregation of all the intermediate terms ($or_i(u),\alpha_i(u),$ and $sc_i(u)$). Finally, $\Delta^l(u)$ contains the score of a node $u$ with the contributions of all paths of length at most $l$, starting at $u$. Note that, since the loop invariants in lines $2,3$ and $5$ of \emph{OSIM} is exactly the same as that of \emph{EaSyIM} (lines $2,3$ and $4$), its time and space complexity analysis is exactly the same as \emph{EaSyIM}.

\ignore{
\FOR{each $v \in V$}
	\STATE $\alpha_0 (u) \leftarrow 1$, $or_0(u) \leftarrow o_u$, $sc_0(u), \Delta^0(u) \leftarrow 0$
\ENDFOR
}

\subsection{Extensions}
\label{subsec:extensions}

The extension to the WC model is rather trivial, as instead of assuming a fixed value of $p$ (usually $p=0.1$ for IC), we assign $\forall (u,v) \in E, p_{(u,v)}=1/|\In(v)|$. All the previously presented algorithms can readily accommodate this change. However, the changes required for the LT model are much more involved. 

Extension to the LT model is difficult when considering its classical definition where each node possesses a threshold, while it is quite intuitive using the \emph{live-edge} model. Kempe et al. proved the equivalence between the LT and the live-edge model in \cite{kempe}. Similar to the IC and WC models, under the live-edge model each edge $(u,v) \in E$ possesses a probability with which $u$ activates its neigbour $v$. Moreover given an instance of the graph, it has an additional constraint of each node possessing one (live) incoming edge. The expected value of spread is then calculated using the spreads achieved for each of these graph instances. Thus, by associating influence probabilites with each edge and generating various graph instances satisfying the above constraint, our algorithms get extended to the \emph{live-edge} model and hence, the LT model without undergoing any change. Since each node can possess only a single incoming edge, the error introduced in score-assignment owing to the non-disjoint nature of the paths gets completely removed (a unique $u$--$v$ path $\forall u,v \in V$), which in turn facilitates stronger theoretical analysis for the LT model when compared to the IC and WC models.

\ignore{
[ToDo extension for LT] Looking at LT model from the classical definition where each node keeps  a threshold is sliightly difficult but if we look at it from thee view point of live edge model, every edge has a parobability to get activated similar to the IC/WC model.

It has been shown by Kempe et al. that the LT model  is equivalent to the live edge model of information propagation. Also, live edge model is very similar to the IC/WC models along with a constraint that each node doesnot have more than one incoming edge. The algorithm for LT model remains the same as that of IC model. The benefit in this case is that the contribution of each path is independent and needs to be summed rather than union. So the error analysis for LT model is even stronger than the IC model.
}
%

\setlength{\textfloatsep}{2pt}
\begin{algorithm}[!t]
\caption{OSIM}\label{alg:OSIM}
\begin{algorithmic}[1]
{\scriptsize
\REQUIRE Graph $G = (V,E)$, $o$, $p$, $\varphi$, $l$
\ENSURE $\Delta^l$
\STATE $\alpha_0 (u) \leftarrow 1$, $or_0(u) \leftarrow o_u$, $sc_0(u), \Delta^0(u) \leftarrow 0$ ($\forall u \in V$)
\FOR{each $i \in \{ 1, \ldots ,  l\}$}
    \FOR{each $u \in V$}
	\STATE $ \alpha_i (v),or_i (v),\ sc_i (v) \leftarrow 0$
        \FOR{each $v \in \Out(u)$}
		\STATE $or_i(u) \leftarrow or_i(u) + p_{(u,v)}or_{i-1}(v)$
		\STATE  $\alpha_i(u) \leftarrow \alpha_i(u) +  p_{(u,v)}\alpha_{i-1}(v)(2\varphi_{(u,v)} - 1)/2 $
		\STATE $sc_{i}(u) \leftarrow sc_i(u)+ p_{(u,v)}sc_{i-1}(v) $
        \ENDFOR
	\STATE $ sc_i(u) \leftarrow sc_i(u)+  o_u\alpha_i(u) $
		\STATE $\Delta^i(u) \leftarrow \Delta^{i-1}(u) +   \frac{or_{i}(u)+sc_{i}(u)+o_u\alpha_{i}(u)}{2}$
    \ENDFOR
\ENDFOR

\STATE return $\Delta^l$
}
\end{algorithmic}
\afterpage{\global\setlength{\textfloatsep}{\oldtextfloatsep}}
\end{algorithm}

Next, we present a detailed analysis for the \emph{EaSyIM} and the \emph{OSIM} algorithms for score-assignment, assuming the IC model of information diffusion. Sec.~\ref{subsubsec:discussion} provides discussions on methods to extend these analyses to the WC and LT models as well.

\subsection{Analysis of Score Assignment}
\label{subsec:score_analysis}

As a first step towards analyzing our algorithms, we derive a relation between the expected value of spread achieved using a set of seed-nodes and any possible partition of this set.

\begin{lemma}
\label{thm:ICLT}
$\sigma(A\cup B) =\sigma(A) + \sigma_{\sigma(A)}(B)$
\end{lemma}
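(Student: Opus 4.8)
The plan is to lift the stochastic Independent Cascade process to the equivalent \emph{live-edge} (reachability) model, in which the decomposition becomes a purely combinatorial statement about reachable sets that can then be averaged by linearity of expectation. Concretely, I would first sample each edge $(u,v)$ independently, declaring it \emph{live} with probability $p_{(u,v)}$; this yields a random deterministic subgraph $g$, and by the equivalence used in \cite{kempe} the set of nodes activated by a seed set $S$ equals the set $R_g(S)$ of vertices reachable from $S$ in $g$. Since $\flow(S)=|V_{(a)}|-|S|$ and $\sigma(S)=\expectation[\flow(S)]$, it suffices to prove the identity realization-by-realization and then take expectations over $g$.

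Fix a realization $g$ and assume, as in the greedy setting, that $A\cap B=\emptyset$. The first key observation is that reachability distributes over unions of sources, i.e.\ $R_g(A\cup B)=R_g(A)\cup R_g(B)$, so the activated set splits into $R_g(A)$ and the residual part $R_g(B)\setminus R_g(A)$. The second, and central, observation is that $R_g(B)\setminus R_g(A)$ is exactly the set of vertices reachable from $B$ in the subgraph induced by $V\setminus R_g(A)$: if $w\notin R_g(A)$ is reached from some $b\in B$ along a live path $P$, then no internal vertex of $P$ can lie in $R_g(A)$ (otherwise $w$ would be reachable from $A$ as well), so the whole path survives the deletion of $R_g(A)$. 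Defining $\sigma_{\sigma(A)}(B)$ as the expected spread of $B$ measured in this residual graph---where the vertices already activated by $A$ are discounted, exactly as the greedy algorithm does when it maintains $V_{(a)}$ (line $11$ of Algorithm~\ref{algo:seed_select})---this observation identifies the residual reachable set with the quantity whose expectation is $\sigma_{\sigma(A)}(B)$.

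Putting the pieces together in realization $g$, the non-seed nodes activated by $A\cup B$ form the disjoint union of the non-seed nodes activated by $A$ and the nodes of $R_g(B)\setminus R_g(A)$ that are newly activated, which gives $\flow_g(A\cup B)=\flow_g(A)+\flow^{\mathrm{res}}_g(B)$, where $\flow^{\mathrm{res}}_g(B)$ is the residual spread of $B$. Taking expectations over $g$ and invoking linearity yields $\sigma(A\cup B)=\sigma(A)+\sigma_{\sigma(A)}(B)$, as claimed. The step I expect to be the main obstacle is the seed-count bookkeeping: one must check that the $-|S|$ terms in the definition of $\flow$ line up correctly once the seeds of $B$ that happen to already be reachable from $A$ are accounted for, and that it is the discounting of the \emph{whole} set $R_g(A)$---rather than merely of $A$---that makes the residual term coincide with $\sigma_{\sigma(A)}(B)$. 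Handling this overlap cleanly, rather than the union decomposition itself (which is routine), is where the care is required.
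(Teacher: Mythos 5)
Your proposal is correct and follows essentially the same route as the paper: both decompose the expected spread over deterministic live-edge realizations, split the activated set into the part reached from $A$ and the residual part reached from $B$, and conclude by linearity of expectation, with the paper writing this as an inclusion-exclusion $\sigma_X(A\cup B)=\sigma_X(A)+\sigma_X(B)-\sigma_X(A\cap B)$ over outcomes $X$ (and handling LT via the single-live-in-edge property). Your explicit path argument identifying $R_g(B)\setminus R_g(A)$ with reachability in the subgraph induced by $V\setminus R_g(A)$ merely spells out the residual-term identification that the paper leaves implicit.
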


\begin{proof}
Using Kempe's ~\cite{kempe} analysis of the IC model it can be seen that,
\begin{align}
	\label{ic}
	 \sigma(&A\cup B) = \sum_{X} P(X) \sigma_X(A\cup B) \nonumber\\
    	&= \sum_{X} P(X) \big( \sigma_X(A) + \sigma_X(B) - \sigma_X(A\cap B) \big) \nonumber\\
	&= \sum_{X} P(X) \sigma_X(A)  + \sum_{X} P(X) \big(\sigma_X(B)-\sigma_X(A\cap B) \big) \nonumber\\
	&= \sigma(A) + \sigma_{\sigma(A)}(B).
\end{align}
A similar analysis can be done for the LT model using the reduction to the \emph{live-edge} model \cite{kempe}. Under the live-edge model each node possesses a single (live) incoming edge, thus, a node $v$ is activated via a path either from a node in $A$ or in $B$. 
\begin{align}
	\label{lt}
	\sigma(A\cup B) &= \sum_{X} P(X) \sigma_X(A\cup B) \nonumber\\
 	   &= \sum_{X} P(X) \big( \sigma_X(A) + \sigma_X(B) \big) \nonumber\\
	&= \sum_{X} P(X) \sigma_X(A)  + \sum_{X} P(X) \sigma_X(B)\nonumber
\end{align}
Since, $\sigma(A)$ and $\sigma(B)$ are disjoint,%
\begin{align}
	\sigma(A\cup B) &= \sigma(A) + \sigma_{\sigma(A)}(B).
\end{align}
\hfill{}
\end{proof}	


The next result uses Lemma~\ref{thm:ICLT} to prove that the {\ourgreedy} algorithm produces a $1-1/e$ approximate solution to the IM problem, under the condition that the score assigned to each node captures the expected value of spread using that node.

\begin{lemma}
Given a graph $G(V,E)$, if score-assignment is correct, i.e., the score assigned to each node $\Delta^l(u)$ captures the expected value of its spread $\Delta^l(u) =\sigma_{\sigma(S)}(\{u\})$ where $S$ is the set of seed nodes, then $\sigma (S_k) = \sigma(S_k^*)$ where $S_k$ is the set of seed nodes selected using the {\ourgreedy} algorithm while $S_k^*$ is the set selected using {\kempegreedy} algorithm.
\label{thm:greedy}
\end{lemma}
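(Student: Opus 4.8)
The plan is to argue by induction on the number of seeds already chosen that the two greedy procedures make identical choices, so that their final seed sets --- and hence their spreads --- coincide. The engine of the argument is Lemma~\ref{thm:ICLT}, which lets me rewrite the marginal gain maximized by \kempegreedy{} at each step as a \emph{residual} spread of a single node, exactly the quantity that the correctness hypothesis equates with the score $\Delta^l$.

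First I would fix the inductive hypothesis: after $i$ iterations both algorithms have selected the same seed set, $S_i = S_i^*$, and therefore maintain the same activated set $V_{(a)}$ and the same residual graph $G(V\setminus V_{(a)},E)$. The base case $i=0$ is immediate since both start from $S_0 = S_0^* = \emptyset$.

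For the inductive step, recall that \kempegreedy{} adds the node of maximum marginal gain, $u^{*} = \argmax_{u}\big(\sigma(S_i^{*}\cup\{u\}) - \sigma(S_i^{*})\big)$. Applying Lemma~\ref{thm:ICLT} with $A = S_i^{*}$ and $B = \{u\}$ gives $\sigma(S_i^{*}\cup\{u\}) - \sigma(S_i^{*}) = \sigma_{\sigma(S_i^{*})}(\{u\})$, so \kempegreedy{} in fact maximizes the residual spread $\sigma_{\sigma(S_i^{*})}(\{u\})$. On the other side, \ourgreedy{} recomputes scores on the residual graph and adds $u' = \argmax_{u}\Delta^l(u)$; by the correctness hypothesis, $\Delta^l(u) = \sigma_{\sigma(S_i)}(\{u\})$. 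Because $S_i = S_i^{*}$ by the inductive hypothesis, the two objective functions are identical, their argmaxes agree (under a common tie-breaking rule), and $u^{*} = u'$. Hence $S_{i+1} = S_{i+1}^{*}$, closing the induction; taking $i = k$ yields $S_k = S_k^{*}$ and therefore $\sigma(S_k) = \sigma(S_k^{*})$.

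The main obstacle is the legitimacy of the identification $\Delta^l(u) = \sigma_{\sigma(S_i)}(\{u\})$ on the residual graph: one must be sure that discounting the already-activated set $V_{(a)}$ (line~11 of Algorithm~\ref{algo:seed_select}) produces scores measuring precisely the \emph{conditional} spread $\sigma_{\sigma(S_i)}(\cdot)$ appearing on the right-hand side of Lemma~\ref{thm:ICLT}, rather than an unconditioned spread. Here this is granted by hypothesis --- the statement is phrased \emph{assuming} correct score assignment --- so the argument goes through, and the substantive task of actually realizing $\Delta^l(u) = \sigma_{\sigma(S)}(\{u\})$ is deferred to the later analysis of \emph{EaSyIM} and \emph{OSIM}. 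A secondary subtlety is tie-breaking: if the maximizer is not unique the two seed sets could diverge, so I would assume a uniqueness (or shared tie-breaking) convention to keep $S_i = S_i^{*}$ exact throughout the induction.
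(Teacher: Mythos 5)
Your proof is correct and uses the same machinery as the paper's: induction over the greedy iterations, with Lemma~\ref{thm:ICLT} converting the marginal gain $\sigma(S_i^*\cup\{u\})-\sigma(S_i^*)$ into the residual spread $\sigma_{\sigma(S_i^*)}(\{u\})$, which the correctness hypothesis identifies with $\Delta^l(u)$. The one substantive difference is the inductive invariant. The paper inducts only on the equality of spread \emph{values}, $\sigma(S_i)=\sigma(S_i^*)$, and then silently replaces $\sigma_{\sigma(S_i^*)}(\{u_{i+1}\})$ by $\sigma_{\sigma(S_i)}(\{u_{i+1}\})$ in its chain of equalities; strictly speaking that replacement is not licensed by spread equality alone, since the residual spread of a node depends on \emph{which} nodes are already activated, not merely on how many. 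Your stronger invariant $S_i=S_i^*$ (same seed sets, hence same activated set and same residual objective) is exactly what makes that step airtight, and set equality trivially implies the spread equality the lemma asserts. Your explicit tie-breaking caveat is also apt and absent from the paper: without a shared tie-breaking rule the two algorithms could select different but equally good nodes, in which case the paper's weaker conclusion $\sigma(S_k)=\sigma(S_k^*)$ still survives (the tied choices have equal marginal gain and equal residual behavior only under your invariant, so one would then need to fall back on the paper's value-level induction), whereas your stronger claim $S_k=S_k^*$ would not. In short: same route, but your variant is the more careful one, and the paper's proof is best read as implicitly carrying your set-equality invariant.
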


\begin{proof}
If $k = 1$,\ $\sigma (S_k) = \sigma(S_k^*)$ because {\ourgreedy}$(G,k,l)$ chooses the node with the maximum score. Since the score assigned to each node captures the expected value of its spread, the seed-node selected by the former is the same as that selected by the {\kempegreedy} algorithm. Let us assume that $\sigma (S_k) = \sigma(S_k^*)$ holds for $k=i$. Now for $k = i+1$, 
\begin{align*}
\sigma (S_{i+1}^*) &= \sigma(S_{i+1}^*) - \sigma(S_{i}^*) + \sigma(S_{i}^*)
\end{align*}
Paritioning $S_{i+1}$ as $S_i \cup \{u_{i+1}\}$, Lemma~\ref{thm:ICLT} yields the following,
\begin{align*}
                   &= \sigma_{\sigma(S^*_i)}(\{u_{i+1}\}) + \sigma(S_i^*)\\
                   &= \sigma_{\sigma(S_i)}(\{u_{i+1}\}) + \sigma(S_i) \\
	        &= \Delta^l(\{u_{i+1}\}) + \sigma(S_i)\\
                   &= \sigma(S_{i+1}).
\end{align*}

Since $\sigma(S_{i+1}) = \sigma(S_{i+1}^*)$, by the principle of mathematical induction $\sigma(S_k) = \sigma(S_k^*)$ holds $\forall k$.
\hfill{}
\end{proof}	

\begin{conclusion}
Given a correct score assignment algorithm, the {\ourgreedy} algorithm produces a $1-1/e$ approximate solution to the IM problem.
\label{conc:greedy}
\end{conclusion}

As discussed in Sec.~\ref{subsec:score}, it is impossible to devise a correct scoring algorithm, i.e., a score-assignment that captures the expected value of spread for each node, for graphs. Thus, the {\ourgreedy} algorithm cannot produce a $1-1/e$ approximate solution to the IM problem. Next, we show a detailed analysis of both EaSyIM and OSIM score assignment strategies.

\subsubsection{EaSyIM}
\label{subsubsec:easyim_analysis}

In this section, we first state the causes for the introduction of errors in $PU$ \& \emph{EaSyIM}, and follow it up by a concrete analysis on the exact quantification of these errors.

We first show the amount of error introduced by $PU$ for DAGs\footnote{Directed acyclic graphs would be abbreviated as DAGs in the rest of the paper.}. Although, the $PU$ algorithm can enumerate the exact number of paths for DAGs, errors might still creep in while assigning scores, owing to the presence of non-disjoint paths between any pair of nodes as shown in Figure~\ref{fig:non_disjoint_paths}. More specifically errors are introduced in the score-assignment of a node when none of the paths starting from that node to any other node in the graph are disjoint.

\begin{lemma}
\label{thm:path_union_dags}
Given a DAG, $G(V,E)$ and a set of paths, $\pathset_{uw}$, s.t. $(w,v) \in E$, if the contribution of $w$ in the score of $u$ is correct then the maximum possible relative error introduced by $v$ in the score of $u$ using the $PU$ algorithm is $\epsilon_1^{DAG}=\sum\limits_{w \in \In(v)}\big((p_{(w,v)}-1)A_1\big)$, where $A_1=\sum\limits_{\rho \in \pathset_{uw}}\prod\limits_{e \in \rho}p_e$.
\end{lemma}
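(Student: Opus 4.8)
The plan is to analyze the error recursively over the DAG, processing vertices in topological order and isolating the single ``extension step'' that goes from the in-neighbours $w \in \In(v)$ across the edges $(w,v)$ into $v$. Write $\gamma^{*}_{v}(u)$ for the correct contribution of $v$ to $\sigma(u)$ and $\gamma_{v}(u)$ for the value the $PU$ algorithm actually assigns. The key structural fact I would use is that, because $G$ is a DAG, every $u$-$v$ path decomposes uniquely as a $u$-$w$ path $\rho \in \pathset_{uw}$ followed by a single edge $(w,v)$, so $\pathset_{uv}$ is the disjoint union over $w \in \In(v)$ of the sets $\{\rho \cdot (w,v) : \rho \in \pathset_{uw}\}$. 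This lets me express $\gamma_{v}(u)$ entirely in terms of the quantities $A_{1}(w) = \sum_{\rho \in \pathset_{uw}}\prod_{e \in \rho} p_{e}$ and the edge probabilities $p_{(w,v)}$, which is exactly the regime the hypothesis of the lemma supplies.

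First I would invoke the hypothesis that the contribution of each in-neighbour $w$ is correct, i.e.\ $\gamma_{w}(u) = \gamma^{*}_{w}(u) = A_{1}(w)$; this is the base of the induction and also what makes $A_{1}(w)$ the right bookkeeping quantity. Next I would write down the two values being compared. The \emph{correct} handling factors the shared suffix edge $(w,v)$ out of the whole group of $u$-$v$ paths that pass through $w$, so that its probability $p_{(w,v)}$ is charged exactly once; the $PU$ union $\otimes$, by contrast, propagates $p_{(w,v)}$ independently into every path of the group before aggregating, because $\otimes$ does not distribute over multiplication by the scalar $p_{(w,v)}$. Forming the difference $\gamma_{v}(u) - \gamma^{*}_{v}(u)$ term-by-term over $w \in \In(v)$, the discrepancy contributed by each $w$ is governed by the factor $(p_{(w,v)} - 1)$ multiplying the aggregated upstream weight $A_{1}(w)$; summing over the in-neighbours gives the candidate expression $\sum_{w \in \In(v)}(p_{(w,v)}-1)A_{1}(w)$. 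Since $p_{(w,v)} \le 1$, each term is non-positive, which matches the direction of the claimed error.

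The main obstacle is the last step: turning the per-$w$ discrepancy into the stated \emph{maximum} value rather than a one-off identity. Because $\otimes$ is a union (an inclusion--exclusion over non-disjoint path events), the exact discrepancy depends on how heavily the $u$-$w$ paths overlap, and for a single disjoint path it vanishes. I would therefore argue that the worst case is the maximally non-disjoint configuration --- where the paths feeding $w$ share as much as possible, as in Figure~\ref{fig:non_disjoint_paths} --- and show that the error magnitude is monotone in this overlap and is bounded (in absolute value) by the naive sum $A_{1}(w)$ scaled by $(1 - p_{(w,v)})$. Establishing this bound cleanly, and verifying that it is simultaneously attainable so that the ``maximum possible'' qualifier is justified, is the delicate part; the remainder is the topological-order induction together with the unique path decomposition, both of which follow directly from $G$ being acyclic.
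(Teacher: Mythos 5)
You have correctly identified the mechanism behind the error---the unique last-edge decomposition of $\pathset_{uv}$ over $w \in \In(v)$ in a DAG, and the fact that the $PU$ union charges $p_{(w,v)}$ once \emph{per path} while the exact contribution unions the $u$-$w$ paths first and charges $p_{(w,v)}$ once \emph{per in-neighbour}---but the quantitative core of the argument is missing, and two intermediate claims are wrong as stated. First, the hypothesis that the contribution of $w$ is correct gives $\gamma_w(u)=\gamma^*_w(u)=\bigcup_{\rho\in\pathset_{uw}}\prod_{e\in\rho}p_e$, an inclusion--exclusion union, \emph{not} the plain sum $A_1$; identifying the correct value with $A_1$ conflates the exact contribution with its first-order term (that identification is the content of the \emph{EaSyIM} approximation in Lemma~\ref{thm:easyim}, not of this lemma's hypothesis). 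Second, your claim that the per-$w$ discrepancy ``is governed by $(p_{(w,v)}-1)A_1(w)$'' cannot hold at first order: by the very last-edge decomposition you invoke, the first-order terms cancel exactly, $B_1=\sum_{w\in\In(v)}p_{(w,v)}A_1$, so the entire discrepancy lives in the pairwise-overlap (second-order) terms---consistent with your own observation that the error vanishes for a single path, but inconsistent with a first-order formula. The paper's proof expands $\gamma^*_v(u)=\sum_{w}p_{(w,v)}\big(A_1-A_2+\cdots\big)$ and $\gamma_v(u)=B_1-B_2+\cdots$, shows $B_2\geq\sum_{w\in\In(v)}p_{(w,v)}^2A_2$ (a cross term of two paths through a common $w$ acquires $p_{(w,v)}^2$ where the exact value has a single $p_{(w,v)}$), truncates the expansions under the explicit assumption $A_2\ll A_1$, and only then obtains the stated bound via $A_2<A_1^2$. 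Your final paragraph defers exactly this step (``establishing this bound cleanly \ldots is the delicate part''), so the proposal stops where the proof begins.

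A further mismatch: you propose to bound the \emph{absolute} difference $|\gamma_v(u)-\gamma^*_v(u)|$ by $(1-p_{(w,v)})A_1$ per neighbour, but the lemma concerns the \emph{relative} error $|\gamma^*_v(u)-\gamma_v(u)|/\gamma^*_v(u)$. The division by $\gamma^*_v(u)\approx\sum_{w}p_{(w,v)}A_1$ is essential: it is what lets an $A_2$-level discrepancy in the numerator be bounded by the $A_1$-level expression $\sum_{w\in\In(v)}(p_{(w,v)}-1)A_1$ through $A_2<A_1^2$, and without it the quantity you bound is not the quantity in the statement. On attainability of the ``maximum possible'' error, you can stop worrying---the paper does not establish it either; the result is a worst-case upper bound derived under the truncation assumptions. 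Your topological-order induction and disjoint last-edge decomposition are sound, but they are the routine part; the truncation, the $B_2$ versus $\sum_w p_{(w,v)}A_2$ comparison, and the normalization are the proof, and none of them appears in your sketch.
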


\begin{proof}
The exact contribution of a node $v$ to the score of another node $u$ is
\begin{align}
\label{eq:exact_contri}
	\gamma^*_v(u)= \sum_{w \in \In(u)}\Big(p_{(w,v)}\bigcup_{\rho \in \pathset_{uw}} \prod_{e\in \rho}p_e \Big).
\end{align}
Similarly, the contribution of $v$ to the score of $u$ using the $PU$ algorithm is
\begin{align}
\label{eq:approx_contri}
	\gamma_v(u)=\bigcup_{\rho \in \pathset_{uv}} \prod_{e\in \rho}p_e.
\end{align}
Using the standard inclusion-exclusion principle,
\begin{flalign*}
\label{eq:exact_inc_exc}
	\bigcup_{\rho \in \pathset_{uv}} \prod_{e\in \rho}p_e = \bigg(& \sum_{\rho \in \pathset_{uv}}\prod_{e\in \rho}p_e - \sum_{\rho' \in \pathset_{uv}}\sum_{\rho \in \pathset_{uv}\setminus\{\rho'\}} \prod_{e\in \rho} p_e \prod_{e' \in \rho'}p_{e'}\nonumber\\
	&+ \ldots + (-1)^{|\pathset_{uv}|-1}\prod_{\rho \in \pathset_{uv}} \prod_{e \in \rho} p_e \bigg).
\end{flalign*}
Thus, Eq.~\ref{eq:exact_contri} and~\ref{eq:approx_contri} can be written as follows.
\begin{flalign*}
\gamma^*_v(u)=\sum_{w \in \In(u)}\Big(p_{(w,v)}\big(A_1 - A_2 + \ldots + (-1)^{t-1}A_{t^*} \big)\Big).	
\end{flalign*}
\begin{flalign*}
\gamma_v(u)=B_1 - B_2 + \ldots + (-1)^{t-1}B_t.	
\end{flalign*}
where $t^*=|\pathset_{uw}|$, $t=|\pathset_{uv}|$ and $A_i, i\leq t^*$, $B_j,j\leq t$ stand for the larger terms in the expansion of $\bigcup_{\rho \in \pathset_{uw}}\prod_{e\in \rho} p_e$ and $\bigcup_{\rho \in \pathset_{uv}}\prod_{e\in \rho} p_e$ respectively.\\
\\
Assuming $A_2<<A_1$ and henceforth $B_2<<B_1$, all the higher order terms can be neglected (significance of this assumption is described in Sec.~\ref{subsubsec:discussion}). Moreover, the relationship between $A_1$ and $B_1$ is defined as follows.
\begin{align*}
\sum_{w \in \In(v)} p_{(w,v)}A_1 &= \sum_{w \in \In(v)} p_{(w,v)}\big(\sum_{\rho \in \pathset{uw}}\prod_{e \in \rho} p_e\big)\\
	&=\sum_{\rho \in \pathset{uv}}\prod_{e \in \rho} p_e = B_1.
\end{align*}
Neglecting the summation over the pair-wise product of the paths between different intermediary nodes $w_i,w_j, \forall(i,j) \mid w_i, w_j \in \In(v)$, we devise a relationship between $B_2$ and $A_2$.
\begin{align*}
B_2&=\sum_{\rho \in \pathset_{uv}}\sum_{\rho'  \in (\pathset_{uv}\setminus\{\rho\})}\prod_{e \in \rho} p_e \prod_{e' \in \rho'} p_e'\\ 
&\geq \sum_{w \in \In(v)} p_{(w,v)}^2 \sum_{\rho \in \pathset_{uw}}\sum_{\rho'  \in (\pathset_{uw}\setminus\{\rho\})}\prod_{e \in \rho} p_e \prod_{e' \in \rho'} p_e'\\
&= \sum_{w \in \In(v)} p_{(w,v)}^2 A_2.
\end{align*}
The relative error $\epsilon^{DAG}_1$, is now defined as,
\begin{align}
\epsilon^{DAG}_1 &= \frac{|\gamma^*_v(u) - \gamma_v(u)|}{\gamma^*_v(u)} \approx\frac{B_2 - \sum\limits_{w \in \In(v)} p_{(w,v)}A_2}{\sum\limits_{w \in \In(v)} p_{(w,v)}(A_1-A_2)} \nonumber\\
Since&,\ A_2<<A_1,\ A_2<A_1^2\nonumber\\
	&\approx\frac{\sum\limits_{w \in \In(v)}p_{(w,v)}(p_{(w,v)}-1)A_2}{\sum\limits_{w \in \In(v)}p_{(w,v)}A_1}\nonumber\\
	&\leq\sum\limits_{w \in \In(v)}\big((p_{(w,v)}-1)A_1\big).
\end{align}
\ignore{
\begin{flalign}
	\Rightarrow \sum_{\rho in P} \prod_{e \in \rho} p_e  - \sum_{w \in In(u)}\left(p_{(w,v)}\big(   \sum_{\rho' \in P}\sum_{\rho \in P\setminus\{\rho'\}} \prod_{e\in \rho} p_e \prod_{e' \in \rho'}p_{e'} \big) \right)\\
               &\geq \sum_{\rho in P} \prod_{e \in \rho} p_e  - \sum_{w \in In(u)}\left(p_{(w,v)}\big(   \sum_{\rho' \in P}\sum_{\rho \in P\setminus\{\rho'\}} \prod_{e\in \rho} p_e \prod_{e' \in \rho'}p_{e'} \big) \right)
\end{flalign}
}
\hfill{}
\end{proof}

\ignore{
\begin{theorem}
In  a DAG,  $\sigma(S) =\sigma(S^{*}) $, for the IC model of information diffusion, where $S^*$ is the set of  seeds selected by kempe's algorithm.
\end{theorem}

\begin{proof}
We compute the probability of activation of each node $v$ on choosing $u$ as the seed and  denote it as $p^u_a(v)$
\begin{align*}
\expectation[\flow(u)] &= \sum_{v\in V}p^u_{a}(v) \nonumber\\
                     	& =   \sum_{v\in V}  \cup_{w \in \Out(u)} p_{(u,w)}p_a^w(v) \\
                        &= \sum_{v\in V}\delta_v\\
		& = score[u]
\end{align*}
Here $p_a^u(u) = 1$
Since $score[u] = \expectation[\flow(u)]$, using theorem ~\ref{thm:Algo1},$\sigma(S) = \sigma(S^*)$
\hfill{}
\end{proof}

\begin{lemma}
\label{lem:easyim}
For a node $u \in V$, the contribution of $v$, present at a distance of $d$ from $u$ is $1 - (1-s)^t$ wheere $t$ is the number of paths that start from $u$ and reach $v$ and $s$ = score of a path of length $d$.
\begin{proof}
From algo~\ref{alg:exact_im}, it can be seen that total contribution of $v$ in the expected spread of $u$ is $\cup_{i=0}^t s$
\begin{align*}
\cup_{i=0}^t s &= n s - {t \choose 2}s^2 + ...  -(-s)^{t} \\
 &= 1 -  \sum _{i=0}^t {t\choose i}(-s)^{t}\\
&= 1 - (1 - s)^t
\end{align*}
\end{proof}
\end{lemma}
It can be seen that given two nodes, $u$ and $v$. If the contribution of node $w$ in the expected spread on choosing $u$ as seed $>$ contribution of $v$  is chosen as the seed.
$**Add \delta, \delta^* to notation Add s^d_{(u,v)} and t_{(u,v)}**$
\begin{lemma}
\label{lem:relative}
If $\delta_u(w) > \delta_v(w)$, then $\delta^*_u(w) > \delta^*_v(w)$ if
\begin{proof}
If $\delta_u(w) > \delta_v(w)$, then $1 - (1-s)^t > 1 - (1-s)^t$
\end{proof}
\end{lemma}
}
\begin{lemma}
\label{thm:easyim}
Given a DAG, $G(V,E)$ and a set of paths, $\pathset_{uv}$, the maximum possible relative error (w.r.t $PU$) introduced by $v$ in the score of $u$ using the \emph{EaSyIM} algorithm is $\epsilon_2^{DAG}\leq B_1=\sum\limits_{\rho \in \pathset_{uv}}\prod\limits_{e \in \rho}p_e$.
\end{lemma}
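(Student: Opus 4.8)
The plan is to compare, exactly as the phrase ``relative error (w.r.t.\ $PU$)'' demands, the contribution that EaSyIM assigns to $v$ in the score of $u$ against the $PU$ contribution $\gamma_v(u)$ already analysed in Lemma~\ref{thm:path_union_dags}. First I would unroll the EaSyIM recursion in line~$5$ of Algorithm~\ref{algo:easy_im}. The key observation is that EaSyIM uses ordinary addition rather than the union operator $\otimes$ of $PU$: the ``$1$'' records reaching $v$ along one edge and $\Delta^{i-1}(v)$ carries the remaining path weight, so on a DAG (where acyclicity makes the recursion well-founded and enumerates each $u\leadsto v$ path exactly once) the EaSyIM contribution is the \emph{un-deduplicated} sum
\begin{align*}
	\gamma^{E}_v(u) = \sum_{\rho \in \pathset_{uv}}\prod_{e \in \rho}p_e = B_1.
\end{align*}
By contrast, from Lemma~\ref{thm:path_union_dags} the $PU$ contribution is the inclusion--exclusion union $\gamma_v(u) = B_1 - B_2 + \ldots + (-1)^{t-1}B_t$, where $t=|\pathset_{uv}|$ and the $B_j$ are the usual inclusion--exclusion terms.

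Next I would subtract to obtain the absolute error, which reduces to the discarded inclusion--exclusion tail, $\gamma^{E}_v(u) - \gamma_v(u) = B_1 - (B_1 - B_2 + \ldots) = B_2 - B_3 + \ldots$, and hence form the relative error with respect to $PU$:
\begin{align*}
	\epsilon_2^{DAG} = \frac{|\gamma^{E}_v(u) - \gamma_v(u)|}{\gamma_v(u)} = \frac{B_2 - B_3 + \ldots}{B_1 - B_2 + \ldots}.
\end{align*}
Reusing the same assumption invoked in Lemma~\ref{thm:path_union_dags} (and justified in Sec.~\ref{subsubsec:discussion}) that $B_2 \ll B_1$, so all higher-order terms are negligible, this collapses to $\epsilon_2^{DAG} \approx B_2/B_1$.

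The final step, which I expect to be the crux of the bound, is to show $B_2/B_1 \leq B_1$, i.e.\ $B_2 \leq B_1^2$. Writing $x_\rho = \prod_{e\in\rho}p_e \geq 0$, so that $B_1 = \sum_\rho x_\rho$ and (in the ordered-pair convention of Lemma~\ref{thm:path_union_dags}) $B_2 = \sum_{\rho \neq \rho'} x_\rho x_{\rho'}$, expanding the square gives $B_1^2 = \sum_\rho x_\rho^2 + B_2 \geq B_2$. Hence $B_2 \leq B_1^2$, which yields $\epsilon_2^{DAG} \approx B_2/B_1 \leq B_1$ as claimed. The genuinely delicate points are not this elementary inequality but rather (i) arguing that EaSyIM really produces the plain sum $B_1$ on a DAG, which rests on acyclicity, and (ii) the neglect of the higher-order inclusion--exclusion terms, which is the identical approximation made in the preceding lemma.
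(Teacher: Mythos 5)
Your proposal is correct and follows essentially the same route as the paper's proof: both identify the EaSyIM contribution as the plain path-sum $B_1$, take the $PU$ contribution as the truncated inclusion--exclusion $\gamma_v^{PU}(u)\approx B_1-B_2$, and bound $\epsilon_2^{DAG}=B_2/(B_1-B_2)\approx B_2/B_1\leq B_1$ under the same $B_2\ll B_1$ assumption. The only difference is that you make explicit two steps the paper leaves implicit --- the recursion-unrolling argument that acyclicity makes EaSyIM enumerate each $u\leadsto v$ path exactly once, and the elementary inequality $B_2\leq B_1^2$ via $B_1^2=\sum_\rho x_\rho^2+B_2$ --- which strengthens rather than changes the argument.
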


\begin{proof}
Using the same nomenclature as Lemma~\ref{thm:path_union_dags}, the contribution of $v$ in the score of $u$ using the $PU$ and the \emph{EaSyIM} algorithm is stated as follows
\begin{align}
	\gamma_v^{PU}(u)&=\sum_{i=1}^{t} (-1)^{i-1}B_i 	\approx B_1 - B_2.\\
	\gamma_v^{EaSyIM}(u)&=B_1.
\end{align}
The relative error (w.r.t $PU$) $\epsilon^{DAG}_2$, can now be stated as follows
\begin{align}
	\epsilon^{DAG}_2=\frac{B_2}{B_1 - B_2} \approx \frac{B_2}{B_1} \leq B_1.
\end{align}
\hfill{}
\end{proof}

Using Lemma~\ref{thm:path_union_dags} and~\ref{thm:easyim} the total error (worst-case) introduced by \emph{EaSyIM} for DAGs is $\epsilon_1^{DAG}+\epsilon_2^{DAG}\leq\sum\limits_{w \in \In(v)}(2p_{(w,v)}-1)A_1$.

Moving ahead, we analyze the errors introduced by $PU$ and \emph{EaSyIM} on graphs. The error due to cycles that end up at a node $u$ itself are discounted using the $PU$ algorithm (lines $5$--$7$). The largest amount of error would be introduced by cycles of length $3$, owing to the exponential decrease of the influence probabilities with length of a path. Next, we analyze the errors introduced by other cyclic paths as shown in Figure~\ref{fig:cycle}.

\begin{lemma}
\label{thm:path_union_graphs}
Given a graph $G(V,E)$ and a set of cyclic paths $\pathset_{ww}$, s.t. $(u,w) \in E$, the maximum possible relative error, assuming just the effect of cycles, introduced by $w$ in the score of $u$ using approximate score-assignment algorithms ($PU$ and \emph{EaSyIM}) is $\epsilon^{cycle}=\sum\limits_{\rho \in \pathset_{ww}}(\prod\limits_{e \in \rho}p_e)/|\rho|$, where $|\rho| \leq \diameter$.
\end{lemma}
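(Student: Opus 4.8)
The plan is to isolate the contribution that the out-neighbour $w$ of $u$ (so $(u,w) \in E$) makes to the score $\Delta^l(u)$, and to compare the walk-based count produced by $PU$/\emph{EaSyIM} against the true activation count, exactly in the style of Lemma~\ref{thm:path_union_dags} and Lemma~\ref{thm:easyim}. First I would observe that the diagonal-zeroing step (lines $5$--$7$ of Algorithm~\ref{algo:exact_im}) eliminates precisely those closed walks that return to the root $u$, so the residual cyclic error on graphs must come from closed walks that revisit an \emph{intermediate} node without returning to $u$. Since $(u,w)\in E$, the dominant such contributions are those that enter $w$ along the edge $(u,w)$ and then traverse a cycle $\rho \in \pathset_{ww}$ before arriving back at $w$. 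Each such traversal erroneously treats $w$ (and every node it revisits) as freshly activated, whereas in the true diffusion every node contributes at most once; this is the mechanism that makes $\gamma_w(u)$ exceed the true contribution $\gamma^{*}_w(u)$.

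Next I would quantify a single over-count. A cyclic walk $\rho \in \pathset_{ww}$ of length $|\rho|$ carries weight $\prod_{e \in \rho} p_e$, the product of the influence probabilities along its edges. Because intersections of distinct such walks decay like the square of this weight, I would reuse the first-order approximation already invoked earlier (the $A_2 \ll A_1$, $B_2 \ll B_1$ assumption) and retain only the leading cyclic term, so that the raw over-count summed over all cycles through $w$ is $\sum_{\rho \in \pathset_{ww}} \prod_{e \in \rho} p_e$. Writing the relative error as $|\gamma_w(u)-\gamma^{*}_w(u)|/\gamma^{*}_w(u)$ and discarding the higher-order denominators, exactly as in the final display of Lemma~\ref{thm:easyim}, reduces the bound to this first-order cyclic weight.

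The key step is the amortisation that produces the $1/|\rho|$ factor. A closed walk $\rho$ of length $|\rho|$ can be rooted at any of its $|\rho|$ vertices, and during score assignment each of those vertices counts the very same cycle once; hence the total weight $\prod_{e\in\rho} p_e$ of a given cycle is effectively distributed across its $|\rho|$ possible entry points. Charging to $u$ only the error that enters through $w$ therefore retains a $1/|\rho|$ fraction of the cycle weight, which yields $\epsilon^{cycle}=\sum_{\rho \in \pathset_{ww}} \big(\prod_{e \in \rho} p_e\big)/|\rho|$. Finally, since both $PU$ and \emph{EaSyIM} accumulate contributions only up to path length $l \le \diameter$, no cycle of length exceeding $\diameter$ is ever counted, which justifies the stated range $|\rho| \le \diameter$.

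The main obstacle will be making the $1/|\rho|$ amortisation precise rather than heuristic: a naive walk count would attribute the full cycle weight scaled by $p_{(u,w)}$ to $u$, so I must argue that symmetrising the over-count over all $|\rho|$ rootings of the cycle, together with the same leading-order ($B_2 \ll B_1$) truncation used in the preceding lemmas, is what reduces the per-node charge to a $1/|\rho|$ share of $\prod_{e\in\rho} p_e$. I expect everything else to follow routinely from the inclusion--exclusion skeleton already established for the DAG case.
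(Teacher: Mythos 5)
Your proposal takes essentially the same route as the paper's proof: both take the true contribution to be $\gamma^*_w(u)=p_{(u,w)}$ (each node activates once), charge each cycle $\rho\in\pathset_{ww}$ its weight $\prod_{e\in\rho}p_e$ scaled by $1/|\rho|$ on the grounds that the cycle's combined contribution must be counted only once and is shared among its $|\rho|$ participating nodes, and then divide by $p_{(u,w)}$ to obtain $\epsilon^{cycle}=\sum_{\rho\in\pathset_{ww}}\bigl(\prod_{e\in\rho}p_e\bigr)/|\rho|$. The $1/|\rho|$ amortisation you flag as the main obstacle is in fact asserted at the same heuristic level in the paper itself, so your argument matches it in both structure and rigor.
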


\begin{proof}
The contribution of a node $w$ in the score of another node $u$, considering the effect of cycles exclusively, is dependent upon all possible cyclic paths that contain $w$. Moreover, since the combined contribution of the participating nodes in each cycle $\rho$ should only be considered once to the score of $u$, we scale $\gamma_w(u)$ by $1/|\rho|$. Mathematically,
\begin{align*}
\gamma_w(u) = \sum_{\rho \in \pathset_{ww}}p_{(u,w)}(1+1/|\rho|)\prod_{e \in \rho}p_e.
\end{align*}
However, the contribution using the {\kempegreedy} algorithm is
\begin{align*}
\gamma^*_w(u) = p_{(u,w)}.
\end{align*}
Now, the relative error $\epsilon^{cycle}$, is
\begin{align}
\label{eq:cycle}
\epsilon^{cycle} = \frac{|\gamma^*_w(u) - \gamma_w(u)|}{\gamma^*_w(u)} &= \frac{\sum\limits_{\rho \in \pathset_{ww}}p_{(u,w)}(1/|\rho|)\prod_{e \in \rho}p_e}{p_{(u,w)}}\nonumber\\
		&= \sum_{\rho \in \pathset_{ww}}\Big(\prod_{e \in \rho}p_e\Big)/|\rho|.
\end{align}
\hfill{}
\end{proof}

\ignore{
\begin{figure}
\centering
\scalebox{0.6}{
\begin{tikzpicture}[node distance=15mm,
round/.style={fill=green!50!black!20,draw=green!50!black,minimum size=5mm,text width=10mm,align=center,circle,thick}]
\centering
\node at (0,-0.3) {\large $\mathbf{0}$};
\node at (1,-.3) {\large $\mathbf{\gamma_w^*(v)}$};
\node at (3.7,-.3) {\large $\mathbf{\gamma_w^*(u)}$};
\node at (2.7,-1.4) {\large $\mathbf{\epsilon_2}$};
\node at (4.5,-.5) {\large $\mathbf{\epsilon_1}$};
\node at (4.7,-1.4) {\large $\mathbf{\gamma_w(v)}$};
\node at (5.5,-.4) {\large $\mathbf{\gamma_w(u)}$};

\draw[-latex](0,0) to(5.6,0);
\draw[-](0,-.1) to(0,.2);
\draw[-](1,-.1) to(1,.2);
\draw[-](1,-1.1) to(1,-.8);
\draw[-](1,-1) to(4.5,-1);
\draw[-](3.7,-.1) to(3.7,.2);
\draw[-](4.5,-1.1) to(4.5,-.8);
\draw[-](3.7,-.8) to(3.7,-.5);
\draw[-](3.7,-.7) to(5,-.7);
\draw[-](5,-.8) to(5,-.5);

\end{tikzpicture}
}
\figcaption{Error in score}
\label{fig:easyim_relative}
\end{figure}
}
\begin{theorem}
\label{thm:easyim_relative}
Given the expected spread of $u$ is greater when compared to the spread of $v$ using the {\kempegreedy} algorithm i.e., $\sigma^*(u) > \sigma^*(v)$, the corresponding expected spreads achieved using seeds selected by any approximate score-assignment algorithm preserve this relationship i.e., $\sigma(u) > \sigma(v)$, if $\frac{\epsilon_v}{\sigma^*(v)} - \frac{\epsilon_u}{\sigma^*(u)} \leq \frac{\sigma^*(u)-\sigma^*(v)}{\sigma^*(v)}$, where $\epsilon_u=\sigma(u) - \sigma^*(u)$ and $\epsilon_v=\sigma(v) - \sigma^*(v)$.
\end{theorem}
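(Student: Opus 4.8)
The plan is to treat this as a purely algebraic comparison of four scalars, since both the hypothesis and the conclusion involve only $\sigma^*(u),\sigma^*(v),\epsilon_u,\epsilon_v$ together with the positivity of the exact spreads. First I would eliminate the approximate quantities by rewriting them through the error definitions directly: $\sigma(u)=\sigma^*(u)+\epsilon_u$ and $\sigma(v)=\sigma^*(v)+\epsilon_v$. Substituting these into the target inequality $\sigma(u)>\sigma(v)$ collapses the claim to the equivalent statement $\sigma^*(u)-\sigma^*(v) > \epsilon_v-\epsilon_u$, which no longer mentions $\sigma$ at all.

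Next I would divide this reduced inequality by $\sigma^*(v)$, a legitimate order-preserving step because the exact expected spread of a nonempty seed is strictly positive. This produces what I will call the exact threshold condition $\frac{\epsilon_v}{\sigma^*(v)}-\frac{\epsilon_u}{\sigma^*(v)} < \frac{\sigma^*(u)-\sigma^*(v)}{\sigma^*(v)}$, which is nearly the hypothesis stated in the theorem; the only discrepancy is that the hypothesis carries the denominator $\sigma^*(u)$ on the $\epsilon_u$ term rather than $\sigma^*(v)$. So the real work is to show that the stated hypothesis implies this exact threshold condition.

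The crux is therefore bridging the denominator mismatch, and here I would use that the score-assignment error is an overcount: the $PU$ and \emph{EaSyIM} estimates accumulate contributions of non-disjoint and cyclic paths, as quantified in Lemmas~\ref{thm:path_union_dags} and~\ref{thm:path_union_graphs}, so I would take $\epsilon_u\ge 0$. Combined with the hypothesis $\sigma^*(u)>\sigma^*(v)>0$, this gives $\frac{\epsilon_u}{\sigma^*(u)}\le\frac{\epsilon_u}{\sigma^*(v)}$, hence $\frac{\epsilon_v}{\sigma^*(v)}-\frac{\epsilon_u}{\sigma^*(v)}\le \frac{\epsilon_v}{\sigma^*(v)}-\frac{\epsilon_u}{\sigma^*(u)}$. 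Chaining this with the stated hypothesis $\frac{\epsilon_v}{\sigma^*(v)}-\frac{\epsilon_u}{\sigma^*(u)}\le\frac{\sigma^*(u)-\sigma^*(v)}{\sigma^*(v)}$ recovers the exact threshold condition, and multiplying back by $\sigma^*(v)$ re-establishes $\sigma^*(u)-\sigma^*(v)\ge\epsilon_v-\epsilon_u$, i.e. $\sigma(u)\ge\sigma(v)$.

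The main obstacle I anticipate is the gap between the non-strict $\le$ in the hypothesis and the strict $>$ demanded in the conclusion, compounded by the sign assumption on the errors. I would close the strictness gap by invoking the strict separation $\sigma^*(u)>\sigma^*(v)$, which leaves genuine slack in the chain whenever the combined relative error is bounded strictly below the normalized spread gap; and I would justify $\epsilon_u\ge 0$ via the overcounting behaviour established in the preceding lemmas. If one prefers to allow errors of either sign, the same chain survives after replacing each $\epsilon$ by its magnitude and reading the hypothesis as a worst-case relative-error bound, which is in fact the sense in which the condition is meant to be applied when ranking candidate seeds in the greedy selection.
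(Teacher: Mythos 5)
Your proposal is correct and follows essentially the same route as the paper's own proof: rewrite $\sigma = \sigma^* + \epsilon$, clear the denominator $\sigma^*(v) > 0$, use $\sigma^*(u) > \sigma^*(v)$ to absorb the mismatched denominator on the $\epsilon_u$ term, and conclude $\epsilon_v - \epsilon_u \leq \sigma^*(u) - \sigma^*(v)$, i.e., $\sigma(v) \leq \sigma(u)$. If anything you are more careful than the paper, which silently requires $\epsilon_u \geq 0$ at precisely the step you justify via the overcounting behaviour of Lemmas~\ref{thm:path_union_dags} and~\ref{thm:path_union_graphs}, and whose own derivation likewise ends with the non-strict $\sigma(v) \leq \sigma(u)$ despite the strict claim in the statement---the strictness slippage you correctly flag.
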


\begin{proof}
Let $\epsilon_u$ and $\epsilon_v$ be the errors introduced in $u$ and $v$ respectively  using any approximate score-assignment algorithm. Given $\sigma^*(u) > \sigma^*(v)$, for $\sigma(u)$ and $\sigma(v)$ to violate this relationship the relative error introduced in $v$ ($\frac{\epsilon_v}{\sigma^*(v)}$) should exceed the relative error introduced in $u$ ($\frac{\epsilon_u}{\sigma^*(u)}$) by the amount of gap between $u$ and $v$ relative to $v$. Figure~\ref{fig:easyim_relative} portrays this observation. Mathematically,
\begin{align}
\label{eq:easyim_relative}
&\frac{\epsilon_v}{\sigma^*(v)} - \frac{\epsilon_u}{\sigma^*(u)} \leq \frac{\sigma^*(u) - \sigma^*(v)}{\sigma^*(v)}.\ Since,\ \sigma^*(u)>\sigma^*(v)\nonumber\\
&\Rightarrow \epsilon_v - \epsilon_u \leq \sigma^*(u) - \sigma^*(v) \Rightarrow \sigma(v) \leq \sigma(u).
\end{align}
\ignore{
The first part of the proof is as follows: \textbf{Fight}
\begin{align}
\gamma_w(u) > \gamma_w(v) &\Rightarrow \gamma^*_w(u) + \epsilon_u > \gamma^*_w(v) + \epsilon_v\nonumber\\
		&=\gamma^*_w(u) - \gamma^*_w(v) > \epsilon_v - \epsilon_u\nonumber\\
		&=\frac{\gamma^*_w(u) - \gamma^*_w(v)}{\gamma^*_w(v)} > \frac{\epsilon_v}{\gamma^*_w(v)} - \frac{\epsilon_u}{\gamma^*_w(v)}
\end{align}
Now we cannot move further.
}

Thus, if the above condition holds then the errors introduced by any approximate score-assignment algorithm preserves the relative ordering between $\sigma^*$ and $\sigma$.
\hfill{}
\end{proof}

\subsubsection{Discussion}
\label{subsubsec:discussion}

Eq.~\ref{eq:easyim_relative} (theorem~\ref{thm:easyim_relative}) presents the condition for any approximate score-assignment algorithm to maintain the same relative ordering of nodes as produced by the greedy-gold standard\footnote{{\kempegreedy} ranks the nodes in the order of their expected influence.}. This condition evaluates to $\epsilon_1 - \epsilon_2 \leq \frac{\sigma^*(u)}{\sigma^*(v)}-1 $, where $\epsilon_1 = \frac{\epsilon_v}{\sigma^*(v)}$ and $\epsilon_2=\frac{\epsilon_u}{\sigma^*(u)}$ are the relative errors incurred in the scores of $v$ and $u$ respectively. This condition could further be simplified, by putting $\epsilon_1 - \epsilon_2 = \epsilon$, to $\sigma^*(v) \leq \frac{\sigma^*(u)}{1+\epsilon} $. Let us now analyze the above mentioned condition and identify the cases where it gets violated. Firstly, when $\epsilon <<1$, i.e., $1 + \epsilon \approx 1$, the condition always holds as it is given that $\sigma^*(u) > \sigma^*(v)$. Now, when $\epsilon$ is large, we can rewrite the condition as $\epsilon \leq \frac{\sigma^*(u)}{\sigma^*(v)} -1$. A large $\epsilon$ means that the difference of relative errors in the two chosen nodes is high. Given this, if $\sigma^*(v)$ and $\sigma^*(u)$ are close then choosing any one of $u$ and $v$ as a seed-node would not impact the expected spread much, thus, producing a correct ordering is no longer important. On the other hand, when $\sigma^*(v)$ and $\sigma^*(u)$ are far apart, we can safely assume $\sigma^*(v)$ to be equal to $\sigma^*(u)$ in the worst case\footnote{The chance of any score-assignment algorithm to distort the relative ordering by introducing errors would be the highest when $\sigma^*(u)$=$\sigma^*(v)$.}. Finally, the only case for which the condition in Eq.~\ref{eq:easyim_relative} gets violated is when $\epsilon > 0$. To better explain the previous result, we provide an in-depth analysis of lemma's~\ref{thm:path_union_dags},~\ref{thm:easyim}, and~\ref{thm:path_union_graphs} to get an expression for the relative error. Please note that, unlike above, all the analyses in the following discussion are based on the score of a node $v$ w.r.t contribution of another node $w$.

Using lemma~\ref{thm:path_union_dags} and~\ref{thm:easyim}, it is evident that the relative error, for DAGs, introduced by $v$ in the score of $u$ using the \emph{EaSyIM} algorithm is $\leq\sum_{w \in \In(v)}(2p{(w,v)}-1)A_1$. To quantify this error, we evaluate $A_1$ under different information-diffusion models. Considering the average degree of the underlying graph to be $\eta$, the expected value of the total number of $u$-$v$ paths of a given length ($l$) can be upper-bounded by $\eta^{l-1}$. Under the IC model, the contribution of a path of length $l$ is $\prod p_e = p^l$ where $p = 0.1$ which simplifies $A_1$ to $\sum_{i=2}^{l-1}\eta^{i-1}p^i$. It can be seen that this error is not as large as it looks, since the number of paths usually do not increase exponentially (w.r.t $\eta$) with the increase in length. Let us consider a graph with small value of $\eta$, i.e., $\eta p < 1$ and analyze the growth of $A_1$ with the addition of paths of increasing length. With the addition of paths of a given length $i$, $A_1$ increases by $(\eta p)^{i-1}p$, which in turn decreases exponentially since $\eta p <1$. Thus, we can observe that $A_1$ grows sub-logarithmically. On the contrary, when $\eta$ is large it dominates the growth of $A_1$ and thus, the error increases with the increase in path-length. However, the path-length is upper-bounded by the diameter $\diameter$ of the graph, which is usually small for large $\eta$ \cite{diameter_degree}. Hence, the overall error introduced by the \emph{EasyIM} algorithm is upper-bounded by $\eta(2p-1)A_1$, which is small in case of DAGs. Similarly, lemma~\ref{thm:path_union_graphs} computes the error introduced by $v$ in the score of $u$ due to cycles by the $PU$ algorithm. Eq.~\ref{eq:cycle} can be re-written as $\sum_{i=2}^l(\eta^{i-1}p^i)/i$, which is similar to the error term evaluated for DAGs. Using similar analysis as above, this term is also small. Moreover, \emph{EaSyIM} incurs an additional error owing to the cycles of varying lengths that contain $u$. It is evident that the error due to these cycles is similar to that of the $PU$ algorithm. Thus, the total error introduced by \emph{EaSyIM} is not large for graphs as well.

\begin{figure}[t]
	\centering
	\subfloat[Non-Disjoint Paths]
	{
		\scalebox{0.27}{
			\begin{tikzpicture}[node distance=15mm,
			round/.style={fill=green!50!black!20,draw=green!50!black,minimum size=5mm,text width=10mm,align=center,circle,thick}]
			\centering

			\node[round,scale=1.2] (u) {\huge $\mathbf{u}$};
			\node[round,scale=1.2] (w) [below = 1.5cm of u]{\huge $\mathbf{w_i}$};
			\node[round,scale=1.2] (v) [below = 0.6cm of w]{\huge $\mathbf{v}$};
			\node[round, scale=1.2] (w1) [left = 2.5cm of w]{\huge $\mathbf{w_1}$};
			\node[round, scale=1.2] (w2) [right = 2.5cm of w]{\huge $\mathbf{w_{|\In(v)|}}$};

			\draw [snake arrow](u) to [ bend right=15] (w);
		
			\draw [snake arrow](u) to [ bend left=15] (w);

			\draw [snake arrow](u) to [ bend right=5] (w2);		
		
			\draw [snake arrow](u) to [ bend left=25] (w2);		

			\draw [snake arrow](u) to [ bend right=25] (w1);		

			\draw [snake arrow](u) to [ bend left=5] (w1);		



			\draw[loosely dotted,thick, line width=2.5pt](-.2,-1.4) -- (.2,-1.4);
			\draw[loosely dotted,thick, line width=2.5pt](-2.2,-3.2) -- (-1.6,-3.2);
			\draw[loosely dotted,thick, line width=2.5pt](1.8,-3.2) -- (2.4,-3.2);

			\draw[-latex,-triangle 45, line width=1.4pt](w1) to (v);
			\draw[-latex,-triangle 45, line width=1.4pt](w) to (v);
			\draw[-latex,-triangle 45, line width=1.4pt](w2) to (v); 			
	  
			\end{tikzpicture}
		}
		\label{fig:non_disjoint_paths}
	}
	\subfloat[Cycles]
	{
		\centering
		\scalebox{0.35}{
			\begin{tikzpicture}[node distance=15mm,
			round/.style={fill=green!50!black!20,draw=green!50!black,minimum size=5mm,text width=10mm,align=center,circle,thick}]
			\centering
			\node at (1.4,-0.8) {\huge $\mathbf{p_{(u,w)}}$};
			\node at (4.0,-0.8) {\huge $\mathbf{p_{(w,v)}}$};
			\node at (3.9,1.5) {\huge $\mathbf{p_{(v,w)}}$};

			\node[round] (u) {\huge $\mathbf{u}$};
			\node[round] (w) [right = 1.25cm of u]{\huge $\mathbf{w}$};
			\node[round] (v) [right = 1.25cm of w]{\huge $\mathbf{v}$};

			\draw[-latex,-triangle 45, line width=1.4pt](u) to (w);
			\draw[-latex,-triangle 45,line width=1.4pt](w) to (v);
			\draw[-latex,-triangle 45,line width=1.4pt](v) to[bend right = 60] (w);
				  
			\end{tikzpicture}
		}
		\label{fig:cycle}
	}
	\subfloat[Relative Error]
	{
		\scalebox{0.4}{
		\begin{tikzpicture}[node distance=15mm,
		round/.style={fill=green!50!black!20,draw=green!50!black,minimum size=5mm,text width=10mm,align=center,circle,thick}]
		\centering
		\node at (0,-0.2) {\huge $\mathbf{0}$};
		\node at (1,1.2) {\huge $\mathbf{\gamma_w^*(v)}$};
		\node at (3.7,1.2) {\huge $\mathbf{\gamma{\huge_w^*(u)}}$};
		\node at (2.7,-1.4) {\huge $\mathbf{\epsilon_2}$};
		\node at (4.4,-.2) {\huge $\mathbf{\epsilon_1}$};
		\node at (4.7,-1.4) {\huge $\mathbf{\gamma_w(v)}$};
		\node at (5.85,-.15) {\huge $\mathbf{\gamma_w(u)}$};

		\draw[-latex,-triangle 45, line width=1.4pt](0,.5) to(5.6,.5);
		\draw[-,line width=1.4pt](0,.2) to(0,.8);
		\draw[-,line width=1.4pt](1,.2) to(1,.8);
		\draw[-,line width=1.4pt](1,-1.2) to(1,-.8);
		\draw[-,line width=1.4pt](1,-1) to(4.5,-1);
		\draw[-,line width=1.4pt](3.7,.2) to(3.7,.8);
		\draw[-,line width=1.4pt](4.5,-1.2) to(4.5,-.8);
		\draw[-,line width=1.4pt](3.7,-.7) to(3.7,-.3);
		\draw[-,line width=1.4pt](3.7,-.5) to(5,-.5);
		\draw[-,line width=1.4pt](5,-.7) to(5,-.3);	  

		\end{tikzpicture}
		}
		\label{fig:easyim_relative}
	}
	\figcaption{\textbf{Error analysis for the score-assignment step.}}
\end{figure}
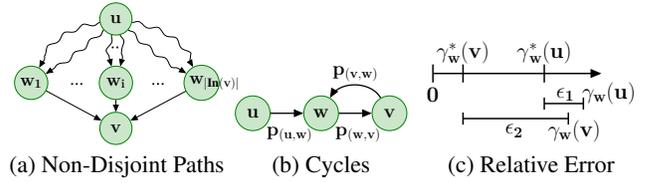

Under the WC model, we can safely assume that $\eta p=1$ as the probability associated with each edge $e=(u,v)$ is $1/$in-degree$(v)$$\approx1/\eta$. The analysis for the IC model holds for this case as well and thus, the error is small. Our results are even stronger for the LT model, as \emph{EaSyIM} always maintains the same ordering of nodes as produced by the {\kempegreedy} algorithm for DAGs. This clearly follows from the live-edge model, where each node possesses at most one incoming edge and thus, in our analysis all the $\cup$'s can be replaced by $\sum$'s which in turn eradicates all the sources of errors for DAGs (and other simpler structures as well). One can also observe that using a similar analysis, the error introduced by \emph{EaSyIM} for tree-like structures is $0$ for the IC and WC models as well. Moreover, the errors introduced for graphs are due to the cycles alone, and following the similar analysis as above (edge probabilities are $1/$in-degree), the overall error remains small.

Now we revisit the expression $\epsilon >0$. Since $\epsilon_1$ is not close to $\epsilon_2$, we can conclude, owing to the above analysis that the error increases significantly with length only if the number of paths increase exponentially, that there exist a large number of longer paths contributing to $\epsilon_1$. Let there be $t_1$ paths of length $l_1$ which contribute towards $\epsilon_1$ and $t_2$ paths of  length $l_2$ which contribute towards $\epsilon_2$. Assuming that the other set of paths introduce an equal amount of error in both $u$ and $v$, the expression $\epsilon >0$ reduces to $t_1 p^{l_1} - t_2p^{l_2} > 0$ for the IC model. Thus, the condition in Eq.~\ref{eq:easyim_relative} gets violated if the length of the paths contributing to $v$, i.e., $l_2<l_1<l_2-\log_p(\frac{t_1}{t_2})$. Considering an alternate analysis, we can see that if there are $t$ paths of length $l$, with an individual contribution of $p^l$, ending at a node $v$, then the correct score would be $\cup p^l = 1 - (1-p^l)^t$. Since $p^l<<1$ this expression reduces to $\approx 1 - (1-tp^l) = tp^l$. This is the same as the score assigned using the \emph{EaSyIM} algorithm.

In summary, we identify the conditions where the error introduced by \emph{EaSyIM} does not distort the relative ordering of nodes, as produced by the greedy gold standard. For the cases, where this relative ordering cannot be maintained we show that the error introduced is small, and thus the overall impact to the spread of a given set of seed-nodes is also small. Before analyzing \emph{OSIM}, we state the conclusions drawn on the results presented in this section.

\begin{conclusion}
Score-assignment using the \emph{EaSyIM} algorithm captures the expected spread of a node for tree-like structures, under the IC, WC and LT models of information diffusion. Thus, the {\ourgreedy} algorithm produces a $1-1/e$ approximate solution to the IM problem for tree-like structures.
\end{conclusion}

\begin{conclusion}
Score-assignment using the \emph{EaSyIM} algorithm captures the expected spread of a node for DAGs under the LT model. Thus, the {\ourgreedy} algorithm produces a $1-1/e$ approximate solution to the IM problem for DAGs under the LT model.
\end{conclusion}






\subsubsection{OSIM}

Since the analysis over generic graphs is quite complex, we analyze \emph{OSIM} on DAGs with fixed value of $\varphi_{(u,v)}$. Moreover, since \emph{OSIM} is fundamentally similar to EaSyIM and the analysis involving opinions is complex, thus, we reduce (without introducing any limitation in the algorithm) the analysis effort with results on a single path and not for all possible paths from a node. These results can further be easily extended.

Both the below mentioned results talk about the expected spread of a single node. Since the overall function is neither submodular nor monotone, we cannot employ the use of a greedy algorithm to achieve a constant approximation. (Proof in Sec.~\ref{sec:problem}).

The following notation would assume these definitions for the rest of this section. $\delta_j(\{A\})$ is the \emph{Dirac}\footnote{The Dirac measure $\delta_j(\{A\})$ is defined for any set $A$, where $\delta_j(\{A\})=1$ if $j \in A$ and $0$ otherwise.} measure, $\psi_{i} = \frac{2\varphi_{(u_{i,i+1})}-1}{2}$ and $\thatsymbol_{j} = p_{(u_j,u_{j+1})}$.



\ignore{
\begin{align*}
\sigma_l^o = \sum_{i=1}^{l} \bigg(\Big( \prod_{j=1}^i \thatsymbol_{j-1} \Big) \Big( \sum_{j=0}^{i} \frac{o_{u_j}}{2}(1+\delta_j{\{0\}})\prod_{k=1}^{i-j} \frac{(2\psi_{i-k}-1)}{2}  \Big) \bigg)
\end{align*}
}

\begin{lemma}
\label{lem:osim1}
Given a path of length $l$, consisting of nodes $u_0, u_1,$ $\ldots,u_l$, with the penalty parameter on opinion spread, $\lambda = 1$, the effective opinion spread under the $OI$ model of information-diffusion, considering this path, on choosing $u_0$ as the seed node is $\sigma^o(\{u_0\}) = \sum\limits_{i=1}^{l} \bigg(\Big( \prod\limits_{j=1}^i \thatsymbol_{j-1} \Big) \sum\limits_{j=0}^{i} \Big(\frac{o_{u_j}}{2}\big(1+\delta_j(\{0\})\big)\prod\limits_{k=1}^{i-j} \psi_{i-k} \Big) \bigg)$.
\end{lemma}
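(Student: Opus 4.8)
The plan is to track, along the single path $u_0 \to u_1 \to \cdots \to u_l$, the expected final opinion $\expectation[o'_{u_i}]$ of each node, multiply it by the probability that $u_i$ is activated through the path, and sum the contributions over the non-seed nodes $u_1,\ldots,u_l$. First I would observe that with $\lambda = 1$, Definition~\ref{def:eop_spread} collapses to $\flow^o_\lambda(S) = \sum_{v\in V_{(a)}\setminus S} o'_v$, because $-|o'_v| = o'_v$ for negatively oriented nodes; hence the effective opinion spread is just the expected sum of all final opinions and I need not separate positive from negative contributions.

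Next I would extract the one-step recursion from the OI/IC dynamics of Sec.~\ref{subsec:model}. When $u_i$ activates $u_{i+1}$, the final opinion is $o'_{u_{i+1}} = \big(o_{u_{i+1}} + (-1)^{\alpha} o'_{u_i}\big)/2$ with $\alpha = 0$ w.p. $\varphi_{(u_i,u_{i+1})}$ and $\alpha = 1$ otherwise. Since $\expectation[(-1)^{\alpha}] = 2\varphi_{(u_i,u_{i+1})} - 1 = 2\psi_i$, and the agreement coins $\alpha$ are independent across edges, taking expectations yields
\[
\expectation[o'_{u_{i+1}}] = \frac{o_{u_{i+1}}}{2} + \psi_i\,\expectation[o'_{u_i}], \qquad \expectation[o'_{u_0}] = o_{u_0},
\]
the base case reflecting that the seed retains its personal opinion undivided. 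Solving this linear recursion by induction on $i$ (or by unrolling) gives the closed form
\[
\expectation[o'_{u_i}] = \sum_{j=0}^{i} \frac{o_{u_j}}{2}\big(1 + \delta_j(\{0\})\big)\prod_{k=1}^{i-j}\psi_{i-k}.
\]
Here the Dirac term is precisely what absorbs the base-case asymmetry: for $1\le j\le i$ the coefficient of $o_{u_j}$ is $\tfrac12\prod_{k=1}^{i-j}\psi_{i-k}$, while for $j=0$ the seed contributes with coefficient $\tfrac12(1+\delta_0(\{0\})) = 1$. I would check the index bookkeeping via $\prod_{k=1}^{i-j}\psi_{i-k} = \psi_{i-1}\psi_{i-2}\cdots\psi_{j}$, the chain of multipliers picked up between level $j$ and level $i$.

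Finally, under the IC model on a single path, $u_i$ is activated iff every edge on the prefix fires, an event of probability $\prod_{j=1}^{i}\thatsymbol_{j-1} = p_{(u_0,u_1)}\cdots p_{(u_{i-1},u_i)}$. Because the opinion recursion is conditioned on activation and the edge-firing coins are independent of the agreement coins, the expected contribution of $u_i$ factorizes as $\big(\prod_{j=1}^{i}\thatsymbol_{j-1}\big)\expectation[o'_{u_i}]$. Summing over $i = 1,\ldots,l$ (the seed $u_0$ is barred from the spread) reproduces the claimed expression for $\sigma^o(\{u_0\})$. The only delicate points are keeping the $j=0$ base case straight through the Dirac measure, and justifying the factorization of activation probability and expected opinion, which holds since along a simple path the firing coins and the $\alpha$ coins are mutually independent.
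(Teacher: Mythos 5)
Your proposal is correct and follows essentially the same route as the paper's proof: the same one-step recursion $\expectation[o'_{u_{i+1}}] = \tfrac{o_{u_{i+1}}}{2} + \psi_i\,\expectation[o'_{u_i}]$ with seed base case $o'_{u_0}=o_{u_0}$, the same telescoped closed form with the Dirac term absorbing the $j=0$ asymmetry, and the same weighting of each node's expected opinion by the prefix activation probability $\prod_{j=1}^{i}\thatsymbol_{j-1}$. Your explicit justifications -- the $\lambda=1$ collapse of Definition~\ref{def:eop_spread} to a plain sum and the independence of the edge-firing and agreement coins underlying the factorization -- are points the paper leaves implicit, but they do not change the argument.
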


\begin{proof}
It is evident that the contribution of a path of length $l$ in the expected opinion spread of a seed node $u_0$, is the sum of expected effective opinions of nodes in that path. Now we try to solve the following recursive relation to get a closed expression for the expected opinion of the nodes.
\begin{align*}
o_{u_i}' &= \frac{o_{u_i}}{2} + \psi_{i-1} o_{u_{i-1}}'\\
o_{u_{i-1}}' &= \frac{o_{u_{i-1}}}{2} + \psi_{i-2} o_{u_{i-2}}'\\
&.\\
&.\\
&.\\ 
o_{u_1}' & = \frac{o_{u_1}}{2} + \psi_{0} o_{u_{0}}'
\end{align*}
On solving the telescopic sum and substituting $o_{u_0}' = o_{u_0}$
\begin{align*}
o_{u_i}' &= \sum_{j=1}^{i}\Big( \frac{ o_{u_j}}{2} \prod_{k=1}^{i-j} \psi_{i-k}\Big) + o_{u_0} \prod_{k=1}^{i} \psi_{i-k}\\
&=  \sum_{j=0}^{i}\Big( \frac{ (1+\delta_j(\{0\}))  o_{u_j}}{2} \prod_{k=1}^{i-j} \psi_{i-k}\Big).
\end{align*}
\begin{align*}
\sigma^o(\{u_0\}) &= \sum_{i=1}^{l} \bigg(o_{u_i}'\prod_{j=1}^i \thatsymbol_{j-1}\bigg)\\
&= \sum\limits_{i=1}^{l} \bigg(\Big( \prod\limits_{j=1}^i \thatsymbol_{j-1} \Big) \sum\limits_{j=0}^{i} \Big(\frac{o_{u_j}}{2}\big(1+\delta_j(\{0\})\big)\prod\limits_{k=1}^{i-j} \psi_{i-k} \Big) \bigg).
\end{align*}

\ignore{
\begin{align*}
OIspread &= \sum_{i=1}^{d} \left(o_{u_i}'\prod_{j=1}^i p_{(u_{j-1},u_j)}\right)\\
& = \sum_{i=1}^{d} \left(\left( \prod_{j=1}^i p_{(u_{j-1},u_j)} \right) \left( \sum_{j=0}^{i} \frac{(1+\delta_j(\{0\}))  o_{u_j}}{2}\prod_{k=1}^{i-j} \frac{(2\varphi_{(u_{i-k,i-k+1})}-1)}{2}   \right) \right)
\end{align*}
}
\hfill{}
\end{proof}

\begin{lemma}
\label{lem:osim2}
Given a path of length $l$ and $\lambda = 1$, the score assigned to a node $u$ using the \emph{OSIM} algorithm captures the effective opinion spread, considering the contribution of this path, on choosing $u$ as the seed node. Mathematically, $\Delta^l(u) = \sigma^o(\{u\})$.
\end{lemma}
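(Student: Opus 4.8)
The plan is to exploit the fact that we work on a single path $u_0 \to u_1 \to \cdots \to u_l$, so that every node $u_j$ has exactly one out-neighbour $u_{j+1}$. Consequently each loop over $\Out(u)$ in \emph{OSIM} collapses to a single term, and the quantities $or_i$, $\alpha_i$, $sc_i$ stored at the nodes satisfy one-term recurrences in $i$. First I would obtain closed forms for these three auxiliary quantities at $u_0$ by induction on $i$, then substitute them into the score update $\Delta^i(u_0)=\Delta^{i-1}(u_0)+\frac{or_i(u_0)+sc_i(u_0)+o_{u_0}\alpha_i(u_0)}{2}$ and show that this increment equals the contribution $\big(\prod_{j=1}^{i}\thatsymbol_{j-1}\big)o'_{u_i}$ of the length-$i$ prefix appearing in Lemma~\ref{lem:osim1}. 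Summing over $i$ from $1$ to $l$ then yields $\Delta^l(u_0)=\sigma^o(\{u_0\})$.

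Concretely, writing $\thatsymbol_j=p_{(u_j,u_{j+1})}$ and $\psi_j=\frac{2\varphi_{(u_j,u_{j+1})}-1}{2}$ as in the algorithm, the recurrences $or_i(u_j)=\thatsymbol_j\,or_{i-1}(u_{j+1})$ and $\alpha_i(u_j)=\thatsymbol_j\psi_j\,\alpha_{i-1}(u_{j+1})$, together with $or_0=o_{(\cdot)}$ and $\alpha_0=1$, unwind to
\begin{align*}
or_i(u_0) = \Big(\prod_{j=1}^{i}\thatsymbol_{j-1}\Big)o_{u_i}, \qquad \alpha_i(u_0) = \Big(\prod_{j=1}^{i}\thatsymbol_{j-1}\Big)\prod_{m=0}^{i-1}\psi_m .
\end{align*}
These are routine. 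The hard part will be the $sc_i$ term, because the propagation step $sc_i(u)\leftarrow sc_i(u)+p_{(u,v)}sc_{i-1}(v)$ carries $sc_{i-1}$ from the out-neighbour, while the subsequent step $sc_i(u)\leftarrow sc_i(u)+o_u\alpha_i(u)$ folds in a ``self'' contribution, so $sc_i$ mixes the two layers. I would prove by induction on $i$, simultaneously for every node $u_j$ on the path, that
\begin{align*}
sc_i(u_j) = \Big(\prod_{t=j}^{j+i-1}\thatsymbol_t\Big)\sum_{a=j}^{j+i-1}o_{u_a}\prod_{m=a}^{j+i-1}\psi_m ,
\end{align*}
the step being $sc_i(u_j)=\thatsymbol_j\,sc_{i-1}(u_{j+1})+o_{u_j}\alpha_i(u_j)$, where the new summand $o_{u_j}\alpha_i(u_j)$ supplies precisely the $a=j$ term. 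Specialising to $j=0$ gives $sc_i(u_0)=\big(\prod_{j=1}^{i}\thatsymbol_{j-1}\big)\sum_{a=0}^{i-1}o_{u_a}\prod_{m=a}^{i-1}\psi_m$.

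The remaining step is bookkeeping against Lemma~\ref{lem:osim1}. Using $\prod_{k=1}^{i-j}\psi_{i-k}=\prod_{m=j}^{i-1}\psi_m$, the closed form of the final opinion $o'_{u_i}$ splits into a seed term ($j=0$, doubled because $\delta_0(\{0\})=1$), intermediate terms ($1\le j\le i-1$), and a tail term ($j=i$, an empty $\psi$-product). I would then match these against the score increment: the summand $\tfrac12 or_i(u_0)$ delivers the tail $\tfrac12 o_{u_i}$; the summand $\tfrac12 sc_i(u_0)$ delivers all the intermediate terms together with one copy of the $j=0$ term; and the extra $\tfrac12 o_{u_0}\alpha_i(u_0)$ supplies the second copy of the $j=0$ term, realising exactly the factor $1+\delta_0(\{0\})=2$ that the seed node requires. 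Hence
\begin{align*}
\frac{or_i(u_0)+sc_i(u_0)+o_{u_0}\alpha_i(u_0)}{2} = \Big(\prod_{j=1}^{i}\thatsymbol_{j-1}\Big)o'_{u_i},
\end{align*}
and summing the telescoped score updates gives $\Delta^l(u_0)=\sum_{i=1}^{l}\big(\prod_{j=1}^{i}\thatsymbol_{j-1}\big)o'_{u_i}=\sigma^o(\{u_0\})$. The single subtlety to watch is the double appearance of $o_{u_0}\alpha_i(u_0)$ (once absorbed into $sc_i$ and once written explicitly in the score update); this is precisely the mechanism that assigns the seed the weight $2$ rather than $1$, and getting this accounting right is the crux of the verification.
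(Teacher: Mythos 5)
Your proposal is correct and follows essentially the same route as the paper's proof: unwind the one-term recurrences for $or_i$, $\alpha_i$, and $sc_i$ along the path to closed forms (your simultaneous induction over all $u_j$ is just the paper's ``telescopic sum'' made explicit), then combine them in the score update and match against the closed form of $o'_{u_i}$ from Lemma~\ref{lem:osim1}. You also correctly identify the key accounting point — the term $o_{u_0}\alpha_i(u_0)$ appearing once inside $sc_i$ and once explicitly in the update, realising the factor $1+\delta_0(\{0\})=2$ on the seed — which is exactly how the paper's derivation resolves it.
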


\begin{proof}
In order to prove this, we first obtain an expression of the score assigned to a node as a function of $u_i,\ i\in \{0,1,\ldots, l\}$. From the line $6$ of Algorithm~\ref{alg:OSIM}, we evaluate the expression for $or_i(u_k)$ as follows:
$or_{i}(u_k) = or_{i-1}(u_{k+1}) p_{(u_k,u_{k+1})} = or_{i-1}(u_{k+1}) \thatsymbol_{k} $.
This expression can be written recursively as follows:

\begin{align}
\label{eq:or_recursive}
or_{i}(u_k) &= or_{i-1}(u_{k+1}) \thatsymbol_{k}\nonumber\\
or_{i-1}(u_{k+1}) &= or_{i-2}(u_{k+2}) \thatsymbol_{k+1}\nonumber\\
&.\nonumber\\
&.\nonumber\\
&.\nonumber\\ 
or_{1}(u_{k+i-1}) &= or_{0}(u_{k+i} ) \thatsymbol_{k+i-1}
\end{align}
On solving the telescopic sum and putting $or_0(u_k) = o_{u_k}$, we get
\begin{align*}
or_i(u_k) & = o_{u_{k+i}}\prod_{j=0}^{i-1}\thatsymbol_{k+j}.
\end{align*}
We evaluate $\alpha$ using the line $7$ of Algorithm~\ref{alg:OSIM}, and obtain a similar set of recursive expressions as in Eq.~\ref{eq:or_recursive}.
\ignore{
\begin{align*}
\alpha_{i}(u_k) &= \alpha_{i-1}(u_{k+1}) \thatsymbol_{k}\psi_k\\
\alpha_{i-1}(u_{k+1}) &= \alpha_{i-2}(u_{k+2}) \thatsymbol_{k+1}\psi_{k+1}\\
&.\\
&.\\
&.\\ 
\alpha_{1}(u_{k+i-1}) &= \alpha_{0}(u_{k+i} ) \thatsymbol_{k+i-1}\psi_{k+i-1}
\end{align*}
}
On solving the telescopic sum and putting $\alpha_0(u_k) = 1$, we obtain the following:
\begin{align*}
\alpha_i(u_k) & = \prod_{j=0}^{i-1}\thatsymbol_{k+j}\psi_{k+j}.
\end{align*}
Now using line $10$ of Algorithm~\ref{alg:OSIM}, we obtain an expression of $sc_{i}(u_k)$ defined as follows:
\begin{align*}
sc_{i}(u_k) &= sc_{i-1}(u_{k+1}) \thatsymbol_{k} + o_{u_k} \alpha_i(u_k).
\end{align*}
Putting the evaluated value of $\alpha$ in the above expression, we again obtain a similar set of recursive expressions as in Eq.~\ref{eq:or_recursive}.
\ignore{
\begin{align*}
sc_{i}(u_k) =& sc_{i-1}(u_{k+1}) \thatsymbol_{k} + o_{u_k} \prod_{j=0}^{i-1}\thatsymbol_{k+j}\psi_{k+j} \\
sc_{i-1}(u_{k+1}) =& sc_{i-2}(u_{k+2}) \thatsymbol_{k+1} + o_{u_{k+1}} \prod_{j=0}^{i-2}\thatsymbol_{k+j+1}\psi_{k+j+1} \\
&.\\
&.\\
&.\\ 
sc_{1}(u_{k+i-1}) =& o_{u_{k+i-1}}\thatsymbol_{k+i-1} \psi_{k+i-1}
\end{align*}
\vspace{0.5mm}
}
On solving the telescopic sum, we get:
\begin{align*}
sc_i(u_k) & = \sum_{j=0}^{i-1}\bigg (o_{u_{k+j}}\big(\prod_{r=0}^{j-1}\thatsymbol_{k+r}\big) \bigg)\bigg( \prod_{s=0}^{i-j-1} \thatsymbol_{k+s+j}(\psi_{k+s+j})\bigg)\\
& = \prod_{j=k}^{i+k-1}  \left( \sum_{j=0}^{i-1}o_{u_{k+j}} \prod_{s=0}^{i-j-1} (\psi_{k+s+j})\right).
\end{align*}
Now we evaluate $sc_i(u_0) + or_i(u_0) + o_{u_0}\alpha_{i}(u_0)$ as:
\begin{align*}
& \prod_{j=0}^{i-1} \thatsymbol_{j} \bigg( \sum_{j=0}^{i-1}o_{u_{j}} \prod_{s=0}^{i-j-1} (\psi_{s+j})\bigg) + \prod_{j=0}^{i-1} \thatsymbol_{j} o_{u_i} + o_{u_0} \prod_{j=0}^{i-1}\thatsymbol_{j}\psi_j\\
& = \bigg( \prod_{j=1}^i\thatsymbol_{j-1}\bigg) \bigg(    \Big( \sum_{j=0}^{i}o_{u_{j}} \prod_{s=0}^{i-j-1} (\psi_{s+j})\Big) + o_{u_0} \prod_{j=0}^{i-1}\thatsymbol_{j}\psi_j  \bigg).
\end{align*}

From line $11$ of Algorithm~\ref{alg:OSIM}, it is evident that $\Delta^l(u_0) = \sum\limits_{i=0}^l\bigg(\frac{or_i(u_0)}{2}+ \frac{sc_i(u_0)}{2} + o_{u_0} \frac{\alpha_i(u_0)}{2}\bigg)$ which means:
\begin{align*}
\Delta^l(u_0) & = \sum_{i=1}^{l} \bigg(\Big( \prod_{j=1}^i \thatsymbol_{j-1} \Big) \Big( \sum_{j=1}^{i} \frac{o_{u_j}}{2}\prod_{k=1}^{i-j} \psi_{i-k} + o_{u_0} \prod_{k=1}^{i} \psi_{i-k} \Big) \bigg) \\
& = \sigma^o(\{u_0\}).
\end{align*}
\hfill{}
\end{proof}



\section{Experiments}
\label{sec:exp}
All the simulations were done using the Boost graph library \cite{boost} in C++ on an Intel(R) Xeon(R) 20-core machine with 2.4 GHz CPU and 100 GB RAM running Linux Ubuntu 12.04. We present results on real (large) graphs, taken from the arXiv \cite{arxiv} and SNAP \cite{snap} repositories, as described in Table~\ref{tab:dataset}. In addition to these, a snapshot of the \emph{Twitter} network crawled from Twitter.com in July 2009 was downloaded from a publicly available source \cite{twitter}. We consider a mix of both directed and undirected graphs, however to ensure uniformity the undirected graphs were made directed by considering, for each edge, the arcs in both the directions. As a conventional practice, the \emph{spread}, and hence the \emph{opinion-spread} as well, is calculated as an average over $10K$ Monte Carlo (MC) simulations\footnote{These instances were run in parallel on 20-cores for \emph{OSIM}, \emph{EaSyIM} and Modified-{\kempegreedy}. However, for a fair comparison with other techniques we report the total time taken.}. Next, we state the \emph{information-diffusion} models and the \emph{algorithms} used for obtaining the results described in the rest of this section.

{\bf Information-Diffusion Models}: We incorporate the use of three fundamental diffusion models, namely -- IC, WC and LT \cite{kempe}, for the \emph{opinion-oblivious} case. The IC model is used with a uniform probability $p_{(u,v)} = 0.1$, assigned to all the edges ($\forall (u,v) \in E$) of the graph $G$. On the other hand, the WC model associates a probability of $p_{(u,v)} = \frac{1}{|\In(v)|}$ with all the edges, where $|\In(v)|$ denotes the in-degree of $v$. As opposed to the IC and WC models, the LT model requires an additional parameter, apart from the weights $w_{(u,v)} = \frac{1}{|\In(v)|}$ associated with all the edges, on all the nodes of the graph specifying the node threshold $\theta_v = rand(0,1),\ \forall v \in V$. Note that the $rand(0,1)$ function generates numbers randomly and uniformly between $0$ and $1$. Moreover, all these parameter assignments follow the conventional practice in the literature \cite{kempe, celf, celfPlus, tim}. For the \emph{opinion-aware} case, we employ the use of two diffusion models, namely -- OC \cite{ovm} and OI (Sec.~\ref{subsec:model}). This choice is driven by our thorough analytical comparisons in Sec.~\ref{sec:intro} and later in Sec.~\ref{sec:related} with other related models. The details about dataset preparation using the opinion-aware model parameters -- \emph{opinion} ($o$) and \emph{interaction} ($\varphi$) are present in Sec.~\ref{subsec:exp_opinion}.

{\bf Algorithms}: We portray the effectiveness of \emph{OSIM} by comparing with the Modified-{\kempegreedy} (Appendix~\ref{app_alg:mod_greedy}) algorithm which, being tuned to maximize the \emph{effective opinion spread} (Def.~\ref{def:eop_spread}), serves as a baseline to evaluate the \emph{quality of spread} for the \emph{opinion-aware} scenario. We compare \emph{EaSyIM} for effectiveness, efficiency and scalability with a representative set of state-of-the-art algorithms and heuristics, as established by us in Secs.~\ref{sec:intro} and~\ref{sec:related}, namely -- CELF++, TIM$^+$, IRIE and SIMPATH. For all these algorithms, we adopt the C++ implementations made available by their authors.

{\bf Parameters}: The value of spread reported by the Modified-{\kempegreedy}, CELF++ \cite{celfPlus}, \emph{OSIM} and \emph{EaSyIM} algorithms is an average over $10K$ MC simulations. Thus, we report the total time taken to complete all the MC simulations for these algorithms. Unless otherwise stated, we set $l=3$ and $\lambda=1$ for the score-assignment step of \emph{OSIM} and \emph{EaSyIM}. For the TIM$^+$ \cite{tim} algorithm we use $\epsilon=0.1$. We set the IRIE parameters $\alpha$ and $\theta$ to $0.7$ and $1/320$, respectively, and SIMPATH's parameters $\eta$ and $l$ to $10^{-3}$ and $4$, respectively. All the parameters have been set according to the recommendations by the authors of \cite{tim, simpath, irie}.

\begin{figure*}[t]
\centering
	\subfloat[Twitter]
	{
		\scalebox{0.26}{
			\includegraphics[width = 0.99\linewidth]{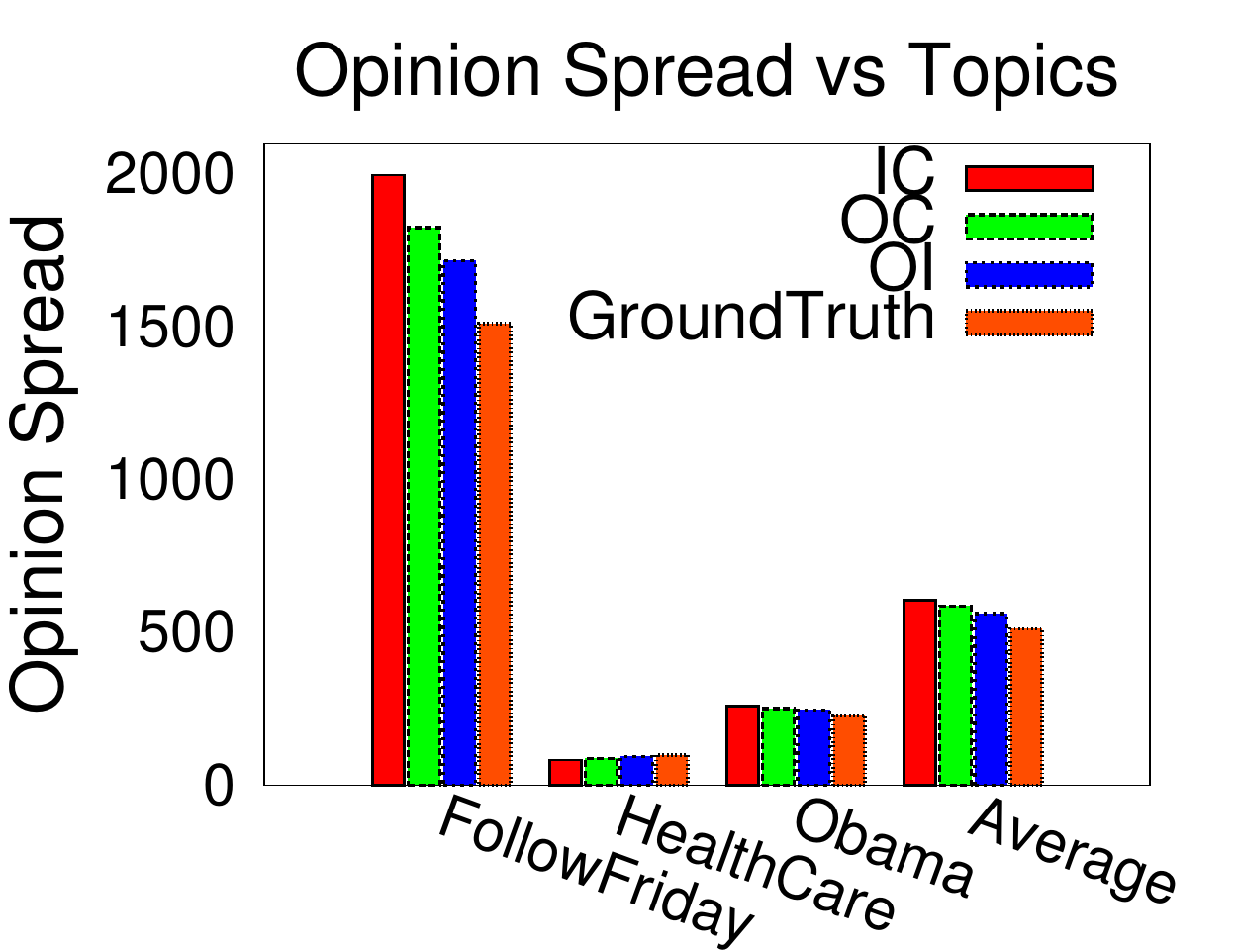}
		}
		\label{fig:twitter_eop_flow}
	}
	\subfloat[Twitter]
	{
		\scalebox{0.23}{
			\includegraphics[width = 0.99\linewidth]{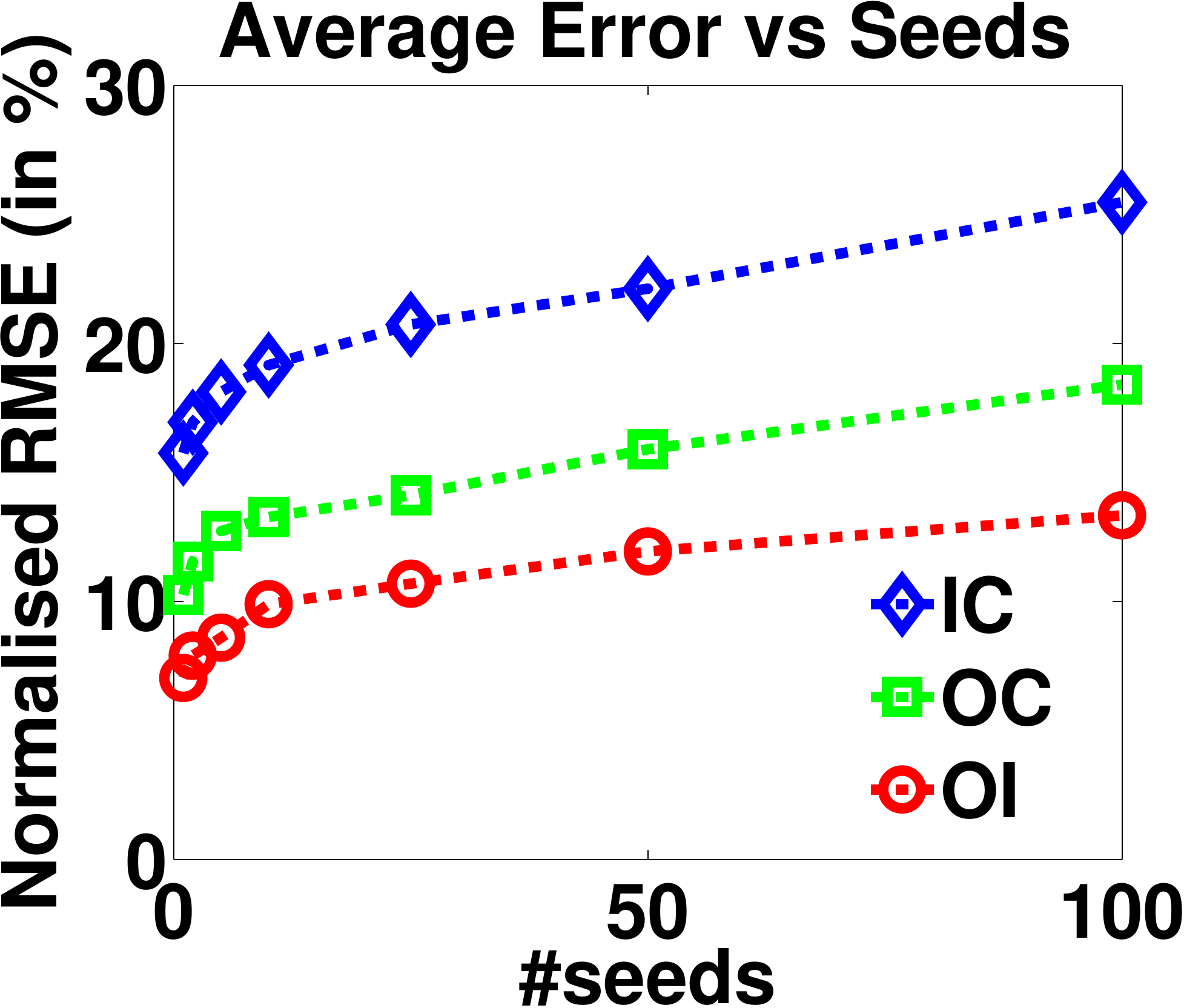}
		}
		\label{fig:twitter_eop_flow_seeds}
	}
	\subfloat[Twitter]
	{
		\scalebox{0.23}{
			\includegraphics[width = 0.99\linewidth]{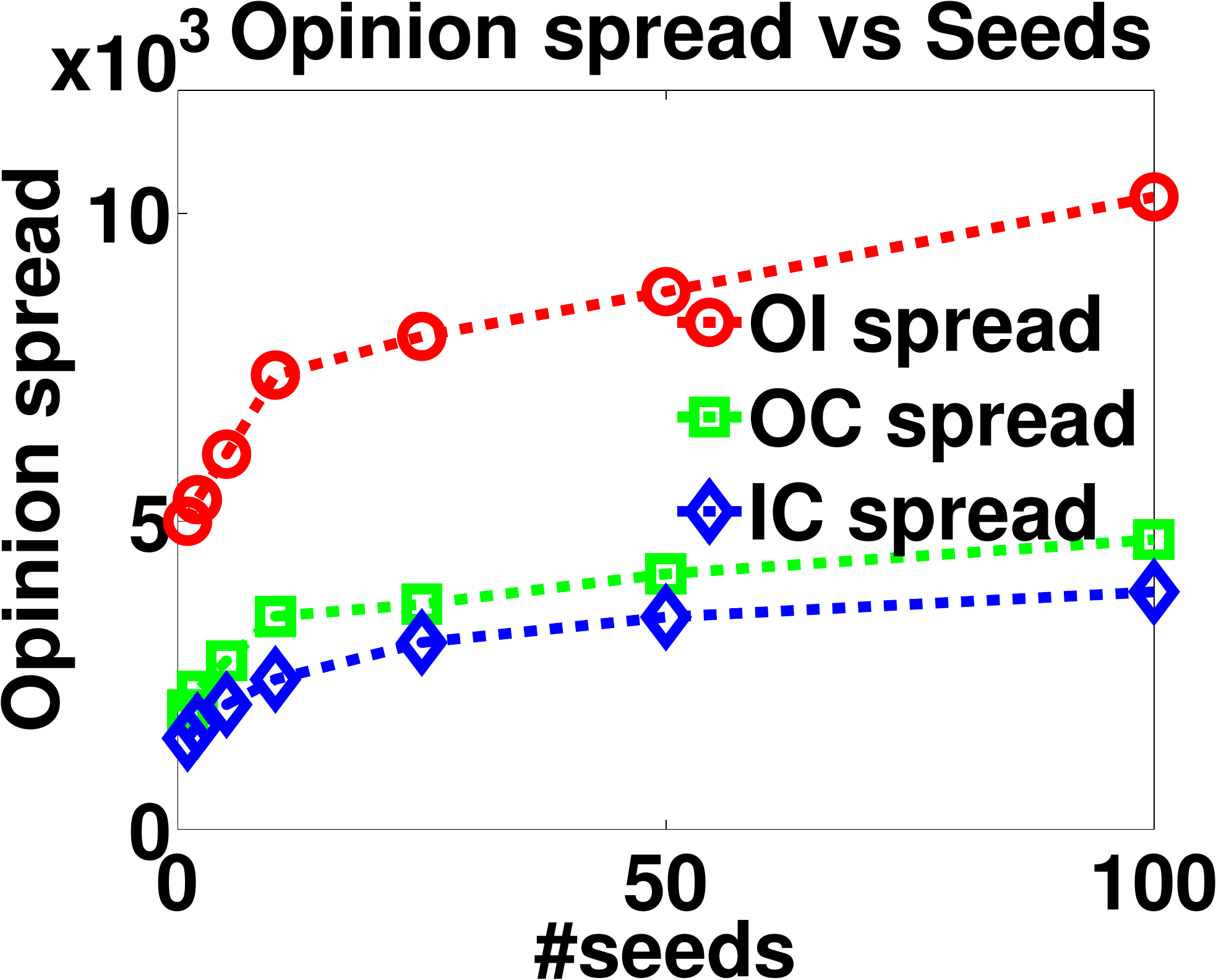}
		}
		\label{fig:twitter_eop_spread}
	}
	\subfloat[PAKDD]
	{
		\scalebox{0.23}{
			\includegraphics[width = 0.99\linewidth]{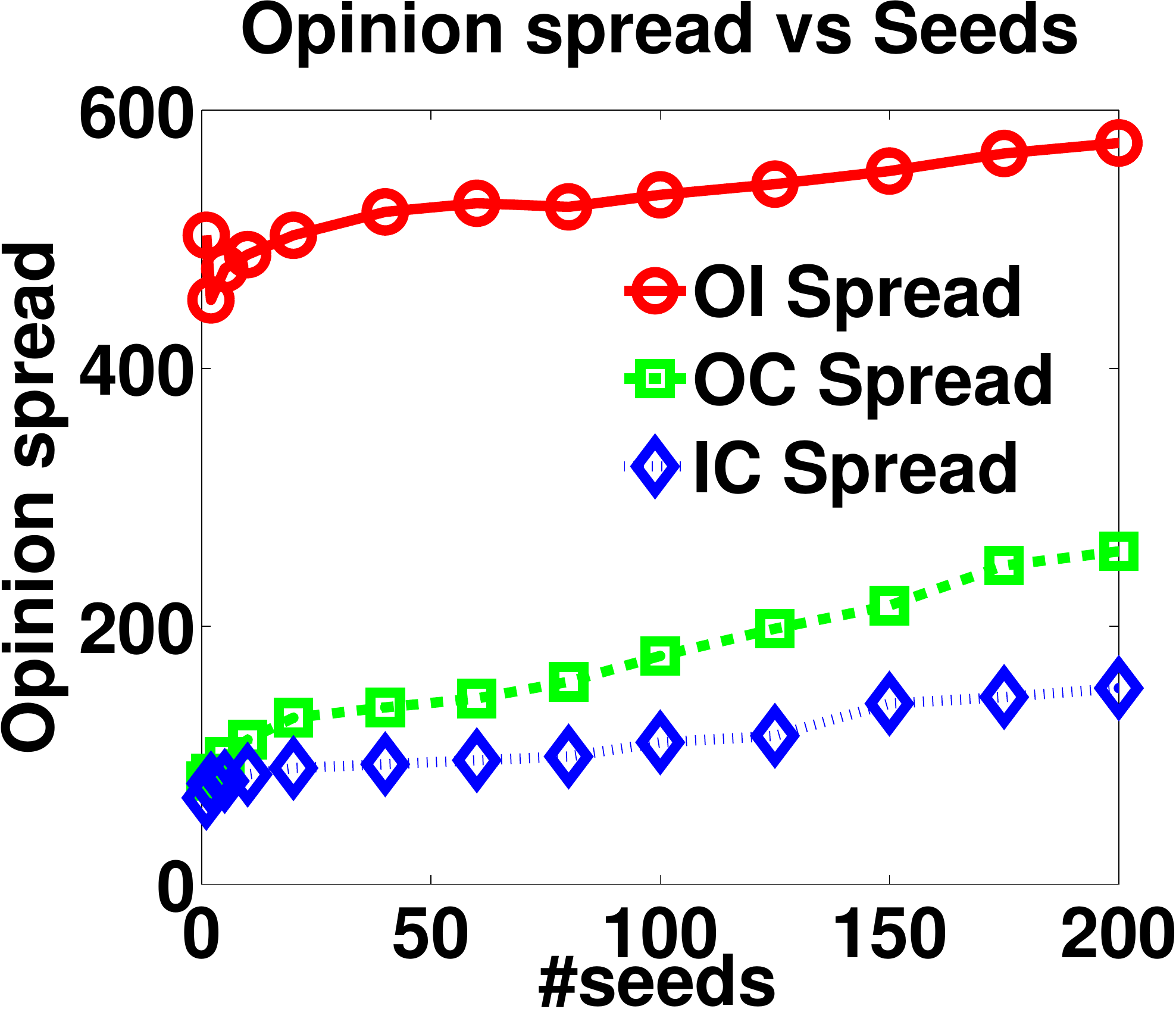}
		}
		\label{fig:real_world_eop_spread}
	}
	\\
	\vspace{-3.5mm}	
	\subfloat[NetHEPT and HepPh]
	{
		\scalebox{0.23}{
			\includegraphics[width = 0.99\linewidth]{motivate_objective}
		}
		\label{fig:nethept_hepph_lambda}
	}
	\subfloat[NetHEPT]
	{
		\scalebox{0.22}{
			\includegraphics[width = 0.99\linewidth]{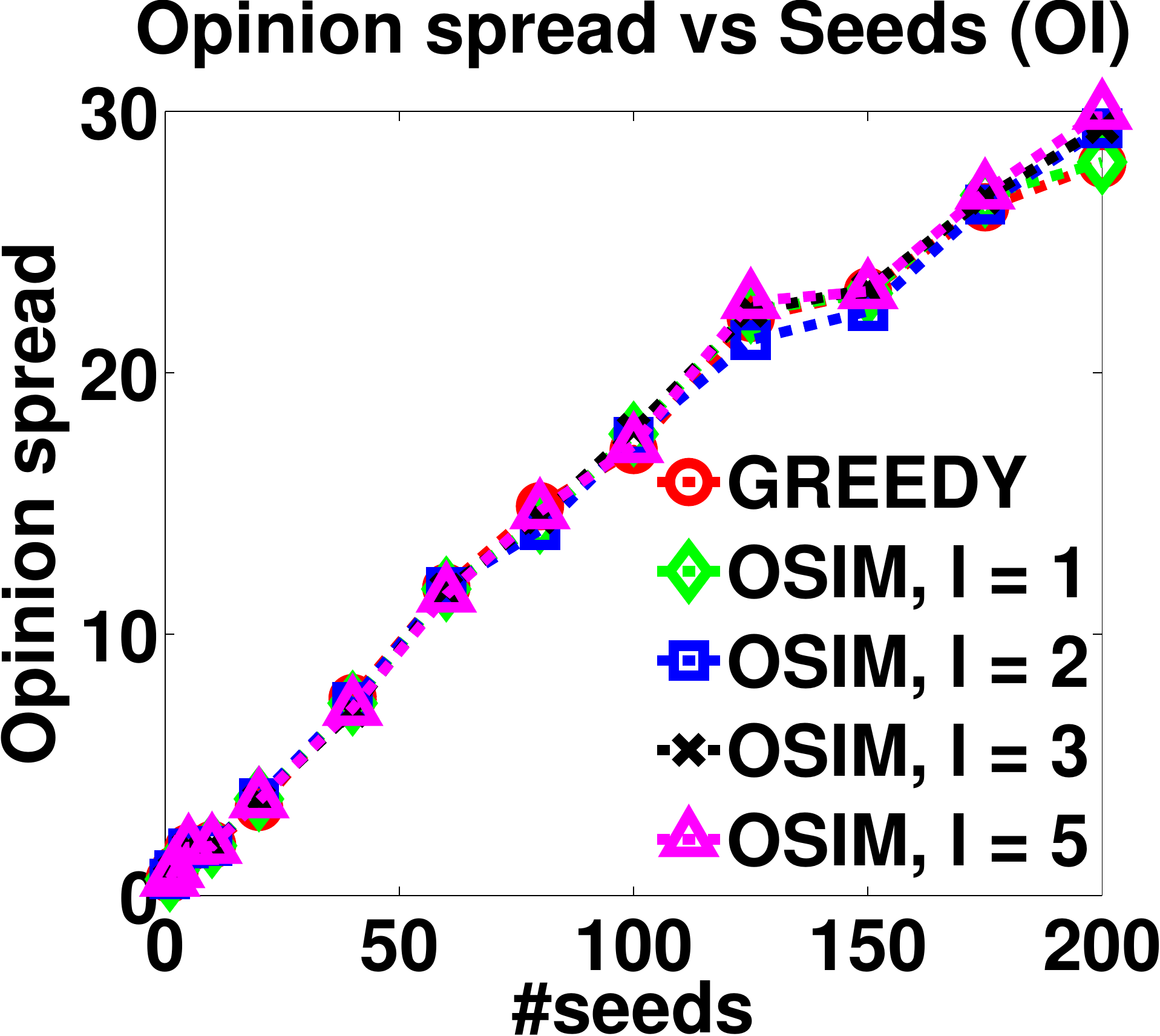}
		}
		\label{fig:eop_nethept_oi}
	}
	\subfloat[NetHEPT]
	{
		\scalebox{0.23}{
			\includegraphics[trim=0cm 0cm 10cm 0cm, width = 0.99\linewidth]{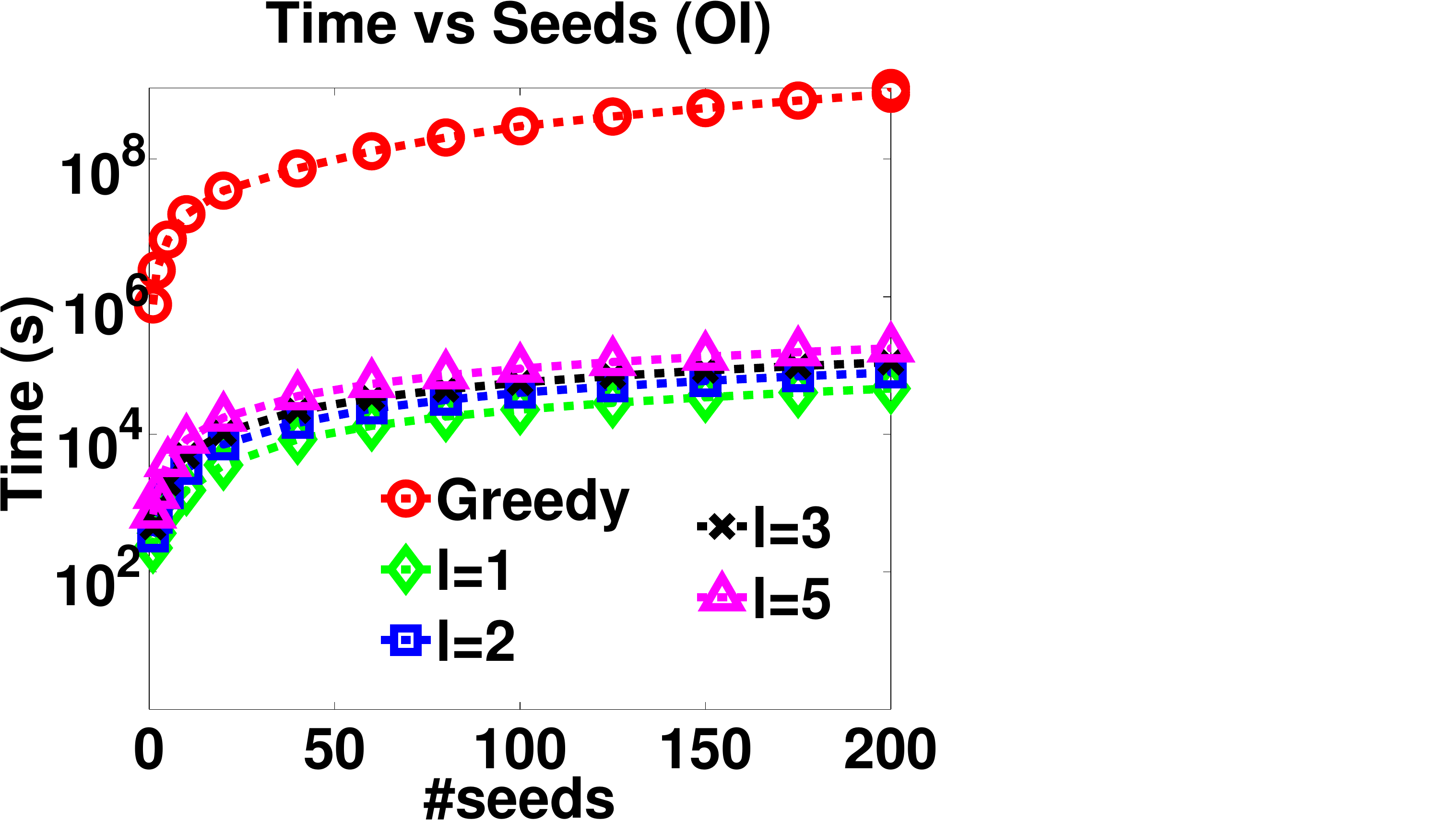}
		}
		\label{fig:time_nethept_oi}
	}
	\subfloat[Medium Datasets]
	{
		\scalebox{0.25}{
			\includegraphics[width = 0.99\linewidth, trim=0cm 0cm 0cm 0cm]{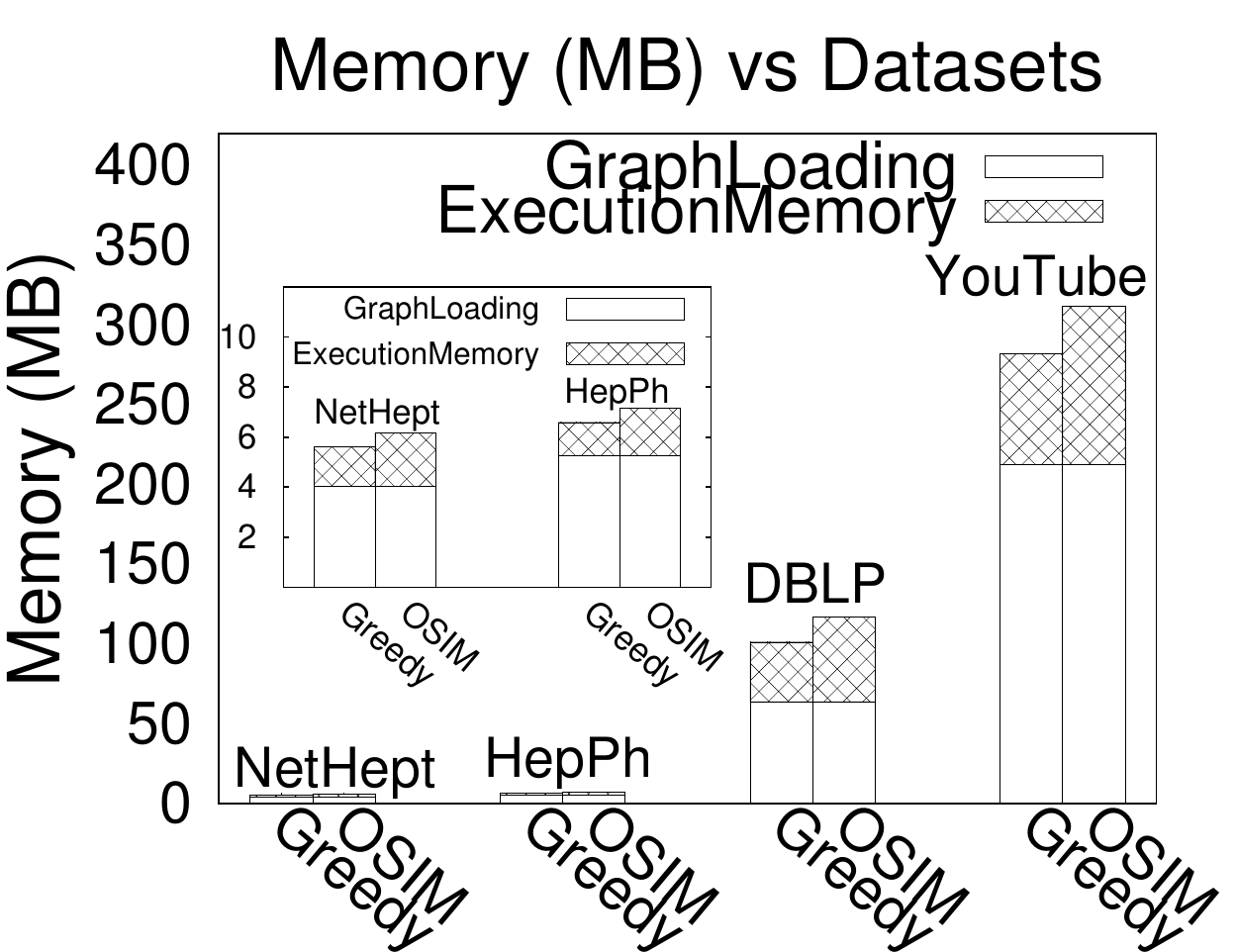}
		}
		\label{fig:memory_all_oi}
	}
\figcaption{\textbf{Comparison of average opinion-spread under OI, OC and IC with ground-truth for different topic-graphs on Twitter data with (a) k=$50$ seeds and (b) varying seeds. Opinion-spread ($\lambda=1$) comparison of OI with OC and IC on (c) Twitter data and (d) PAKDD data (churn analysis). (e) Opinion-spread comparison of $\lambda=1$ with $\lambda=0$ on NetHept and HepPh. (f) Growth of opinion-spread ($\lambda=1$) with $l$ and $k$ on NetHEPT (OI). (g) Growth-rate of running time with $l$ and $k$ on NetHEPT (OI). (h) Memory consumed for $k=100$ on Medium Datasets.}}
\label{fig:opinion_aware}
\end{figure*}

{\bf Comments}: Note that we do not perform scalability comparisons of \emph{OSIM} with the OVM algorithm \cite{ovm} as (1) Asymptotically, the time complexity of \emph{OSIM} $\big(O(k\diameter(m+n))$, where $\diameter$ is a small constant$\big)$ is better when compared to OVM $\big(O(k^2(m+n))\big)$, (2) It is rather difficult to extend OVM to work with the OI model and (3) OVM is designed to work with just the LT model at the first-layer while OSIM works with both IC and LT models. Moreover, since the time and space complexity analysis of \emph{OSIM} is exactly the same as that of \emph{EaSyIM}, the scalability comparisons performed for the latter serve for the former as well.
\ignore{

In the first experiment, fig~\ref{fig:opinion} we run the Kempe's~\cite{kempe} greedy algorithm neglecting sentiment which is the classical IC model of information propagation. On analyzing the opinion spread achieved by using these nodes in opinion oriented setting, spread of opinion achieved  is very less. On the other hand on running the greedy algorithm on the OI model, the spread achieved is much better.  This clearly justifies the fact that non-sentiment aware models clearly undermine the prevalence  of opinion and its propagation in the community. OI model clearly bridges the gap  between the real community and the IC model of information propagation.
\begin{figure}
\centering
\includegraphics[width= \linewidth]{images/opinion}
\caption{Opinion spread vs k for IC and OI model}
\label{fig:opinion}
\end{figure}
}

\subsection{Opinion-Aware}
\label{subsec:exp_opinion}

\subsubsection{Real Data: Twitter}
\label{subsubsec:twitter}
To motivate the importance of the OI model in real-world scenarios, we employed the use of data extracted from the \emph{Twitter} social network. This data has two aspects, namely -- (1) the underlying (directed) social network, and (2) the tweets by users of this network. A snapshot of the \emph{Twitter} network, crawled in June $2009$ containing $41.6M$ users (nodes) and $1.5B$ connections (edges), available publicly from \cite{twitter} is used as the underlying graph. A collection of $476M$ tweets gathered from a subset ($17M$ users) of the users ($41.6M$) in the underlying graph, crawled during the period of June $2009$ to December $2009$, available publicly from \cite{snap} constitutes the second part of our dataset. This dataset contains the following information for each tweet, namely -- (1) user-id, (2) time-stamp, and (3) the original text message or the tweet.

As discussed in previous sections, the motive of IM, and hence MEO, is to maximize the spread of information about a content, which can be a product, person, event and many more. For instance, in Twitter, the diffusion of information about a topic corresponds to the retweeting of or replying to a tweet or even tweeting about the topic. In fact, the temporal sequence of tweets by users, about a topic, guided by their connections in the network, defines the spread of information about that topic. Thus, as a first step towards analyzing information diffusion in real-world settings, there is a need to extract \emph{topic-focussed subgraphs}.

To this end, we pre-processed our dataset to retain all the tweets that possess at least one \emph{hashtag} (\#), where \emph{hashtags} constitute our \emph{topics}. Thus, we were left with $48.5M$ tweets corresponding to $1.6M$ users. Moreover, we projected the underlying twitter network on this identified subset of $1.6M$ users. This gave us a graph of $1.6M$ users with $75M$ edges among them. We denote this graph with the term \emph{background graph} in the rest of this section. Moving ahead, we describe the procedure for construction of topic-focussed subgraphs. A \emph{topic-focused subgraph} evolves by including nodes and edges when users tweet (or retweet) on the same topic, thus (re)activating the edges between them. Following this procedure, we scan the tweets in the increasing order of their time-stamps, and the users associated with the tweets are added as nodes incrementally. Moreover, a directed edge is constructed from a node $A$ to another node $B$ if this edge (directed) exists in the \emph{background graph} and vice versa. All the nodes with an in-degree of $0$ are considered to be originators or seeds of information. We stop growing a topic-graph if no new seed-nodes were added for a significant amount of elapsed time and start growing a new topic graph. The threshold on the elapsed time was learnt from the data. We look at the average frequency of tweets and identify a time difference as threshold that deviates significantly from the expected. This procedure resulted in $\approx10$--$12$ topic-focussed subgraphs per \emph{hashtag}. Note that the construction of topic-focused subgraphs requires just a single scan of the \emph{background graph}.

\begin{table}
\centering
\small
\scalebox{0.75}{
\begin{tabular}{|c||c|c|c|c|c|}\hline
\textbf{Dataset} & \textbf{n} & \textbf{m} & \textbf{Type} & \textbf{Avg. Degree} & \textbf{90-\%ile Diameter}\\\hline %
\hline
NetHEPT & 15K & 62K & Undirected & 4.1 & 8.8\\\hline
HepPh & 12K & 237K & Undirected & 19.75 & 5.8\\\hline
DBLP & 317K & 2.1M & Undirected & 6.63 & 8\\\hline
YouTube & 1.13M & 5.98M & Undirected & 5.29 & 6.5\\\hline
SocLiveJournal & 4.85M & 69M & Directed & 14.23 & 6.5\\\hline
Orkut & 3.07M & 234.2M & Undirected & 76.29 & 4.8\\\hline
Twitter & 41.6M & 1.5B & Directed & 36.06 & 5.1\\\hline
Friendster & 65.6M & 3.6B & Undirected & 54.88 & 5.8\\\hline
\end{tabular}
}
\tabcaption{\textbf{Datasets.}}
\label{tab:dataset}
\end{table}

With all the data preparation steps in place, next, we use popular sentiment analysis APIs \cite{text_processing,empath} to extract \emph{opinions} from the tweets. These sources learn a hierarchical classifier, which first determines whether a tweet is neutral or not. Neutral tweets are assigned an \emph{opinion} score of $0$. If the tweet was not neutral, then another classifier determines the probability of this tweet towards the positive class. Finally, we map this probability value ($[0,1]$) to our \emph{opinion} scores ($[-1,1]$). Note that the error introduced by the sentiment classifier would equally effect all the following results and thus, it can be safely ignored for all the analsyes.

Next, we show our analyses using topic-focussed subgraphs extracted from high frequency \emph{hashtags} (top-$100$), each of which possesses at least $50K$ tweets. Unless otherwise stated, the following portrayed results are averaged over all the topic-subgraphs and the \emph{normalized root-mean-square} error is used as the quality metric. We first show the results of opinion estimation for nodes on unseen topics. The opinions extracted from the tweets of a user, may include the effects of (1) her personal opinion, (2) the influence of her social contacts and (3) external factors. Modelling of external factors is relatively hard from the data, and thus, we consider just the first two factors. As discussed in Sec.~\ref{sec:intro}, we estimate the opinion of a node for a given topic, by considering the opinions of this node for similar topics. Since our topics possess a temporal aspect as well, we estimate the opinion of a node for a topic $A$, by performing a weighted average of the opinion of this node on all the related topics in the past. In addition, since we also possess the \emph{true} value of the opinion of nodes for the topic $A$, as output by our classifier, we portray the quality of this estimation procedure from our data. We were able to estimate the opinions of the \emph{seed-nodes} with an error\footnote{The error in estimation can be on the either side of the true value.} of $3.43\%$ and the opinions of all other nodes with an error of $8.57\%$. Note that the error on non-seed nodes is higher when compared to that on the seed nodes. This is indicative of the fact that the tweets of the seed-nodes indeed express their personal opinion, however the tweets of other nodes additionally include the effect of the opinions of their network. Moreover, this also proves that there is a need for models that: (1) consider the \emph{change of opinion} during information diffusion, and (2) consider the effect of \emph{interaction} between two nodes.

Having computed the \emph{opinion} of each node, the \emph{interaction} associated with an edge between two nodes (directed) is calculated as the fraction of the times they agree with each other across the subgraphs corresponding to all the topics in the past and not just those corresponding to the related topics. Note that both the parameters for the OI model, namely -- \emph{opinion} and \emph{interaction}, can be easily computed during the topic-subgraph construction step and do not incur any additional cost for their estimation. As mentioned earlier in this section that topic-subgraph construction is a required step for analyzing information diffusion, thus, the parameter estimation cost for the OI model is amortized constant.

Next, we show that the \emph{opinion-spread} in real-world follows the OI model. For this experiment, we consider the top-$100$ extracted topic-subgraphs and use the identified seeds or originators of information from the real-world, as the seed nodes for the information diffusion process to measure the \emph{opinion spread}. For each topic subgraph we calculate the opinion-spread using the opinions extracted from their tweets, which serves as our \emph{ground truth}. Similarly, we use the estimated parameters on these topic subgraphs and obtain the \emph{opinion spread} under the OI, OC and the IC models. Although, depending upon the distribution of opinions and interactions, the opinion-spread achieved using all the three models can fall on the either side of the ground-truth, our results in Figure~\ref{fig:twitter_eop_flow} show that the opinion-spread obtained under the OI model is always closest to the ground-truth opinion-spread and thus, possesses the least error. Moreover, Figure~\ref{fig:twitter_eop_flow_seeds} shows a similar analysis on the average opinion-spread using the OI, OC and IC models with varying seeds, and it is evident that the \emph{opinion-spread} under the OI model possesses the least error.

\begin{figure*}
\centering
	\subfloat[NetHEPT]
	{
		\scalebox{0.185}{
			\includegraphics[width = 0.99\linewidth]{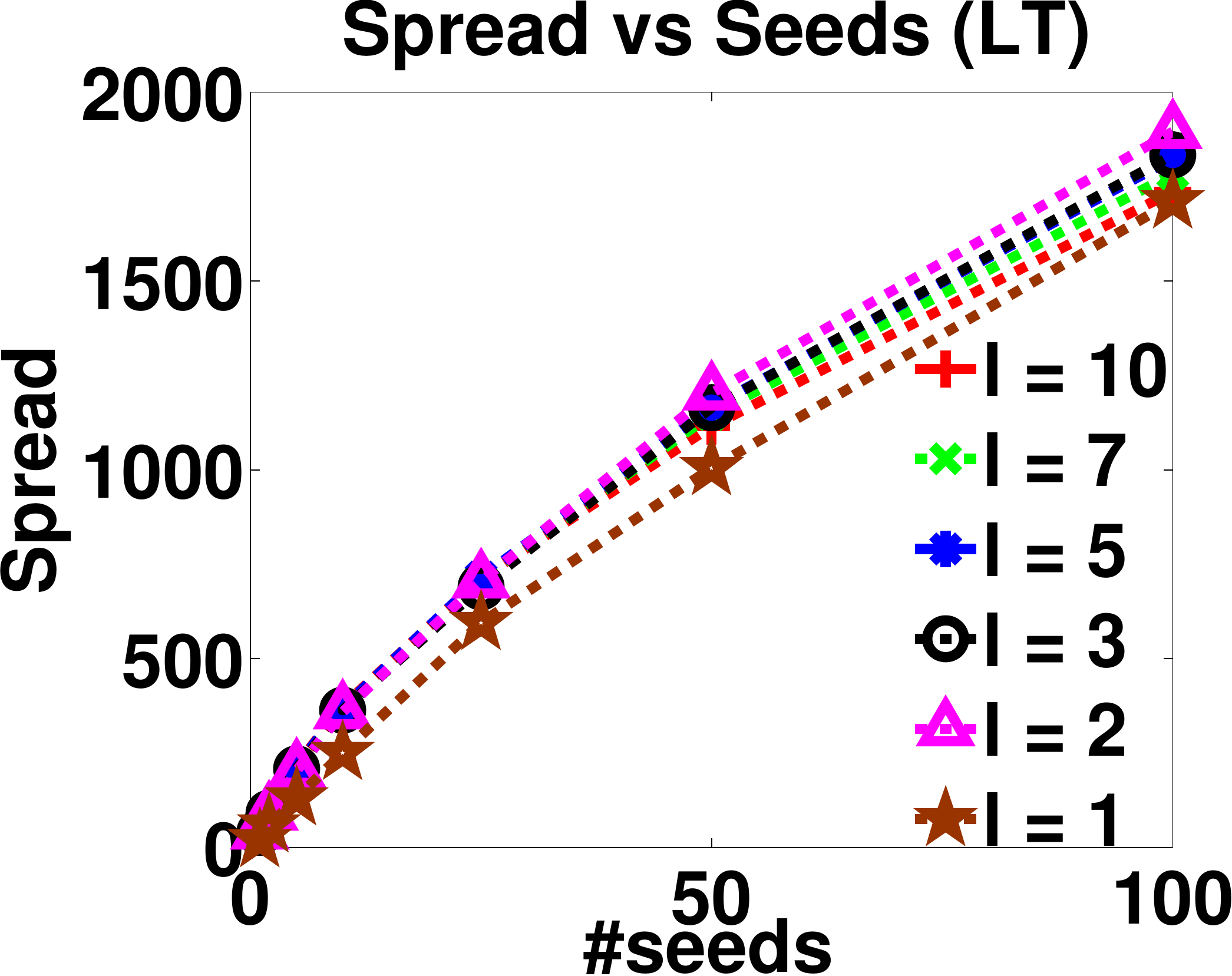}
		}
		\label{fig:spread_nethept_lt}
	}
	\subfloat[DBLP]
	{
		\scalebox{0.185}{
			\includegraphics[width = 0.99\linewidth]{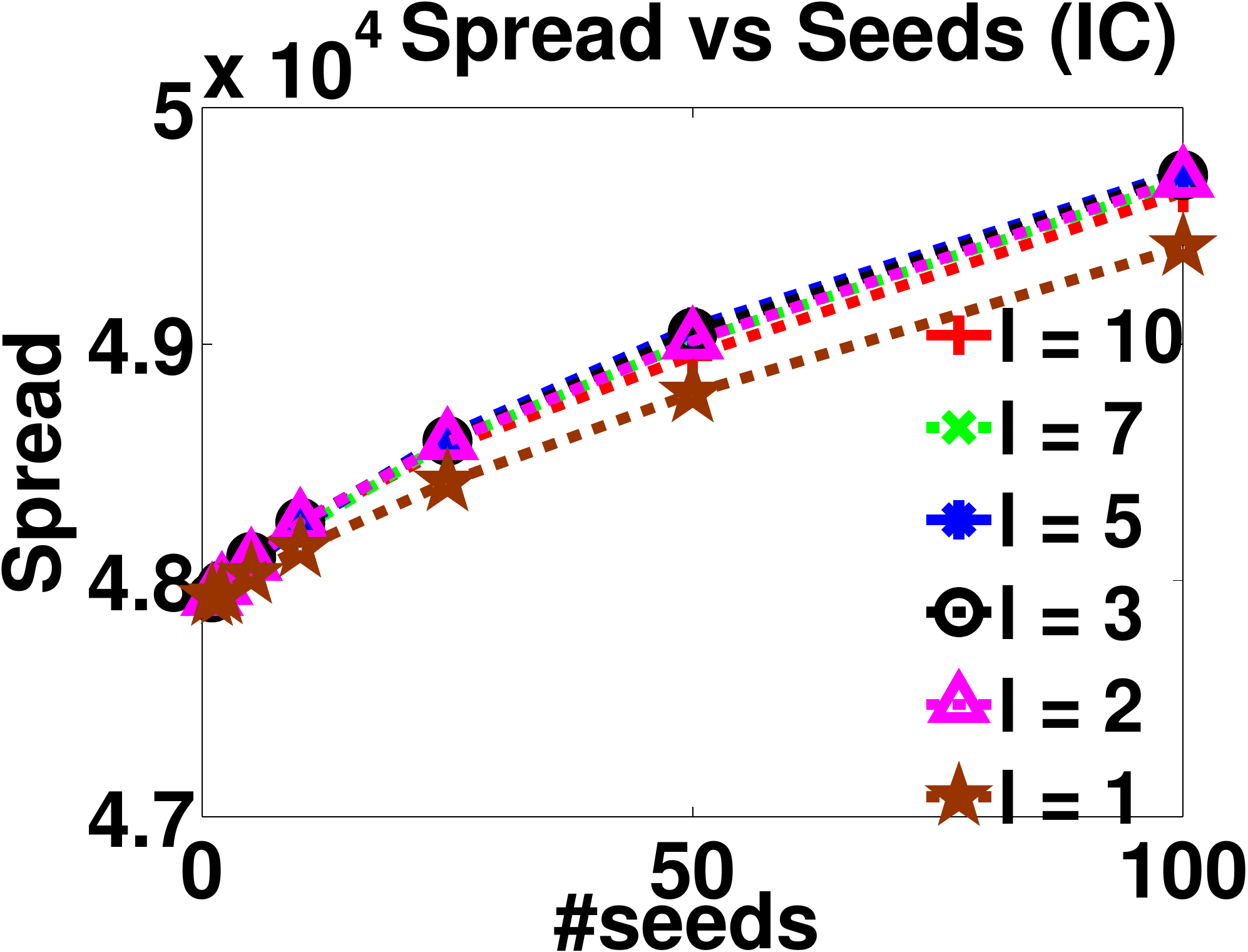}
		}
		\label{fig:spread_dblp_ic}
	}
	\subfloat[YouTube]
	{
		\scalebox{0.185}{
			\includegraphics[width = 0.99\linewidth]{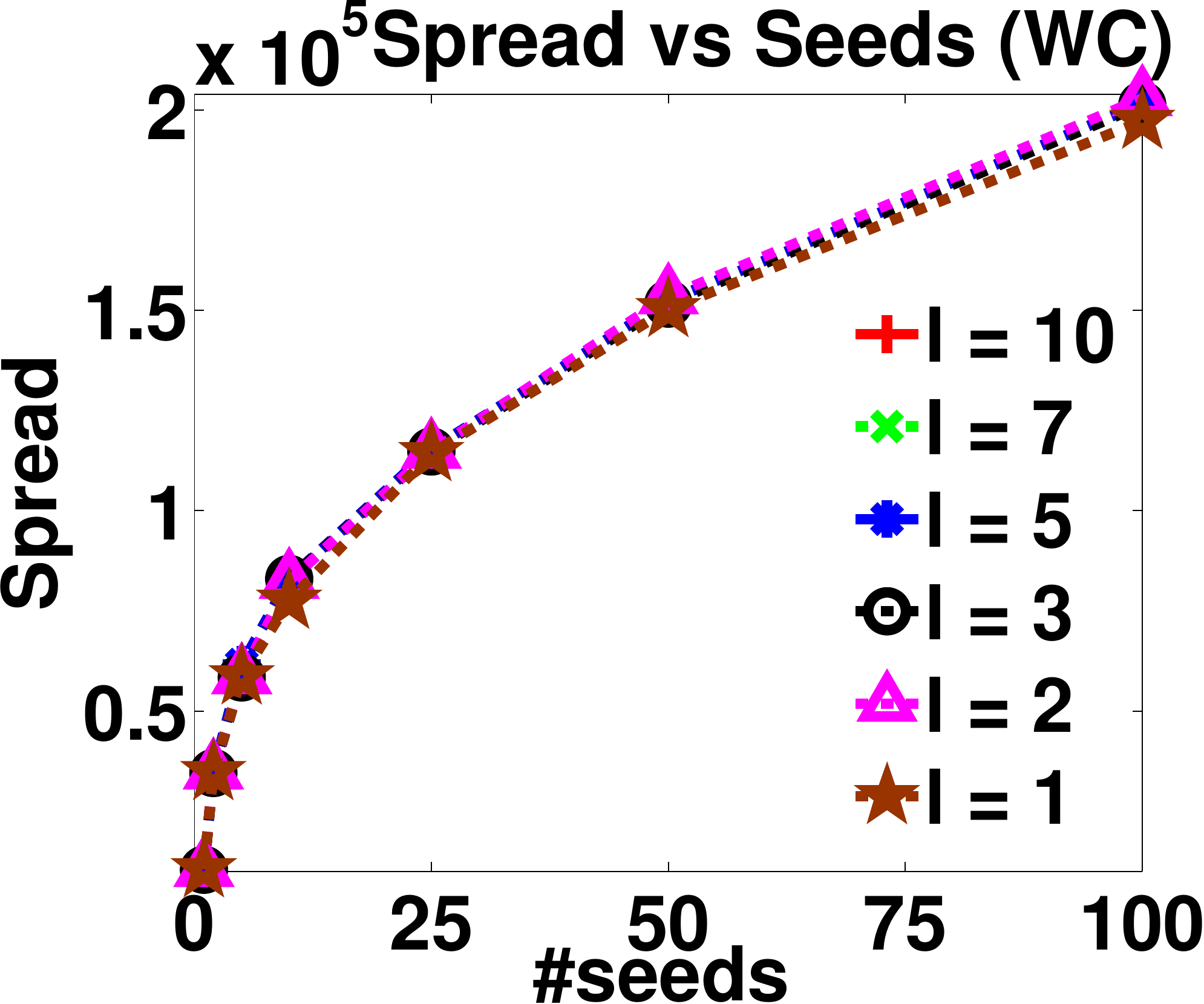}
		}
		\label{fig:spread_youtube_wc}
	}
	\subfloat[HepPh]
	{
		\scalebox{0.185}{
			\includegraphics[width = 0.99\linewidth]{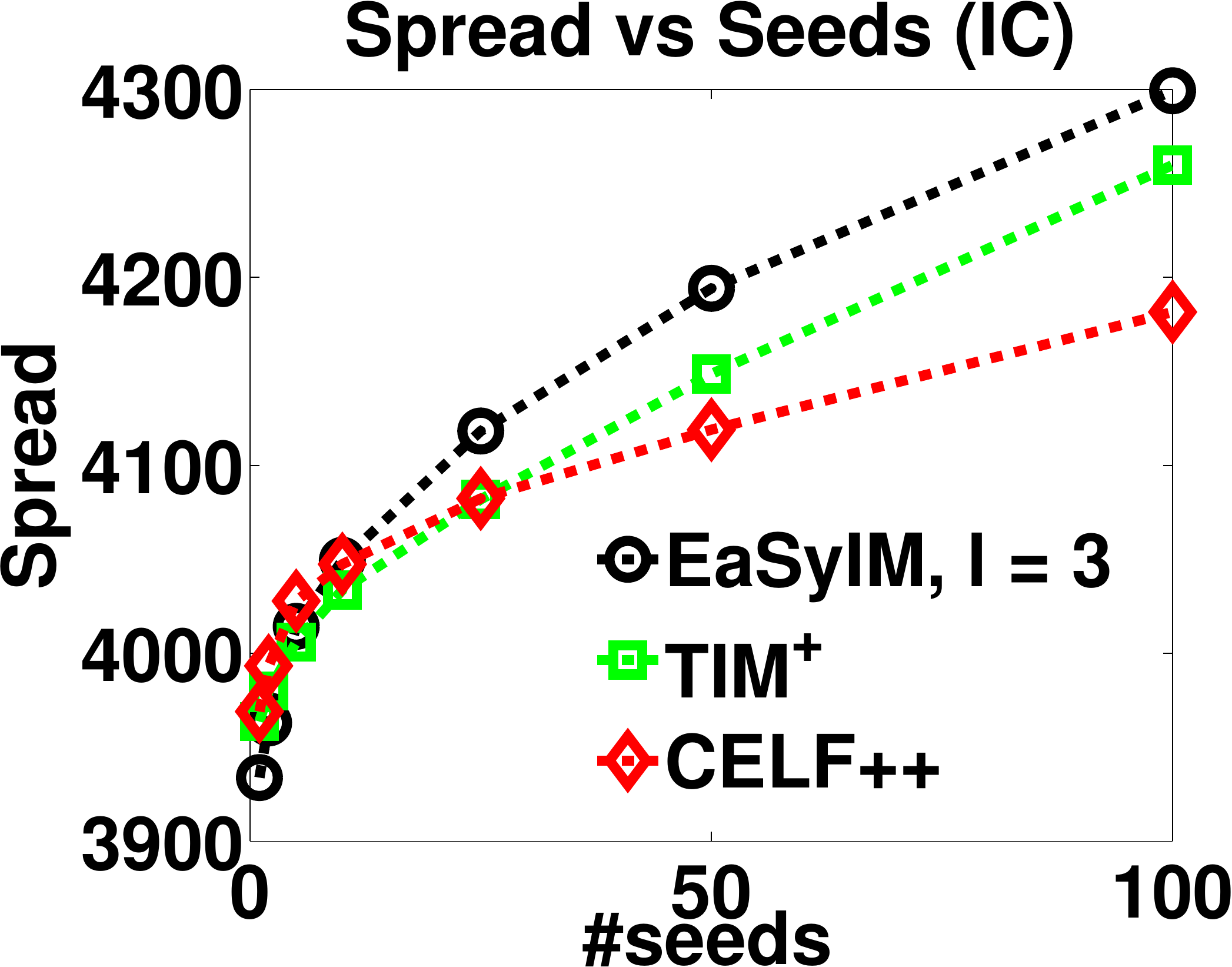}
		}
		\label{fig:spread_hepph_ic_compare}
	}
	\subfloat[DBLP]
	{
		\scalebox{0.185}{
			\includegraphics[width = 0.99\linewidth]{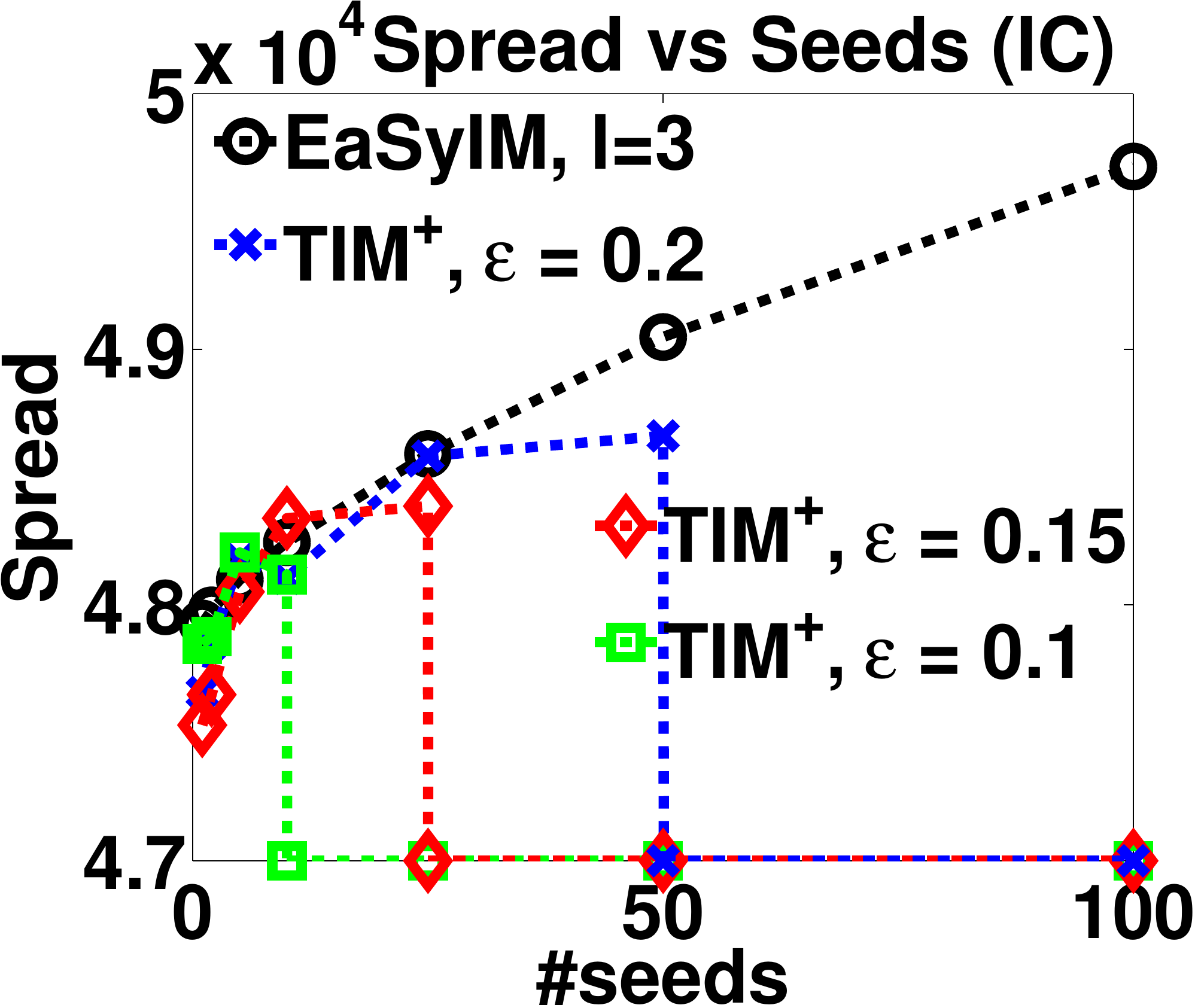}
		}
		\label{fig:spread_dblp_ic_compare}
	}
	\\
	\vspace{-2.5mm}
	\subfloat[NetHEPT]
	{
		\scalebox{0.185}{
			\includegraphics[width = 0.99\linewidth]{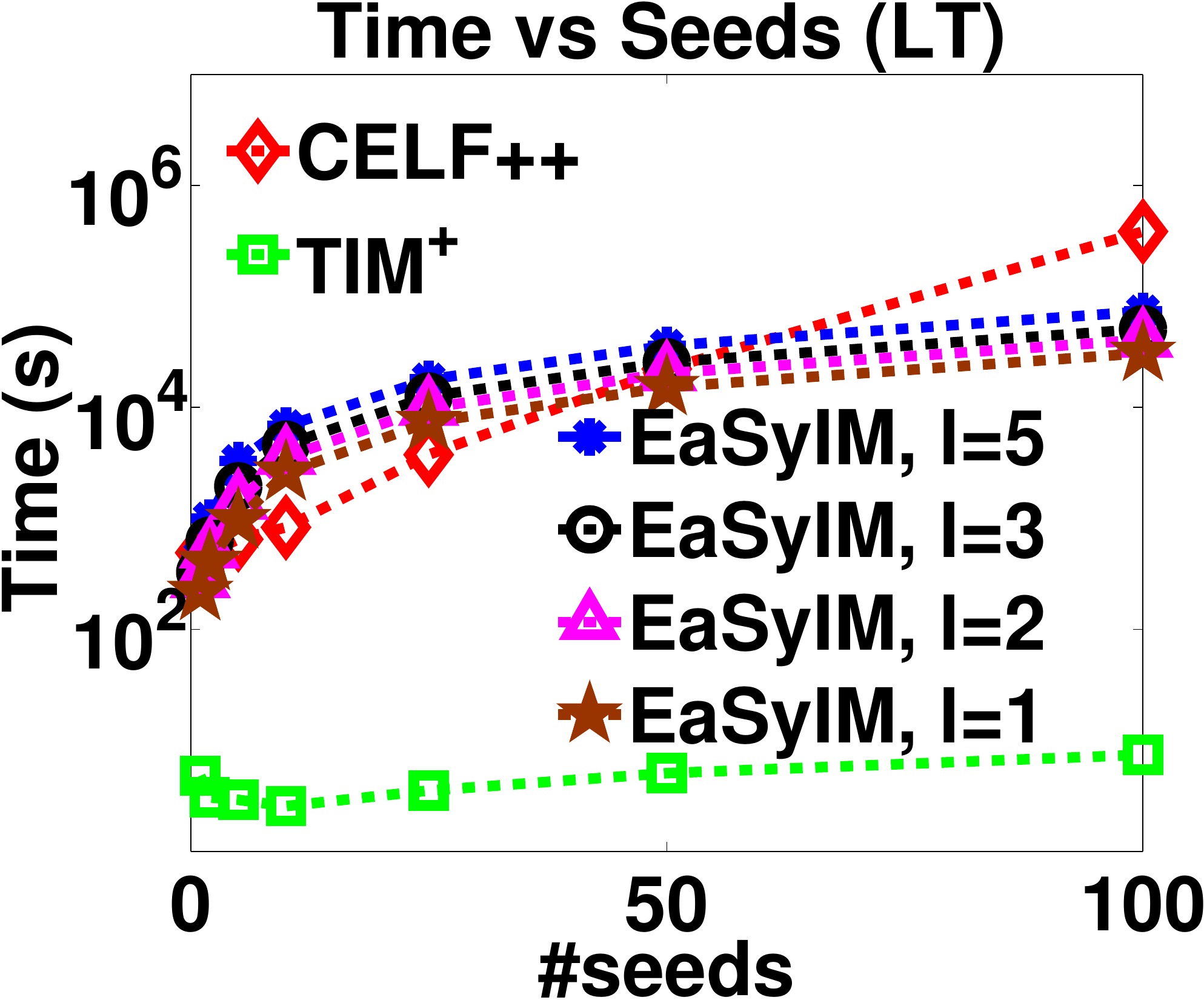}
		}
		\label{fig:time_nethept_lt_compare}
	}
	\subfloat[DBLP]
	{
		\scalebox{0.185}{
			\includegraphics[width = 0.99\linewidth]{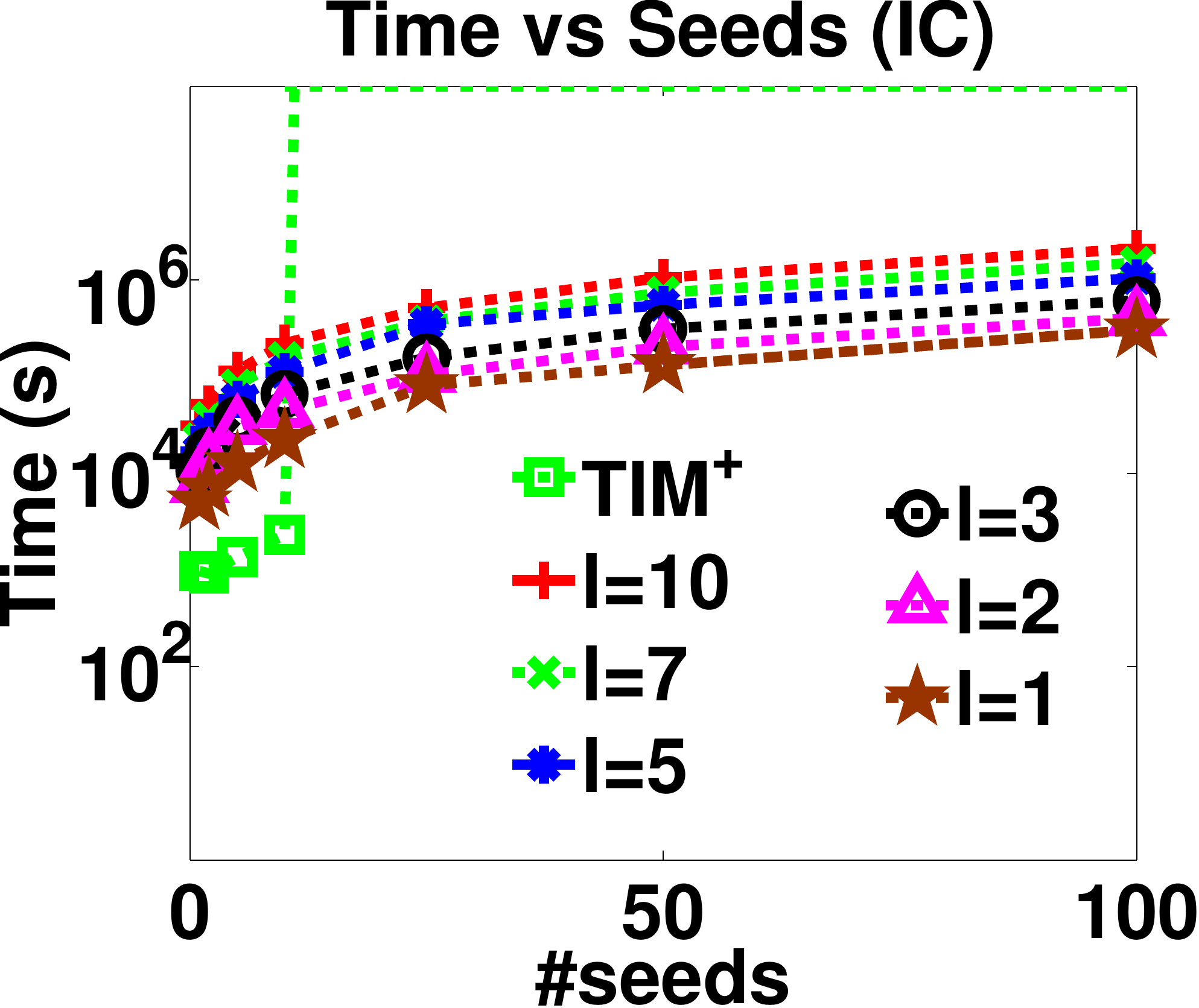}
		}
		\label{fig:time_dblp_ic_compare}
	}
	\subfloat[YouTube]
	{
		\scalebox{0.185}{
			\includegraphics[width = 0.99\linewidth]{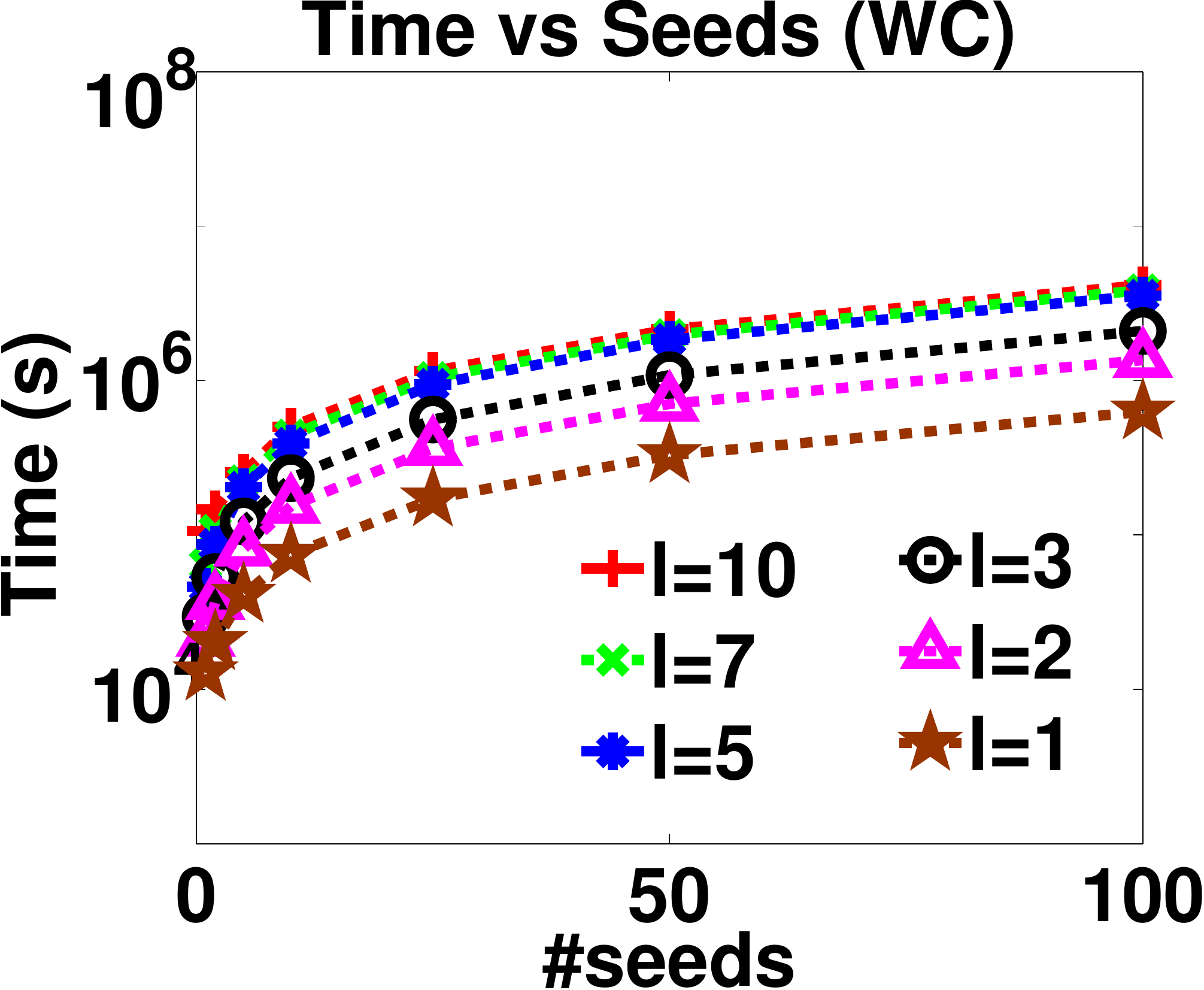}
		}
		\label{fig:time_youtube_wc_compare}
	}
	\subfloat[NetHEPT and DBLP]
	{
		\scalebox{0.185}{
			\includegraphics[width = 0.99\linewidth]{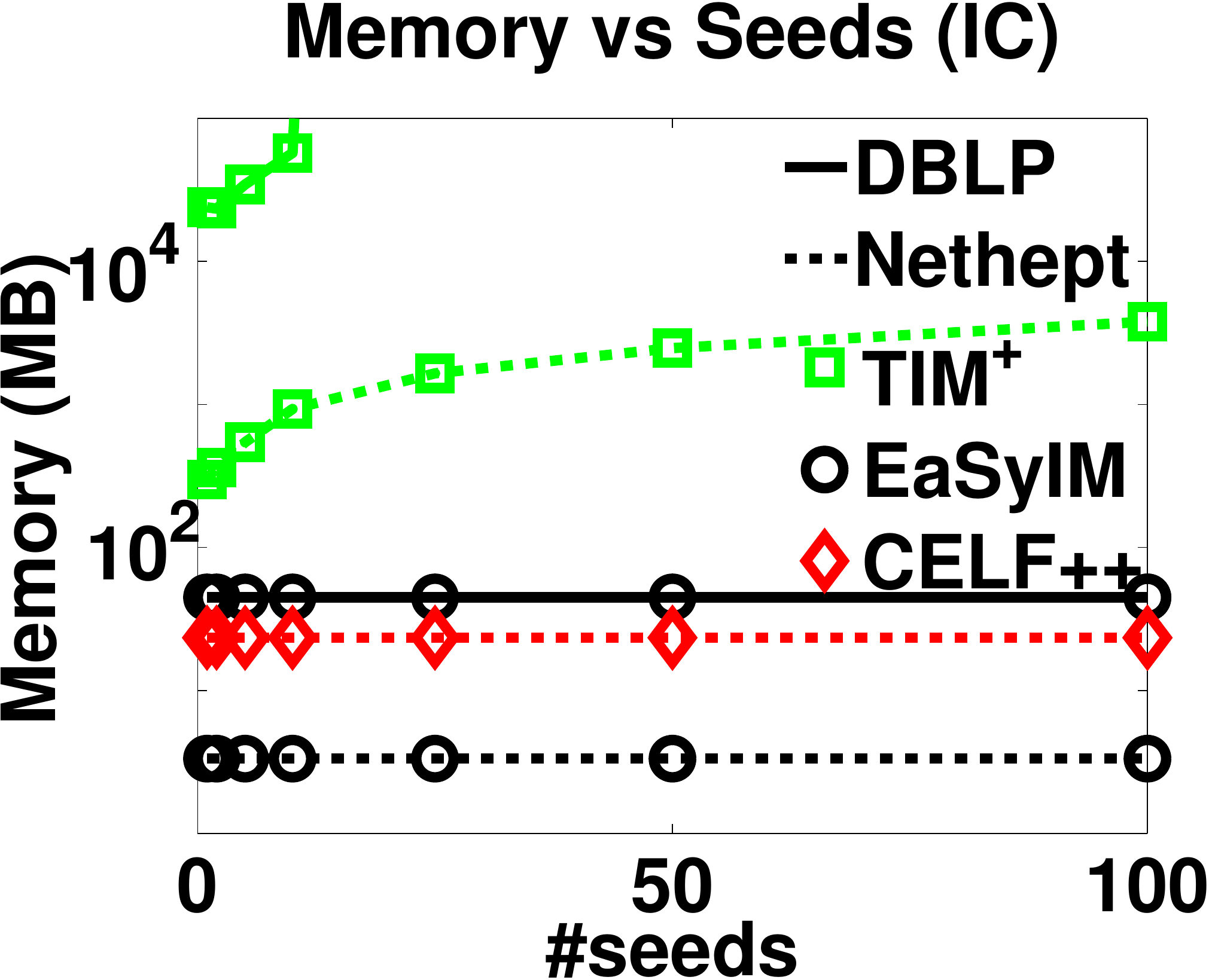}
		}
		\label{fig:memory_growth_nethept_dblp_compare}
	}
	\subfloat[Medium Datasets]
	{
		\scalebox{0.21}{
			\includegraphics[width = 0.99\linewidth]{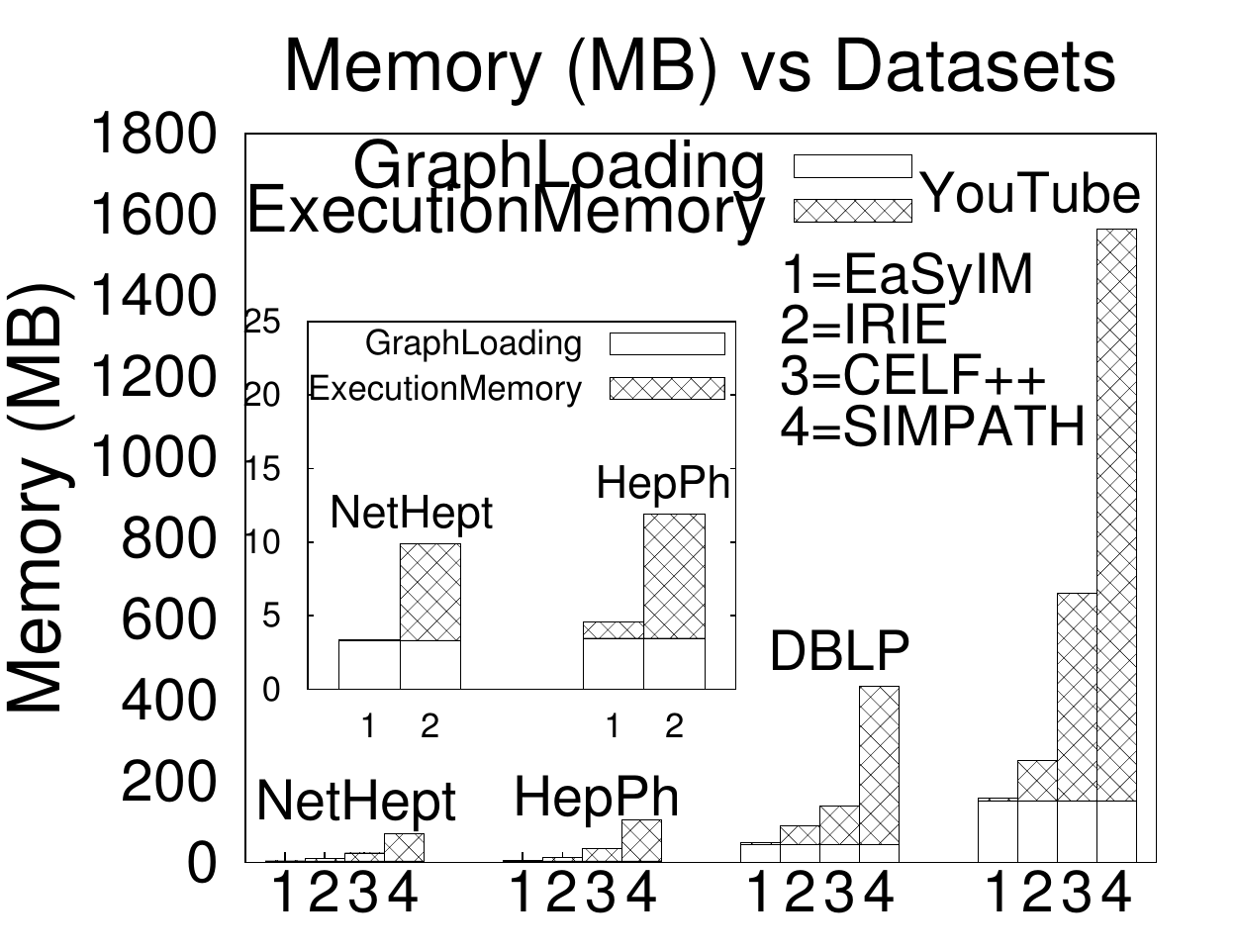}
		}
		\label{fig:memory_all_all_100_compare}
	}
\figcaption{\textbf{Growth of spread with $l$ and $k$ on (a) NetHEPT (LT), (b) DBLP (IC) and (c) YouTube (WC). Spread comparison of \emph{EaSyIM} with TIM$^+$ and CELF++ on (e) HepPh (IC), (f) DBLP (IC). Running time comparison of \emph{EaSyIM} with CELF++ and TIM$^+$ on (h) NetHEPT (LT), (i) DBLP (IC) and (j) YouTube (WC). Memory consumption comparison of \emph{EaSyIM} with CELF++ and TIM$^+$ on (m) NetHEPT and DBLP (IC), and with all algorithms for $k=100$ on (n) Medium Datasets (IC).}}
\label{fig:opinion_oblivious}
\end{figure*}

Having empirically proved that the OI model closely mirrors the \emph{opinion-spread} in real world scenarios, finally, we show our results on opinion-aware IM. As mentioned earlier in this section, we show the average opinion-spread achieved for different topics by performing the information diffusion process on the entire \emph{background graph}. Figure~\ref{fig:twitter_eop_spread} shows that the opinion-spread achieved using the seeds selected by the OI model is much better when compared to that of the OC and the IC model. Apart from the results on quality, \emph{OSIM} with $l=1$, took $497$ minutes and $1.72GB$ of RAM to identify $100$ seeds on the \emph{background graph}.

\ignore{
We first show the need of constructing topic-focussed subgraphs..

	[Reason] Since the motive of the IM and hence the MEO problem 

Total topic-graphs se we get $1.7M$ nodes. We consider top-100 (by frequency topics). How many of these users are covered? (Stats)\\
Calculate opinion using sentiment analysis APIs..\\
(1) Show that opinions can be estimated using similar topics. Prove this hypothesis using just the seed nodes as this is not possible with other nodes..\\
Calculate interaction...

(2) Show that there is a need for interaction. Since the estimated opinions doesn't match with the opinion extracted from tweets for non-seed nodes or non-originators. This is because there are extraneous factors 
opinion-spread achieved using the seeds selected by the OI model is much better when compared to that of the OC and the IC model. This proves the importance of the OI model and the notions of \emph{opinion} and \emph{interaction} in solving real-world problems.

Now given the topic-focussed subgraphs and background graph, we show what is the objective of Influence Maximization in this case and how do we perform the same. Show how can we construct opinion and interaction.

Next, we show the construction of background graph..
	We first construct a subgraph of the underlying twitter network using the $1.7M$ users for whom we have the tweet information and also are a part of at least one of the topic-focussed subgraphs (Using the nodes that have at least 1 tweet with "\#"-tag we create a subgraph of the underlying social network).. 

Thus, the processed underlying social network now contains $1.7M$ nodes and $75M$ edges.  
}

\subsubsection{Real Data: Analyzing Customer Churn}
\label{subsubsec:analyzing_churn}
We used the PAKDD 2012 data mining competition dataset, available publicly from \cite{churn_data}, to motivate the applicability of \emph{MEO} and the OI model in solving real-world problems. The dataset contains profiles of $\approx 100K$ customers, including billing information, usage data, service requests and complaints, of a large telecommunications company for a period of $1$ year from January $1$, $2011$ to December $31$, $2011$. The data also contains information about the \emph{churn} (termination of services) date for each customer, where $17K$ customers are churners and the rest are non-churners. Using this data, the objective of the challenge was to predict which customers are liable to churn. We extend this problem of modelling \emph{customer-churn} further to perform a novel analysis using \emph{opinion-aware} IM. Since the main objective of this paper is not churn-prediction, to make the data preparation step for our analysis simpler we work on a balanced subset of the original data containing $34K$ customers with equal number of churners and non-churners. 

Owing to the increasing popularity of \emph{label propagation} \cite{label_prop} and similarity based learning \cite{similarity_learning}, we build upon the hypothesis that customers with \emph{similar} attributes possess similar \emph{churn} behavior. Under this hypothesis, we induce a graph, of \emph{customers}, using \emph{attribute-value} similarity and a similarity \emph{threshold} to construct edges between two customers. The graph obtained thus, consists of $34K$ nodes (customers) and $1.5M$ edges (relationships), with churners assigned a label of $-1$ and non-churners that of $1$. Using this data, we build a model using label-propagation to predict the susceptibility of a customer towards churn. After convergence, the value ($[-1,1]$) at each node represents the \emph{affinity}, or in other words the \emph{opinion}, of a customer to churn with $-1$ and $1$ denoting strong affinity towards churn and non-churn respectively. In this way, we estimate the \emph{opinion} ($o$) parameter of the OI model. The attribute-value similarity defines the influence-proability ($p$) associated with each edge, while the \emph{interaction} probability ($\varphi$) for each edge is generated using the $rand(0,1)$ function described above.

Acquainted with the likelihood of customers to churn combined with the availability of a churn-prediction model, a service provider would like to identify potential targets, under a marketing buget, capable of maintaining its reputation, thus, indirectly preventing the \emph{cascades} of churn. This task boils down to identifying seeds that maximize the difference of positive and negative opinions, or in other words maximize the effective opinion (MEO). Figure~\ref{fig:real_world_eop_spread} shows that the opinion-spread achieved using the seeds selected by the OI model is much better when compared to that of the OC and the IC model. This proves the importance of the OI model and the notions of \emph{opinion} and \emph{interaction} in solving real-world problems.

\subsubsection{Other Real Datasets}
Having highlighted the importance of \emph{opinion}, \emph{interaction}, the OI model and the MEO problem using two real-world datasets in the previous sections, here we present an in-depth analysis of \emph{opinion-aware} IM on the benchmark datasets (Table~\ref{tab:dataset}) used for the evaluation of classical IM. Since these datasets do not inherently possess the properties of \emph{opinion} and \emph{interaction} with nodes and edges of the graph respectively, we annotate them as follows. The \emph{opinions} are generated by two methods, namely -- (a) $o\sim rand(-1,1)$, where the opinion of each node is generated uniformly at random in the range $[-1,1]$ and, (b) $o\sim\mathcal{N}(0,1)$, where the generated opinions follow the standard normal distribution, while the \emph{interaction} probability for each edge is only generated uniformly and randomly using the function $\varphi\sim rand(0,1)$. The reported results are averaged over $3$ different instances of the generated data. For additional results please see Appendix~\ref{app_res:opinion_aware}.

\textbf{Quality}: The first set of results show the importance of the choice of optimization-objective. We compare the \emph{effective opinion-spread} ($\lambda=1$) with the \emph{opinion-spread} ($\lambda=0$) for the NetHEPT dataset using \emph{OSIM} under the OI model. It is evident from Figure~\ref{fig:nethept_hepph_lambda} that $\lambda=1$ outperforms $\lambda=0$ and hence, maximizing the effective opinion-spread is better. With this, we fix $\lambda=1$ for a comparison of the effective opinion-spread obtained using \emph{OSIM} (with varying $l$) with Modified-{\kempegreedy} for different datasets. Figure~\ref{fig:eop_nethept_oi} presents the results with varying seeds for NetHEPT under the OI model ($o\sim\mathcal{N}(0,1)$). It can be seen that the reported opinion spread improves with increase in $l$, however, it starts to dip when $l \to \diameter$ owing to a significant increase in the number of cyclic paths, which in turn causes the error in the assigned scores to increase as discussed in Sec.~\ref{subsubsec:discussion}. Using these results, we conclude that $l=3$ serves as the best choice for \emph{OSIM}. Moreover, \emph{OSIM} closely mirrors Modified-{\kempegreedy} with respect to the quality of obtained \emph{opinion-spread}.

\textbf{Efficiency}: In continuation to the comparison on quality, here we compare \emph{OSIM} with Modified-{\kempegreedy} on its running-time for the NetHEPT dataset. Figure~\ref{fig:time_nethept_oi} shows that \emph{OSIM} is at least $10^3$ times more efficient when compared to Modified-{\kempegreedy} with the gain going as high as $10^5$ for some cases. Moreover, this figure (with y-axis in log-scale) also shows that the running-time of \emph{OSIM} grows linearly with increase in $l$.

\textbf{Scalability}: Figure~\ref{fig:memory_all_oi} shows a comparison of the memory-consumed by \emph{OSIM} with Modified-{\kempegreedy} for all the $4$ datasets discussed above. It is important to note that the memory consumption of these algorithms is independent of the parameters, namely -- path-length ($l$), number of seeds ($k$) and the number of MC simulations. Both the algorithms scale linearly in the size of the graph and require constant-factor overheads to solve the MEO problem. This is indicated by the stacked bars, which show that these algorithms require only a small amount (constant-factor) of memory over and above the memory required to load the graph.

\begin{table}[t]
\centering
\scalebox{0.7}{
\begin{tabular}{|c|c|c|c||c|c|c|}
\hline
\multirow{2}{*}{\bf Dataset} & \multicolumn{3}{|c||}{\bf Running Time (min)} &
\multicolumn{3}{c|}{\bf Memory (MB)} \\
\cline{2-7}
& \bf TIM$^+$ & \bf EaSyIM (l=1)& \bf Gain & \bf TIM$^+$ & \bf EaSyIM (l=1)& \bf Gain \\
\hline
\hline
DBLP& 783.1 & 2183 & 0.36x & 35234.75 & 46.5 & \bf 758x \\\hline
YouTube& NA & 5089.5 & $\infinity$ & NA & 158.3 & $\infinity$ \\\hline
socLive & NA & 15433.33 & \bf $\infinity$ & NA & 974.94 & $\infinity$ \\\hline
\end{tabular}
}
\tabcaption{{\bf Comparing \emph{EaSyIM} with TIM$^+$, $k=50$, $\epsilon=0.1$.}}
\label{tab:resultsTIM}
\end{table}

As stated earlier, in addition to the scalability and efficiency analysis for \emph{OSIM}, next, we present a more involved analysis for \emph{EaSyIM} which serves the purpose for the former as well.

\subsection{Opinion-Oblivious}
\textbf{Quality}: The first set of results portray the effect of the parameter $l$ on the \emph{spread} obtained using \emph{EaSyIM}. Figures~\ref{fig:spread_nethept_lt},~\ref{fig:spread_dblp_ic} and~\ref{fig:spread_youtube_wc} present the results with varying seeds on the NetHEPT, DBLP and the YouTube datasets under the LT, IC and WC models respectively. It can be seen that the reported spread follows a similar pattern as that of \emph{OSIM}, i.e., it improves with increase in $l$, however, it starts to dip when $l \to \diameter$. Using these results, we conclude that $l=5$ serves as the best choice for \emph{EaSyIM}, with $l=3$ being marginally worse when compared to $l=5$. Moreover, owing to $l=3$ being $\approx2$ times more efficient when compared to $l=5$ (detailed analysis on efficiency in the following paragraph), we fix $l=3$ for a comparison of the spread obtained using \emph{EaSyIM} with other algorithms. Figures~\ref{fig:spread_hepph_ic_compare} and~\ref{fig:spread_dblp_ic_compare} present the spreads obtained for the HepPh and DBLP datasets under the IC model. Note that the sudden drop in the spread of TIM$^+$ in Figure~\ref{fig:spread_dblp_ic_compare} indicates that it crashed on our machine after that point. These results show that \emph{EaSyIM} mirrors the state-of-the-art techniques closely for majority of the datasets, while being \emph{even better} at certain instances, with respect to the quality of the obtained \emph{spread}.

\textbf{Efficiency}: Figures~\ref{fig:time_nethept_lt_compare},~\ref{fig:time_dblp_ic_compare} and~\ref{fig:time_youtube_wc_compare} show a comparison of the running-times of \emph{EaSyIM} with CELF++ and TIM$^{+}$ for varying seeds on the NetHEPT, DBLP and the YouTube datasets under the LT, IC and WC models respectively. These figures show that \emph{EaSyIM} grows linearly (y-axis in log-scale) with the parameters $l$ and $k$, therefore \emph{EaSyIM} is $\approx2$ times faster for $l=3$ when compared to $l=5$. Owing to the marginal improvement in the spread from $l=3$ to $l=5$ (quality analysis in the previous paragraph), we choose $l=3$ for \emph{EaSyIM} as it provides the best tradeoff between efficiency and quality. Moreover, it can be seen that neither CELF++ nor TIM$^+$ scale well with the increase in graph size. CELF++ was not able to complete on the DBLP and the YouTube datasets even after running for $7$ days, while TIM$^+$ crashed on the DBLP dataset for $k>10,25,50$ with $\epsilon=0.1,0.15,0.2$ respectively owing to its huge memory requirement. In addition, it is evident from Tables~\ref{tab:resultsTIM} and~\ref{tab:resultsCELFPP} that \emph{EaSyIM} is $10$--$15$ times more efficient while consuming $3$-$4$ times less memory when compared to CELF++, and requires $8$--$10$ times more time to run while its memory-footprint is $\approx 500$ times smaller when compared to TIM$^+$. We argue that since \emph{EaSyIM} can be efficiently parallelized owing to the independence of the MC simulations, its lack of efficiency when compared to TIM$^+$ can be easily mitigated by running it in parallel on $8$ cores\footnote{Considering the ease of availability of multiple cores in a single machine when compared to large amount of RAM.} while ensuring the memory gain to be the same.


\textbf{Scalability}: Figure~\ref{fig:memory_growth_nethept_dblp_compare} portrays the growth of the memory-footprint of \emph{EaSyIM}, CELF++, and TIM$^+$ with varying seeds for the NetHEPT and the DBLP datasets. It is evident that the memory-consumption of all other techniques, barring TIM$^+$ (which grows at a much faster rate), grows linearly with the number of seeds. It is also clear that \emph{EaSyIM} possesses the smallest memory-footprint, being $\approx4$ and $\approx500$ times smaller when compared to CELF++ and TIM$^+$ respectively. Moreover, TIM$^+$ ($\epsilon=0.1$) crashed for $k>10$ on the DBLP dataset due to going out of memory on our machine. To further this, our experiments revealed that even after relaxing $\epsilon$ to $0.15$ and $0.2$ it crashed for $k>25$ and $k>50$ respectively. This analysis unravels the obvious problem with TIM$^{+}$, i.e., though being efficient it requires an exorbitantly high amount of main-memory and thus, cannot be termed \emph{scalable}. Figure~\ref{fig:memory_all_all_100_compare} shows the additional amount of memory required, over and above the memory required to load the graph, by each of our benchmarking algorithms for $4$ different datasets. It is clear that \emph{EaSyIM} possesses the least overhead while SIMPATH possesses the highest. Note that we omit the bar corresponding to TIM$^+$, as it hinders the comparison with other techniques owing to its memory consumption being too high. This shows that \emph{EaSyIM} possesses the capability of scaling to very large graphs. For additional results please see Appendix~\ref{app_res:opinion_aware}.

\begin{table}[t]
\centering
\scalebox{0.7}{
\begin{tabular}{|c|c|c|c||c|c|c|}
\hline
\multirow{2}{*}{\bf Dataset} & \multicolumn{3}{|c||}{\bf Running Time (min)} &
\multicolumn{3}{c|}{\bf Memory (MB)} \\
\cline{2-7}
& \bf CELF++ & \bf EaSyIM (l=1) & \bf Gain & \bf CELF++ & \bf EaSyIM (l=1) & \bf Gain \\
\hline
\hline
NetHEPT & 5352.25 & 118 & \bf 45.35x & 23.26 & 3.39 & 6.86x \\\hline
HepPh & 9746.74 & 230 & \bf 41x & 24.60 & 3.47 & 7.08x \\\hline
DBLP & NA & 5071.67 & \bf\infinity & NA & 44.73 & \bf \infinity \\\hline
\end{tabular}
}
\tabcaption{{\bf Comparing \emph{EaSyIM} with CELF++, $k=100$ and $l=1$.}}
\label{tab:resultsCELFPP}
\end{table}

In summary, both \emph{OSIM} and \emph{EaSyIM} provide the best trade-off between \emph{memory-consumption} and \emph{running-time}, and possess the capability to perform IM on large graphs using moderately sized machines or even a laptop.

\section{Related Work}
\label{sec:related}
The IM problem has been studied extensively over the past decade, thus, it is rather difficult to write a complete literature review in one section. A detailed comparison of the OI model with both IC-N \cite{negIC} and OC \cite{ovm}, the only other models for \emph{opinion-aware} IM, has been done qualitatively (Sec.~\ref{sec:intro}) and empirically (Sec.~\ref{sec:exp}). Here, we overview the existing works that overlap with our problem.

{\bf(1) \emph{Topic-Aware and Competitive IM}}: The motive of topic-aware IM \cite{topic_aware_im1, topic_aware_im2, topic_aware_im3} is to design strategies for maximizing the collective \emph{spread} of information under a given topic-distribution. It allows different influence probabilities for each topic (assuming a joint distribution of topics), without considering the opinions (of a node) towards these topics. On the other hand, the objective of competitive IM \cite{competitive_im1, competitive_im2, competitive_im3} is to maximize the \emph{spread} of information, about a specific content, in the presence of competitors. It is rather intuitive that competitive IM is similar to topic-aware IM, thus, both of them suffer with same set of deficiencies, i.e., (a) the absence of opinions and negatively activated nodes, and (b) the incapability of capturing their change as information propagates. In addition, another limitation for competitive IM lies with the information-diffusion process, where each node (strictly) propagates information about a particular content and any newly activated node (by the former) also (strictly) follows the same content. In summary, both of them consider maximization of the number of active nodes (opinion-oblivious notion of \emph{spread}), rendering the spread function to be submodular and thus, are closer to the \emph{classical} IM problem.

{\bf(2) \emph{IM in signed networks}}: IM in signed networks \cite{friend_foe_voter_model} considers the presence of an opinion, from a binary valued set containing exactly opposite opinions, at each node with the notion of \emph{friends} -- capable of activating a node with their own opinion and \emph{foes} -- capable of doing the opposite. The major limitation in this model is the stringent constraint on the parameters to attain values from a \emph{discrete binary} valued set. On the contrary, OI allows \emph{opinions} and \emph{interactions} to come from a real-valued domain. Moreover, the \emph{interaction} between two nodes is given a \emph{probabilistic} aspect which is missing in the signed network voter model. In fact, the signed network voter model is rather a \emph{special-case} of the OI model.

{\bf(3) \emph{Classical IM}}: Kempe et al. \cite{kempe} proved that finding an optimal solution for the IM problem is NP-Hard, and that a simple greedy ({\kempegreedy}) algorithm provides the best approximation guarantees in polynomial time. It required $O(kmnr)$ time to produce a solution, where $r$ is the number of MC simulations (usually $r\approx10K$). This high time complexity renders the {\kempegreedy} algorithm inapplicable to the networks of today. Since then, a host of works have introduced optimizations in {\kempegreedy} with CELF++ \cite{celfPlus} being the most efficient of all within this class of algorithms. Note that these optimizations never improved the asymptotic time complexity of the algorithm. In pursuit of better time complexity, researchers have recently resorted to techniques that use \emph{sampling} with \emph{memoization} \cite{opt1,tim,skim,imm}, to portray superior efficiency while retaining approximation guarantees. Among them, TIM$^{+}$ \cite{tim}, that runs in $O((k+l)(m+n)\log {n}/\epsilon^{2})$ expected time and produces a $(1-\frac{1}{e}-\epsilon)$-approximate solution, where $\epsilon$ is a constant, with probability as high as $1-{n^{-l}}$, and its improvement IMM \cite{imm} are the most efficient. Despite their superior efficiency these algorithms lacked \emph{scalability}, owing to their exorbitantly high memory footprint. The worst case space complexity of TIM$^{+}$ is $O(n^2\log{{n \choose k}}/\epsilon^{2})$, which can be very high for small values of $\epsilon$. In addition to the above mentioned techniques, literature has witnessed many heuristic algorithms. Among them, IRIE and SIMPATH \cite{irie,simpath} are considered state-of-the-art heuristics for the IC and LT models respectively. Although, these techniques build upon a similar idea as that of \emph{EaSyIM}, its algorithm design and analysis is very different from them with an additional advantage of running in linear space and time. Moreover, as shown in Sec.~\ref{sec:exp}, both of these techniques are not capable of scaling to larger datasets.
\ignore{
{\bf(1) \emph{Topic-Aware IM}} \cite{topic_aware_im1, topic_aware_im2, topic_aware_im3}: The motive of topic-aware IM is to design strategies for maximizing the collective \emph{spread} of information under a given topic-distribution. The most notable difference between \emph{topic-aware} and \emph{opinion-aware} IM is that the former allows different influence probabilities for each topic (assuming a joint distribution of topics), without considering the opinions (of a node) towards these topics. Moreover, neither does the former possesses a notion of \emph{negatively} activated nodes nor does it considers the \emph{change of opinions} as information propagates, hence rendering the \emph{spread} function submodular.
\\
{\bf(2) \emph{Competitive IM}} \cite{competitive_im1, competitive_im2, competitive_im3}: The objective of competitive IM is to maximize the \emph{spread} of information, about a specific content, in the presence of competitors. It is rather intuitive that competitive IM is similar to topic-aware IM, thus, it suffers with the same set of deficiencies as topic-aware IM, i.e., (a) the absence of opinions and (b) the incapability of capturing their change as information propagates. In addition, another limitation lies with the information-diffusion process, where each node (strictly) propagates information about a particular content and any newly activated node (by the former) also (strictly) follows the same content.

In summary, both topic-aware and competitive IM consider maximization of the number of active nodes (opinion-oblivious notion of \emph{spread}), where multiple topics are being propagated (for the former) or multiple competing sources are propagating information (for the latter). Moreover, the absence of a second layer capable of capturing the diffusion and change of opinion renders these problems closer to the \emph{classical} IM problem.
\\
{\bf(3) \emph{IM in signed networks}}: IM in signed networks \cite{friend_foe_voter_model} considers the presence of an opinion, from a binary valued set containing exactly opposite opinions, at each node with the notion of \emph{friends} -- capable of activating a node with their own opinion and \emph{foes} -- capable of doing the opposite. The major limitation in this model is the stringent constraint on all the parameters attaining values from a \emph{discrete} valued \emph{binary} set. On the contrary, the OI model allows \emph{opinions} and \emph{interactions} to come from a real-valued domain. Moreover, the \emph{interaction} between two nodes is given a \emph{probabilistic} aspect which is missing from the signed network voter model. In fact, the signed network voter model is rather a \emph{special-case} of the OI model. In summary, \emph{Opinion-Aware} IM under the OI model addresses most of the limitations discussed above and is, thus, generic and closer to the real-world scenarios.
\\
{\bf(4) \emph{Classical IM}}: Kempe et al. \cite{kempe} proved that finding an optimal solution for the IM problem is NP-Hard and were the first to prove that a simple greedy ({\kempegreedy}) algorithm can provide the best approximation guarantees in polynomial time. It required $O(kmnr)$ time to produce a solution, where $r$ is the number of MC simulations (usually, $r\approx10K$). This high time complexity renders the {\kempegreedy} algorithm inapplicable to the networks of today. Since then, a host of works have introduced optimizations in {\kempegreedy} with CELF++ \cite{celfPlus} being the most efficient of all within this class of algorithms. Note that these optimizations never improved the asymptotic time complexity of the algorithm. In pursuit of better time complexity, researchers have recently resorted to techniques that employ the use of \emph{sampling} and \emph{memoization} \cite{opt1,tim,skim,imm}, to portray superior efficiency while retaining approximation guarantees. Among them, TIM$^{+}$ (notation and details in \cite{tim}), that runs in $O((k+l)(m+n)\log {n}/\epsilon^{2})$ expected time and produces a $(1-\frac{1}{e}-\epsilon)$-approximate solution, where $\epsilon$ is a constant, with probability as high as $1-{n^{-l}}$, and its improvement IMM \cite{imm} can be considered to be the most efficient. Despite of their superior efficiency, these algorithms lacked on the \emph{scalability} aspect owing to their exorbitantly high memory footprint. The worst case space complexity of TIM$^{+}$ is $O(n^2\log{{n \choose k}}/\epsilon^{2})$, which can be very high for small values of $\epsilon$. In addition to the above mentioned techniques, literature has witnessed many heuristic algorithms \cite{mia, pmia, irie, ldag, simpath}. Among them, IRIE \cite{irie} and SIMPATH \cite{simpath} are considered state-of-the-art heuristics for the IC and LT models respectively. Although, these techniques build upon a similar idea as that of \emph{EaSyIM} the algorithm design and analysis for \emph{EaSyIM} is very different from them with an additional advantage of running in linear space and time. Moreover, as shown in Sec.~\ref{sec:exp} both of these techniques are not capable of scaling to larger datasets.
\ignore{
It has been shown in the Analysis section that EaSyIM algorithm, which is applicable in case of normal IC and WC models has theoretically the time complexity of $O(k(m+n)d)$. In this  section, we compare the expected value of spread achieved by EaSyIM with CELF++ and TIM algorithms, the state of the art algorithms for Independent cascade model. We adopt  the C++ implementation of the code for CELF++ and TIM that was made available by the authors. As a conventional practice, we run the monte carlo simulations 10,000 times to get expected value of spread.
Figure~\ref{fig:spreadNethept} clearly shows that the spread achieved by EasyIM is nearly the same as that of CELF++ and TIM's algorithm for the seeds varying from 1 to 100.  This clearly justifies the correctness of the algorithm devised for influence maximization. A similar plot for  the running time figure ~\ref{fig:timeNethept}  depicts the running time of the algorithms for different seeds.  CELF++ takes the maximum amount of time and EaSyIM takes the least amount of time. The running time (in sec) reported here are for running one single iteration of the influence maximization problem.\\\\
\textbf{Running time}
The time complexity of EaSyIM is linear and so increases linearly with the increase  in number of seeds. This is clearly visible from the figure ~\ref{dblpvsk}. Also the increase in time complexity with the increase in depth is also linear, keeping the memory footprint to be constant. 

Table~\ref{tab:resultsCELFPP} compares the running time and memory consumption of \emph{EaSyIM} with \emph{CELF++}. The results for the YouTube dataset using \emph{CELF++} have not been reported since it required $>200$ hours for $400$ iterations and was still not completed. Table~\ref{tab:resultsTIM} shows a similar comparison of \emph{EaSyIM} with \emph{TIM}. Although, TIM is $\approx 10$ times faster when compared to \emph{EaSyIM} for the DBLP dataset, we argue that this difference can be easily mitigated (as discussed in Sec.~\ref{sec:intro}) by running \emph{EaSyIM} in parallel on 10 cores while ensuring the memory gain to be the same. The memory consumed by \emph{TIM} was $> 100$GB for $\epsilon=0.1,\ k=50$ on YouTube, thus we could not report the results for the same.\\\\
\textbf{Memory footprint}
It is evident from the table  that the memory footprint of EaSyIM is 200 times less than the memory footprint of TIM's algorithm though being 8-9 times slower than TIM. An important fact to note here is that running TIM in parallel is not possible as it will consume way  more memory if run in parallel. On the other hand,  EaSyIM can be easily run in  parallel, as the 10K simulations can be run  in parallel on different cores and thus leading to a much improvement in running  time. Also, an interesting fact to note here is that EaSyIM takes the same time over 10K MC simulations as that  of TIM+, when run on 10 cores in parallel and still providing an improvement in memory by 10 times. This clearly justifies the scalability aspect of EaSyIM over TIM.
}
\ignore{
On running our ten simulation in parallel, EaSyIM consumed less than 6GB memory for SocLive Journal dataset which contains 5M edges. This memory consumption is easily  available on a commodity hardware. This encourages the scalability of EaSyIM whose efficiency can further be improved by running the monte carlo simulations in parallel as each run is independent. 
}

%
\begin{ignore}
{
\begin{theorem}
Given a node $u$, $Error_u(v) \leq p (Adj(v)-1)$, where $Error_u(v) = \gamma^*_u(v) - \gamma_u(v)$
\end{theorem}
\begin{corollary}
For DAG,  $Error_u(v) = 0$ i.e.  $\gamma^*_u(v) = \gamma_u(v) \implies \sigma(S) = \sigma(S^*)$ 
\end{corollary}
ADD UNDIRECTED ANALYSIS IN APPENDIX, for any path $>$ diameter and  contributes to error in score for all nodes
\begin{algorithm}
\caption{Score2($G(V,E), k, d, u$)}\label{Score2}
\begin{algorithmic}[1]
\REQUIRE Graph $G = (V,E)$,\#seeds $k=|S|$, depth $d$
\FOR{each $v \in V$}
\STATE $\gamma_u [v] = 0$
\ENDFOR
\FOR{each $i \in \{ 1, \ldots ,  d\}$}
    \FOR{each $v \in BFSlevl(i,u)$}
       \STATE  $\gamma_u[v] = \sum_{e=(w,v) \mid  w \in \vec{Adj(v)} } {\gamma_u[w]p(e)}$
    \ENDFOR
\ENDFOR

\FOR{each $v \in V$}
\STATE $Score1[u] = Score1[u] +o_v  \gamma_u[v] $
\ENDFOR

\STATE return $Score1[u]$

\end{algorithmic}
\end{algorithm}

\begin{theorem}
\label{thm:score2}
$\gamma[v] - Score1[v] \leq $
\end{theorem}
A higher error to a node which has more number of unions with him. And we are overcounting here. Also the fraction of error is  $< p^2$ for a score of 2p
}
\end{ignore}



Difference between Topic-aware IM and Opinion-Aware IM:
1) The former considers static distributions of influence probabilities across all the topics, between each pair of nodes. A user influences another user with a fixed-probability for a given topic. While the latter, associates the effect of topic on the personal user's opinion (strength + polarity). The influence of a user on someone, still depends upon the relationship across all different topics combined. We believe this is a more generic notion, as we give the user a choice of their personal opinion on that topic and then let the relationship between two people also define the way one will be influenced by the other.
The former does not have any notion about polarity/sentiment with the topic opinions.

2) Moreover, we consider change of opinions as the information propagates, which the topic-influence does not consider.

3) Although, We maximize the opinion about a topic instead of a collection of topics, we model a more generic thing where the idea is to maximize the net opinion and not the total number of people. Topic-Aware IM still tries to maximize the number of people.

4) One can never negatively influence a person in this scenario. As the meaning of a negative influence probability is not clear.

[To Do]: Download citations for Competitive IM.
Difference between Competitive IM and Opinion-Aware IM:
1) Node polarities are fixed in competitive IM, thus it is submodular as well. The idea is maximize info. from one source and minimize info from the other. 
2) The idea of assuming that if someone gets activated by a positive source will always spread positively, and vice-versa from the negative source, is too simplistic an assumption. This is the place we differ from Competitive IM.

Both topic-aware IM and competitive IM consider maximization of the num nodes activated, while multiple topics are considered (former) or multiple competing sources are propagating information (latter). There problem setting is a bit simple, as the nodes don't change their polarity as the information propagates for the competitive case and also for the topic-aware IM case the probabilites of influence for a topic by a node don't change.
\textbf{Methods with efficiency \& approximation guarantees:} In their seminal work, Kempe et al. \cite{kempe} introduced the influence maximization problem and proved that finding an optimal solution for this problem is NP-Hard. Moreover, they were the first to come up with an approximation algorithm with provable bounds on the information spread that could be achieved by any other algorithm in polynomial time. Although, this algorithm can still be safely attributed to possess the best possible approximation guarantees on the achieved spread, it cannot work well in practice given the humongous scale at which networks of today operate. Works have been done to improve the running time of these algorithms by introducing optimizations \cite{celf, celfPlus} while still maintaining the same spread guarantees. Both CELF and CELF++ portrayed significant improvements from the efficiency fronts, however still this was not enough.\\
\\
\textbf{Memoization based Techniques:} Given that the optimizations as discussed earlier were not enough, researchers resorted to memoization based techniques to come up with faster algorithms. These techniques pre-compute a lot of information at an offline pre-processing step and store this in the main-memory. The works belonging to these categories can though be attributed to being, \emph{efficient} -- one of the fastest existing algorithms in the literature, they cannot be deemed \emph{scalable} owing to their huge memory footprints. More recently, Borgs et al. \cite{opt1} came up with a $1-\frac{1}{\mathrm{e}} - \epsilon$, for any $\epsilon>0$, approximate algorithm which runs in $O((m+n){\epsilon}^{-3}\log n)$ time. Tang et al. \cite{tim} came up with an algorithm called \emph{TIM}, with better running time bound under the same approximation factor. Eventually, in these line of works \emph{SKIM} \cite{skim} is the most recent work, however its performance comparisons with \emph{SKIM} seem to be better on some grounds while being worse on many others. Thus, both \emph{TIM} and \emph{SKIM} can be safely assumed to be the current state of the art from an efficiency perspective.
\\
\textbf{Opinion-Aware Models:}
One of the first works in this space was was carried out by Chen et al. \cite{negIC}. They were the first to introduce a model incorporating the diffusion of negative opinions. They proposed a rather simplistic model where a negatively activated user retained its polarity throughout its lifetime and probability for a transition from positive polarity to negative polarity was assumed to be the same for each node. This simplistic setting allowed the model to be proven as submodular and efficient algorithms that scaled well in this setting were introduced. Li et al. \cite{plosone} came up with an influence maximization framework in signed networks. They proposed interesting techniques to maximize either positive or negative influence. However, the model proposed by them considered static opinions and didn't accommodate change of opinions with time. More recently, Zhang et al.\cite{ovm} proposed a model for information diffusion, according to  which the orientation of a user is influenced by the orientation of neighbouring nodes. Their work accommodates change of polarities with time and thus is closer to real-world scenarios. Thus this diffusion model didn't possess the nicer submodularity properties and a greedy solution could not be always referred to as the best solution. We introduce a more generic and realistic model when compared to \cite{ovm} and show scalable and efficient algorithms that scale well for large graphs.\\ 
\textbf{Heuristics and other Interesting Solutions:} There exist many other heuristic solutions for this problem \cite{pmia,simpath,irie,ldag,mia}. Moreover, apart from the classical influence maximization problem researchers have studied techniques to learn the influence edge probabilities \cite{goyal_vldb, im_learn_prob1, im_learn_prob2}. Apart from that, there has been introduction of newer areas like the influence skyline \cite{inf_skyline} and viral ad-targeting with minimum regret \cite{virad2}. Critical alert mining \cite{critical_alert_mining}.

\textbf{[Friday Night] Read Papers [Newly Downloaded Ones]:
	Try to fit them in Introduction or Related work. Cover all related work properly\\
Memoization Techniques:\\
    1) Started with works on MC pruning and removal.\\
        Static Greedy, IM Rank etc.\\
    2) Complete MC removal, randomized techniques with high prob. (SODA, TIM, SKIM and many more).
Efficient Exact Techniques:\\
    1) Kempe.\\
    2) CELF.\\
    3) CELF++.\\
Heuristics:\\
    1) All wei chen papers.\\
    2) SIMPATH for LT.
Models:\\
    1) NegIC.\\
    2) OC -- OVM.\\
Learning Edge Probability:\\
    1) VLDB and etc.\\
Interesting Applications:\\
    1) Influence Skyline.\\
    2) Ad-placement.\\
Write a line or two about all the things were we are better or are proposing a solution. In effect state that our solution is holistic.\\
}
}

\section{Conclusions and Future Work}
\label{sec:conc}
In this paper, we addressed the problem of influence maximization in social networks under a very generic opinion-aware setting, where the nodes can possess any one of the -- positive, neutral and negative opinions. To this end, we introduced the novel \emph{MEO} problem and devised a \emph{holistic} solution to the influence maximization problem; by coming up with an opinion-cum-interaction (\emph{OI}) model, and scalable algorithms -- \emph{OSIM} and \emph{EaSyIM}. Since majority of the works in the literature operate oblivious to the existence of opinions and are not scalable, there are efficiency concerns in real-world scenarios with huge graphs. Consequently, we designed efficient algorithms that run in time linear to the size of the graph. Moreover, the space complexity of our algorithms is also linear; which is orders of magnitude better when compared to the state-of-the-art techniques. Our empirical studies on real-world social network datasets showed that our algorithms are effective, efficient, scale well -- providing the best trade-off between running time and memory consumption, and are practical for large real graphs. In future, we would like to come up with a distributed version of our algorithms, thus enabling it scale to even larger graphs.\\
\\
\textbf{Acknowledgements:} We would like to thank Sayan Ranu, Arnab Bhattacharya, Srikanta Bedathur and Jeff Ullman for providing valuable suggestions and insights throughout the course of this work.

\bibliographystyle{abbrv}
\balance
\bibliography{im}

\appendix

\ignore{
\section{Proof of Lemma 2}
\label{app_thm:lemma2}

{\small

\begin{proof}
Let us consider a bipartite graph $G(V,E)$ (Figure~\ref{fig:monotone}) containing two sets of nodes $X$ and $Y=V\setminus X$. The set $X$ contains $n_x=|X|$ nodes while the set $Y$ contains $n_y=|Y|, n_y \geq 2n_x$ nodes. Moreover, let us assume that the opinions of the nodes $\forall x_i \in X$ to be $o_{x_i}=+1$, while that of nodes $\forall y_j \in Y$ to be $o_{y_j}=0$. There exist directed edges from each node $x_i \in X$ to two consecutive nodes $y_{2i-1},y_{2i} \in Y$. Further, the influence probabilities are initialized as $p_{(u,v)}=1, \forall (u,v) \in E$ and the interaction probabilities $\varphi$ associated with all but the last two edges is $1$, which attain a value of $0$. Mathematically, $\varphi_{x_i,y_j}=1 \mid 1\leq i\leq n_x-1, j=2i-1$ or $j=2i$ while $\varphi_{x_i,y_j}=0 \mid i=n_x, j=2i-1$ or $j=2i$. An analysis of the properties of the spread function $\Gamma(\cdot)$, assuming the above construction, is presented next.

Consider a scenario where the set of seeds $S$ is initialized with a single node $x_i \mid 1 \leq i \leq n_x-1$. With this seed, both of the nodes $y_{2i-1}$ and $y_{2i}$ will get activated with final opinion $+1/2$. Hence, the effective spread $\Gamma^o(S) = 2\times(1/2) = +1$. Now if the node $x_{|X|}$ is added to the set $S$, the nodes $y_{2|X|-1}$ and $y_{2|X|}$ will also get activated. The final opinion of these newly activated nodes will be $-1/2$ each. The effective spread of the updated seed set is $\Gamma^o(S) = 1 - 1 = 0$. On adding another node $x_j \mid j\neq i$ and $1\leq j\leq n_x-1$ to the set $S$, the nodes $y_{2j-1}$ and $y_{2j}$ will become active with a final opinion of $1/2$ each. Thus, the effective spread is now $\Gamma^o(S) = 0+1 = 1$. In the above case it can be seen that the effective spread varied from $1\rightarrow0\rightarrow1$ on subsequent additions of nodes to the seed set. This clearly shows that the opinion spread function $\flow^o(S)$ is neither monotone nor submodular.
\hfill{}
\end{proof}	
}

\section{Proof of Theorem 1}
\label{app_thm:theorem1}

{\small

\begin{proof}
We show that the classical set cover problem can be decided in polynomial time if an approximation, with a constant ratio, for MEO exists in polynomial time. To this end, we reduce the set cover problem to an instance of MEO. Given a set of elements $Q = \{ q_1,\ldots,q_n\}$ and a collection of subsets $R = \{R_1,\ldots,R_m\}$ where $R_i \subseteq Q, \forall i \in \{1,\ldots,m\}$, the set-cover decision problem returns true if $\exists C\subseteq R; |C|=k$ and $\cup C_j  = Q,\ \forall C_j \in C$. 

Now given the set-cover problem, we construct a graph as shown in Figure~\ref{fig:submod} to obtain an instance of MEO by adding three layers of nodes in addition to a sink node. For each subset $R_i \in R$, a node $x_i$ is added in the first layer of the graph with opinion $o_{x_i} = 0$. In the second  layer, for each element $q_i\in Q$, we add a node $y_i$ with opinion $o_{y_i}=\frac{1}{n}$. We add a third layer containing $m + n -2$ nodes denoted by $z_i$ with opinion $o_{z_i} =-\frac{1}{2n} $. Along with this, we add a sink node $s$ with opinion $o_s = -1+\frac{1}{n}$. Now, a directed edge  $(x_i,y_j)$ is  added iff $q_i \in R_j$. For each node $y_i$ in the second layer, edges $(y_i,z_j), \forall j \in \{1,2,\ldots,m+n-2\}$ are added in the graph. To finalize this construction, we add the edge $(z_i,s), \forall i \in \{1,2,\ldots,m+n-2 \}$ . 

For the IC model, all the edges $(u,v) \in E$ are assigned $p_{(u,v)} = 1$ and $ \varphi_{(u,v)} = 1$. Moreover, the LT model uses the same activation threshold for each node $\theta_u = 1$. It can be seen that for both these models, the seeds should be chosen from the first layer (any of the $x_i$'s), because they will activate $y_j$'s which in turn will activate any of the $z_i$'s and thus, $s$ would always be activated.

We instantiate MEO with $\lambda = 1$. Without loss of generality, assuming $\exists$ a set-cover of size $k$, $C = \{x_1,x_2,\ldots,x_k \}$, then choosing these nodes as seeds ensures the maximum spread. The final opinion of $y_i$, $o_{y_i}' = (0+\frac{1}{n})/2=\frac{1}{2n}$. Similarly, the final opinion of $z_i$, $o_{z_i}' = \frac{1}{2n}-\frac{1}{2n}=0$ and the final opinion of $s$, $o_{s}'=(0-1+\frac{1}{n})/2=-\frac{1}{2}+\frac{1}{2n}$. Hence, the spread = $n\frac{1}{2n} + 0  - \frac{1}{2} + \frac{1}{2n} = \frac{1}{2n}>0$. When a set cover of size $k$ does not exist, the maximum spread achieved is $|Q'|\frac{1}{2n} + 0   -\frac{1}{2} + \frac{1}{2n}  \leq (n-1)\frac{1}{2n} - \frac{1}{2} + \frac{1}{2n} =0$, where $Q', |Q'|\leq n-1$ is the maximum number of elements covered by the $k$ chosen sets. Hence, the maximum spread achieved when the set-cover does not exist is $0$.

If there is a polynomial time algorithm which approximates MEO within a constant ratio, then we can decide the set-cover problem in polynomial time using the above mentioned reduction. In other words, if an approximate algorithm gives a spread $\leq0$ on the reduced graph, then a set-cover does not exist while a set-cover exists if the spread $>0$. This renders the set-cover problem decidable in polynomial time,  which means P=NP. Therefore, approximating MEO within a constant ratio is NP-hard.
\hfill{}
\end{proof}	
}

\section{Proof of Lemma 3}
\label{app_thm:lemma3}

{\small

\begin{proof}
Using Kempe's ~\cite{kempe} analysis of the IC model it can be seen that,
\begin{align}
	\label{ic}
	 \sigma(&A\cup B) = \sum_{X} P(X) \sigma_X(A\cup B) \nonumber\\
    	&= \sum_{X} P(X) \big( \sigma_X(A) + \sigma_X(B) - \sigma_X(A\cap B) \big) \nonumber\\
	&= \sum_{X} P(X) \sigma_X(A)  + \sum_{X} P(X) \big(\sigma_X(B)-\sigma_X(A\cap B) \big) \nonumber\\
	&= \sigma(A) + \sigma_{\sigma(A)}(B).
\end{align}
A similar analysis can be done for the LT model using the reduction to the \emph{live-edge} model \cite{kempe}. Under the live-edge model each node possesses a single (live) incoming edge, thus, a node $v$ is activated via a path either from a node in $A$ or in $B$. 
\begin{align}
	\label{lt}
	\sigma(A\cup B) &= \sum_{X} P(X) \sigma_X(A\cup B) \nonumber\\
 	   &= \sum_{X} P(X) \big( \sigma_X(A) + \sigma_X(B) \big) \nonumber\\
	&= \sum_{X} P(X) \sigma_X(A)  + \sum_{X} P(X) \sigma_X(B)\nonumber
\end{align}
Since, $\sigma(A)$ and $\sigma(B)$ are disjoint,%
\begin{align}
	\sigma(A\cup B) &= \sigma(A) + \sigma_{\sigma(A)}(B).
\end{align}
\hfill{}
\end{proof}	
}

\begin{figure}[t]
	\centering
	\subfloat[Submodularity]
	{
		\scalebox{0.39}{
		\begin{tikzpicture}[node distance=15mm,
		round/.style={fill=green!50!black!20,draw=green!50!black,text width=10mm,align=center,circle,thick}]
		\centering
		\node at (0,2.5) {\huge $\mathbf{o_s = 0}$};
		\node at (3.2,2.5) {\huge$\mathbf{o_t = 1}$};
		\node at (0,-7.5) {\huge Layer 1};
		\node at (3.3,-7.5) {\huge Layer 2};

		\node[round, scale=0.9, label={\huge $\mathbf{s_1}$}] (u1) {};
		\node[round, scale=0.9, label={\huge $\mathbf{s_{n_s-1}}$}] (u2) [below = 2.4cm of u1] {};
		\node[round, scale=0.9, label={\huge $\mathbf{s_{n_s}}$}] (um) [below =0.7cm of u2] {};

		\node[round, scale=0.5, minimum size=5pt, label={\LARGE $\mathbf{t_1}$}] (v1) [above right =.13cm and 2.7cm of u1 ] {};
		\node[round, scale=0.5, label={\LARGE $\mathbf{t_2}$}] (v2) [below = .6cm of v1] {};
		\node[round, scale=0.5, label={\LARGE $\mathbf{t_{2n_s-3}}$}] (v3) [below = 1.8cm of v2] {};
		\node[round, scale=0.5, label={\LARGE $\mathbf{t_{2n_s-2}}$}] (v4) [below = .5cm of v3] {};
		\node[round, scale=0.5, label={\LARGE $\mathbf{t_{2n_s-1}}$}] (v5) [below = 0.5cm of v4] {};
		\node[round,scale = 0.5, minimum size=5pt, label={\LARGE $\mathbf{t_{2n_s}}$}] (v6) [below =.5cm of v5] {};

		\draw[thick] (0,-2.4) ellipse (1.1cm and 4.4cm);
		\draw[thick] (3.3,-2.4) ellipse (1.1cm and 4.4cm);

		\draw[loosely dotted,thick, line width=2.5pt](0,-1.2) -- (0,-1.8);
		\draw[loosely dotted,thick, line width=2.5pt](3.3,-1.2) -- (3.3,-1.8);

		\path [->, -triangle 45,line width=1.4pt] (u1) edge (v1)
		edge (v2)
		(u2) edge (v3)
		(u2) edge (v4)
		(um) edge (v5)
		(um) edge (v6)
		; 
		\end{tikzpicture}
		}
		\label{fig:monotone}
	}
	\subfloat[Tractability]
	{
		\scalebox{0.35}{
			\begin{tikzpicture}[node distance=15mm,
			round/.style={fill=green!50!black!20,draw=green!50!black,minimum size=5mm,text width=10mm,align=center,circle,thick, line width=1.6pt}]
			\centering
			\node at (0,2) {\huge $\mathbf{o_x = 0}$};
			\node at (3.1,2) {\huge$\mathbf{o_y = \frac{1}{n}}$};
			\node at (6.2,2) {\huge$\mathbf{o_z = -\frac{1}{2n}}$};
			\node at (9.7,-4.5) {\huge$\mathbf{o_s = \frac{1}{n}-1}$};
			\node at (9.7,-3.5) {\huge \textbf{sink}};

			\node at (0,-6) {\huge\textbf{ Layer 1}};
			\node at (3.3,-6) {\huge \textbf{Layer 2}};
			\node at (6.6,-6) {\huge \textbf{Layer 3}};

			\node[round] (u1) {\huge $\mathbf{x_1}$};
			\node[round] (u2) [below of=u1] {\huge $\mathbf{x_2}$};
			\node[round] (um) [below =1cm of u2] {\huge $\mathbf{x_m}$};

			\node[round] (v1) [right =2cm of u1] {\huge $\mathbf{y_1}$};
			\node[round] (v2) [below of=v1] {\huge$\mathbf{y_2}$};
			\node[round] (vm) [below =1cm of v2] {\huge$\mathbf{y_n}$};

			\node[round] (x1) [right =2cm of v1] {\huge$\mathbf{z_1}$};
			\node[round] (x2) [below of=x1] {\huge$\mathbf{z_2}$};
			\node[round] (xm) [below =1cm of x2] {\huge$\mathbf{z_{m+n-2}}$};

			\node[round] (s) [rectangle, below right=.3cm and 2cm of x2] {\huge $\mathbf{s}$};

			\draw[thick] (0,-2) ellipse (1cm and 3.5cm);
			\draw[thick] (3.3 ,-2) ellipse (1cm and 3.5cm);
			\draw[thick] (6.6,-2.) ellipse (1cm and 3.5cm);

			\draw[loosely dotted,thick, line width = 1.6pt](0,-2.5) -- (0,-3);
			\draw[loosely dotted,thick, line width = 1.6pt](3.3,-2.5) -- (3.3,-3);
			\draw[loosely dotted,thick, line width = 1.6pt](6.6,-2.5) -- (6.6,-3);

			\path [->, ,-triangle 45,line width=1.6pt] (u1) edge (v1)
			edge (v2)
			edge (vm) 
			(u2) edge (v1)
			edge (vm)
			(um) edge (v1)
			edge (vm)
			(v1)edge (x1)
			edge (x2)
			edge (xm)
			(v2)edge (x1)
			edge (x2)
			edge (xm)
			(vm)edge (x1)
			edge (x2)
			edge (xm)
			(x1)edge (s)
			(x2)edge (s)
			(xm)edge (s)
			; 
			\end{tikzpicture}
		}
		\label{fig:submod}
	}
	\figcaption{Reductions for the submodularity and tractability analysis of MEO.}
	\label{fig:reductions_meo}
\end{figure}

\section{Proof of Lemma 4}
\label{app_thm:lemma4}

{\small

\begin{proof}
If $k = 1$,\ $\sigma (S_k) = \sigma(S_k^*)$ because {\ourgreedy}$(G,k,l)$ chooses the node with the maximum score. Since the score assigned to each node captures the expected value of its spread, the seed-node selected by the former is the same as that selected by the {\kempegreedy} algorithm. Let us assume that $\sigma (S_k) = \sigma(S_k^*)$ holds for $k=i$. Now for $k = i+1$, 
\begin{align*}
\sigma (S_{i+1}^*) &= \sigma(S_{i+1}^*) - \sigma(S_{i}^*) + \sigma(S_{i}^*)
\end{align*}
Paritioning $S_{i+1}$ as $S_i \cup \{u_{i+1}\}$, Lemma~\ref{thm:ICLT} yields the following,
\begin{align*}
                   &= \sigma_{\sigma(S^*_i)}(\{u_{i+1}\}) + \sigma(S_i^*)\\
                   &= \sigma_{\sigma(S_i)}(\{u_{i+1}\}) + \sigma(S_i) \\
	        &= \Delta^l(\{u_{i+1}\}) + \sigma(S_i)\\
                   &= \sigma(S_{i+1}).
\end{align*}

Since $\sigma(S_{i+1}) = \sigma(S_{i+1}^*)$, by the principle of mathematical induction $\sigma(S_k) = \sigma(S_k^*)$ holds $\forall k$.
\hfill{}
\end{proof}	
}

\section{Proof of Lemma 5}
\label{app_thm:lemma5}

{\small

\begin{proof}
The exact contribution of a node $v$ to the score of another node $u$ is
\begin{align}
\label{eq:exact_contri}
	\gamma^*_v(u)= \sum_{w \in \In(u)}\Big(p_{(w,v)}\bigcup_{\rho \in \pathset_{uw}} \prod_{e\in \rho}p_e \Big).
\end{align}
Similarly, the contribution of $v$ to the score of $u$ using the $PU$ algorithm is
\begin{align}
\label{eq:approx_contri}
	\gamma_v(u)=\bigcup_{\rho \in \pathset_{uv}} \prod_{e\in \rho}p_e.
\end{align}
Using the standard inclusion-exclusion principle,
\begin{flalign*}
\label{eq:exact_inc_exc}
	\bigcup_{\rho \in \pathset_{uv}} \prod_{e\in \rho}p_e = \bigg(& \sum_{\rho \in \pathset_{uv}}\prod_{e\in \rho}p_e - \sum_{\rho' \in \pathset_{uv}}\sum_{\rho \in \pathset_{uv}\setminus\{\rho'\}} \prod_{e\in \rho} p_e \prod_{e' \in \rho'}p_{e'}\nonumber\\
	&+ \ldots + (-1)^{|\pathset_{uv}|-1}\prod_{\rho \in \pathset_{uv}} \prod_{e \in \rho} p_e \bigg).
\end{flalign*}
Thus, Eq.~\ref{eq:exact_contri} and~\ref{eq:approx_contri} can be written as follows.
\begin{flalign*}
\gamma^*_v(u)=\sum_{w \in \In(u)}\Big(p_{(w,v)}\big(A_1 - A_2 + \ldots + (-1)^{t-1}A_{t^*} \big)\Big).	
\end{flalign*}
\begin{flalign*}
\gamma_v(u)=B_1 - B_2 + \ldots + (-1)^{t-1}B_t.	
\end{flalign*}
where $t^*=|\pathset_{uw}|$, $t=|\pathset_{uv}|$ and $A_i, i\leq t^*$, $B_j,j\leq t$ stand for the larger terms in the expansion of $\bigcup_{\rho \in \pathset_{uw}}\prod_{e\in \rho} p_e$ and $\bigcup_{\rho \in \pathset_{uv}}\prod_{e\in \rho} p_e$ respectively.\\
\\
Assuming $A_2<<A_1$ and henceforth $B_2<<B_1$, all the higher order terms can be neglected (significance of this assumption is described in Sec.~\ref{subsubsec:discussion}). Moreover, the relationship between $A_1$ and $B_1$ is defined as follows.
\begin{align*}
\sum_{w \in \In(v)} p_{(w,v)}A_1 &= \sum_{w \in \In(v)} p_{(w,v)}\big(\sum_{\rho \in \pathset{uw}}\prod_{e \in \rho} p_e\big)\\
	&=\sum_{\rho \in \pathset{uv}}\prod_{e \in \rho} p_e = B_1.
\end{align*}
Neglecting the summation over the pair-wise product of the paths between different intermediary nodes $w_i,w_j, \forall(i,j) \mid w_i, w_j \in \In(v)$, we devise a relationship between $B_2$ and $A_2$.
\begin{align*}
B_2&=\sum_{\rho \in \pathset_{uv}}\sum_{\rho'  \in (\pathset_{uv}\setminus\{\rho\})}\prod_{e \in \rho} p_e \prod_{e' \in \rho'} p_e'\\ 
&\geq \sum_{w \in \In(v)} p_{(w,v)}^2 \sum_{\rho \in \pathset_{uw}}\sum_{\rho'  \in (\pathset_{uw}\setminus\{\rho\})}\prod_{e \in \rho} p_e \prod_{e' \in \rho'} p_e'\\
&= \sum_{w \in \In(v)} p_{(w,v)}^2 A_2.
\end{align*}
The relative error $\epsilon^{DAG}_1$, is now defined as,
\begin{align}
\epsilon^{DAG}_1 &= \frac{|\gamma^*_v(u) - \gamma_v(u)|}{\gamma^*_v(u)} \approx\frac{B_2 - \sum\limits_{w \in \In(v)} p_{(w,v)}A_2}{\sum\limits_{w \in \In(v)} p_{(w,v)}(A_1-A_2)} \nonumber\\
Since&,\ A_2<<A_1,\ A_2<A_1^2\nonumber\\
	&\approx\frac{\sum\limits_{w \in \In(v)}p_{(w,v)}(p_{(w,v)}-1)A_2}{\sum\limits_{w \in \In(v)}p_{(w,v)}A_1}\nonumber\\
	&\leq\sum\limits_{w \in \In(v)}\big((p_{(w,v)}-1)A_1\big).
\end{align}
\ignore{
\begin{flalign}
	\Rightarrow \sum_{\rho in P} \prod_{e \in \rho} p_e  - \sum_{w \in In(u)}\left(p_{(w,v)}\big(   \sum_{\rho' \in P}\sum_{\rho \in P\setminus\{\rho'\}} \prod_{e\in \rho} p_e \prod_{e' \in \rho'}p_{e'} \big) \right)\\
               &\geq \sum_{\rho in P} \prod_{e \in \rho} p_e  - \sum_{w \in In(u)}\left(p_{(w,v)}\big(   \sum_{\rho' \in P}\sum_{\rho \in P\setminus\{\rho'\}} \prod_{e\in \rho} p_e \prod_{e' \in \rho'}p_{e'} \big) \right)
\end{flalign}
}
\hfill{}
\end{proof}	
}

\section{Proof of Lemma 6}
\label{app_thm:lemma6}

{\small

\begin{proof}
Using the same nomenclature as Lemma~\ref{thm:path_union_dags}, the contribution of $v$ in the score of $u$ using the $PU$ and the \emph{EaSyIM} algorithm is stated as follows
\begin{align}
	\gamma_v^{PU}(u)&=\sum_{i=1}^{t} (-1)^{i-1}B_i 	\approx B_1 - B_2.\\
	\gamma_v^{EaSyIM}(u)&=B_1.
\end{align}
The relative error (w.r.t $PU$) $\epsilon^{DAG}_2$, can now be stated as follows
\begin{align}
	\epsilon^{DAG}_2=\frac{B_2}{B_1 - B_2} \approx \frac{B_2}{B_1} \leq B_1.
\end{align}
\hfill{}
\end{proof}
}
}
\begin{figure*}[t]
\centering
	\subfloat[DBLP and YouTube]
	{
		\scalebox{0.185}{
			\includegraphics[width = 0.99\linewidth]{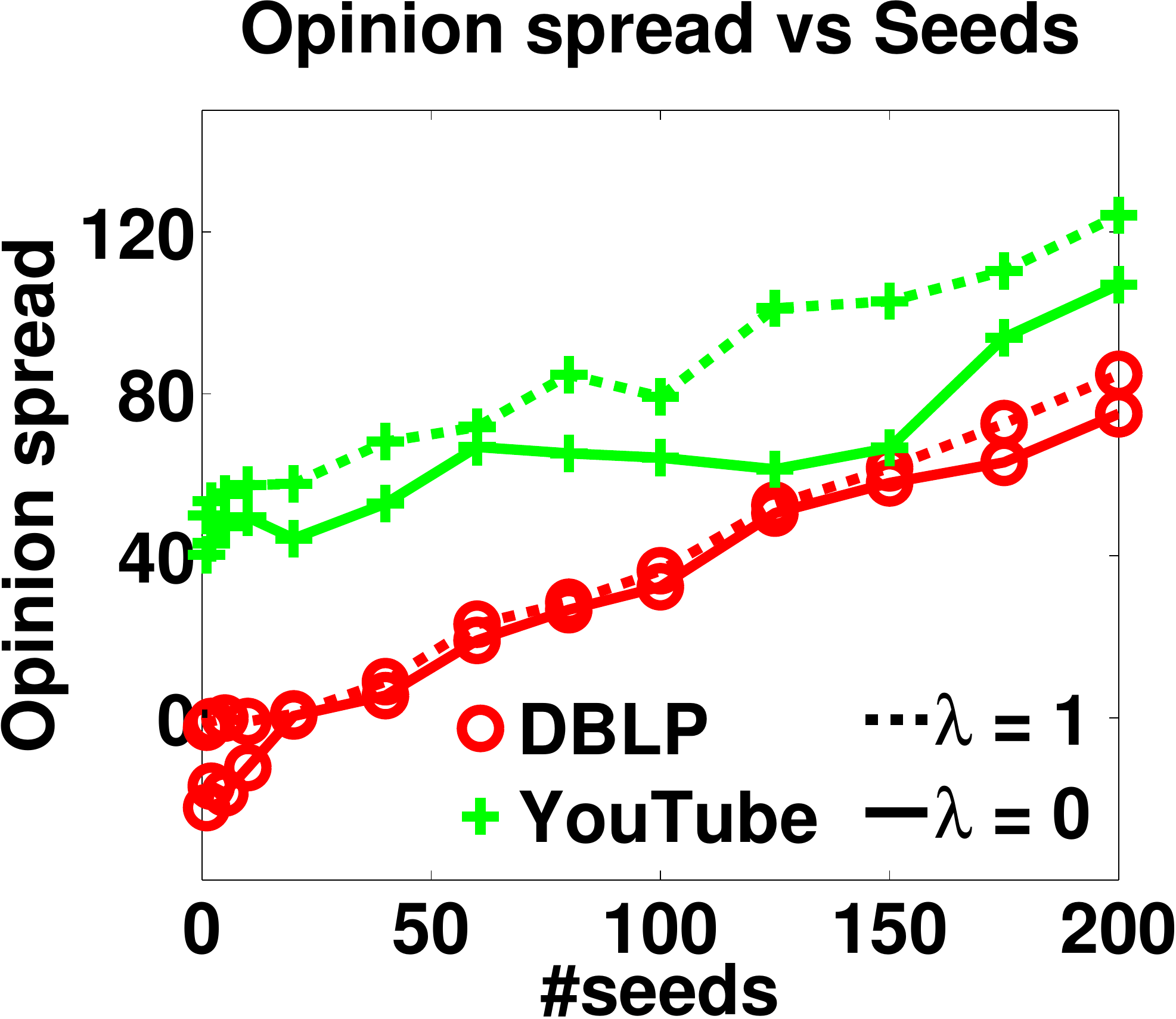}
		}
		\label{fig:dblp_youtube_lambda}
	}
	\subfloat[HepPh]
	{
		\scalebox{0.185}{
			\includegraphics[width = 0.99\linewidth,trim=0cm 0cm 0cm 0cm]{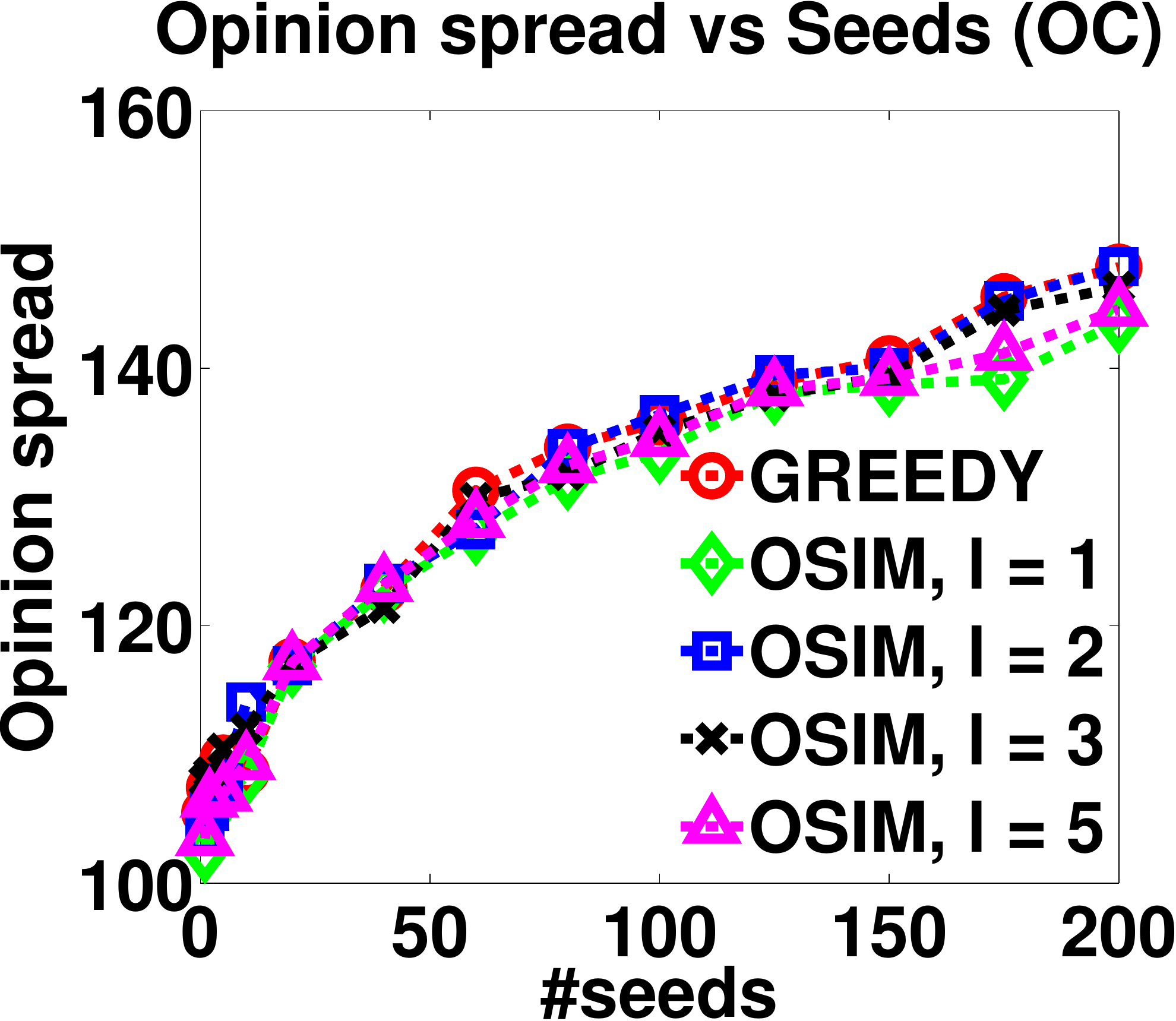}
		}
		\label{fig:eop_hepph_oc}
	}
	\subfloat[DBLP and YouTube]
	{
		\scalebox{0.1855}{
			\includegraphics[width = 0.99\linewidth]{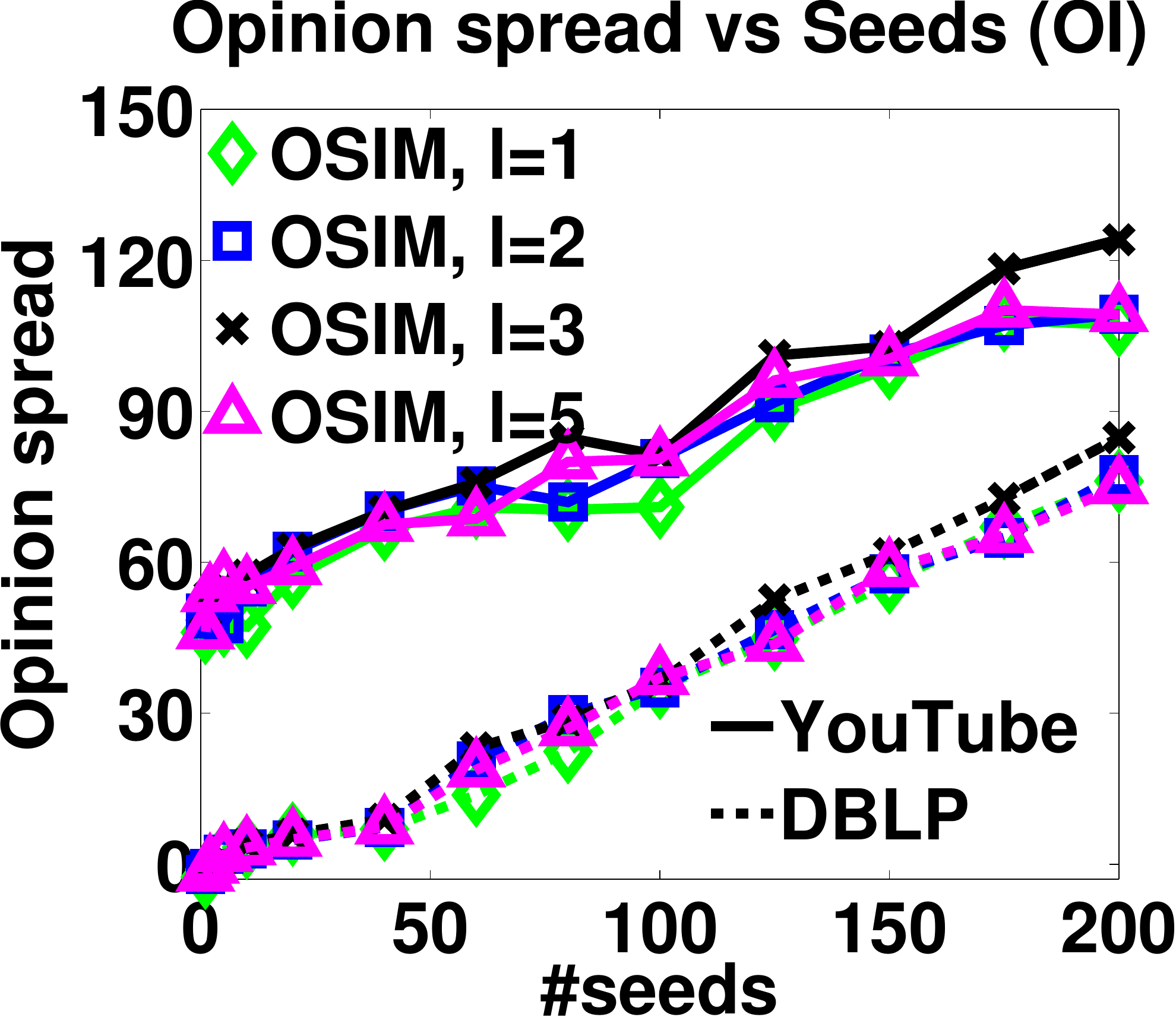}
		}
		\label{fig:eop_dblp_youtube_oi}
	}
	\subfloat[NetHEPT]
	{
		\scalebox{0.1855}{
			\includegraphics[width = 0.99\linewidth]{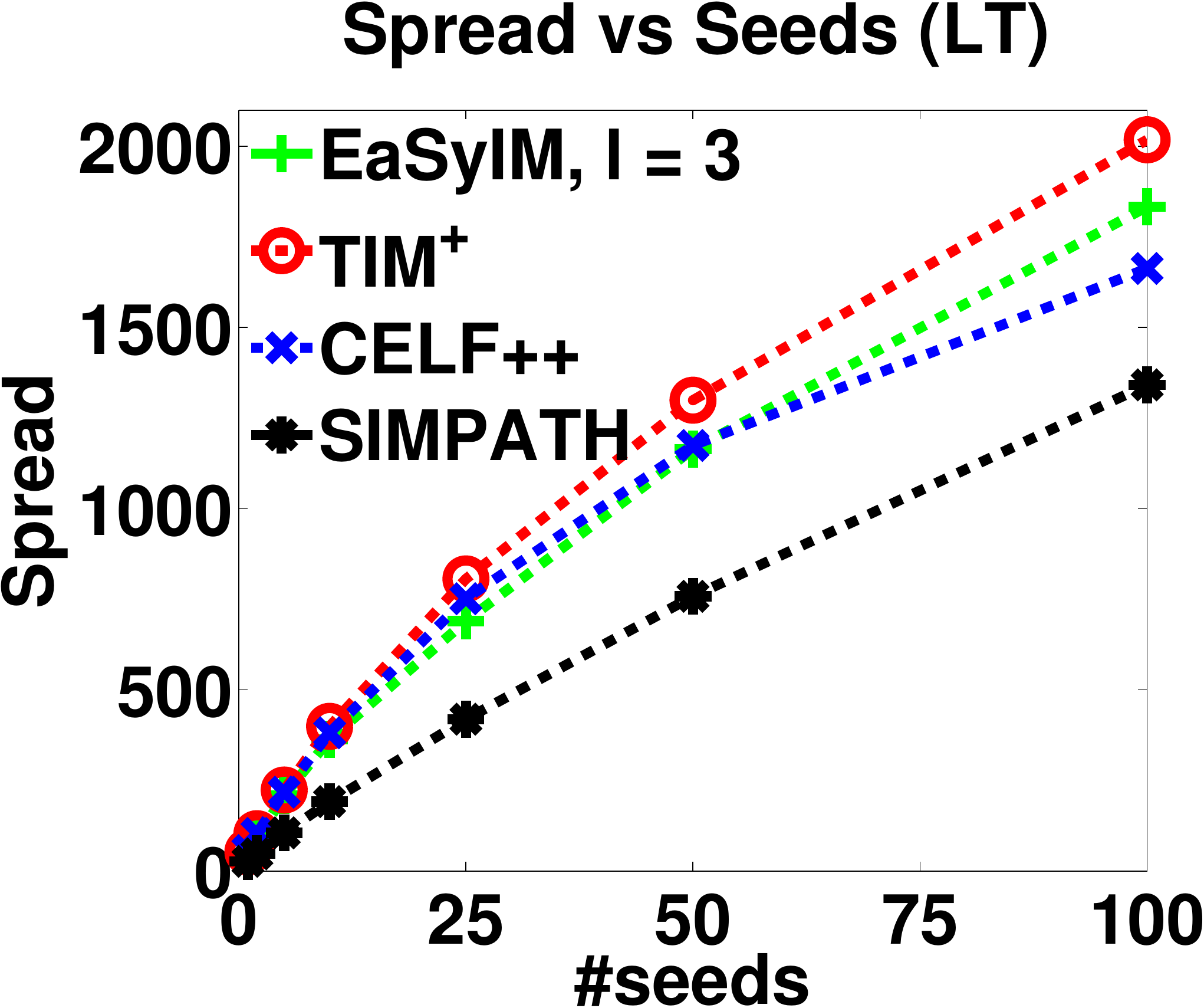}
		}
		\label{fig:spread_nethept_lt_compare}
	}
	\subfloat[YouTube]
	{
		\scalebox{0.1855}{
			\includegraphics[width = 0.99\linewidth]{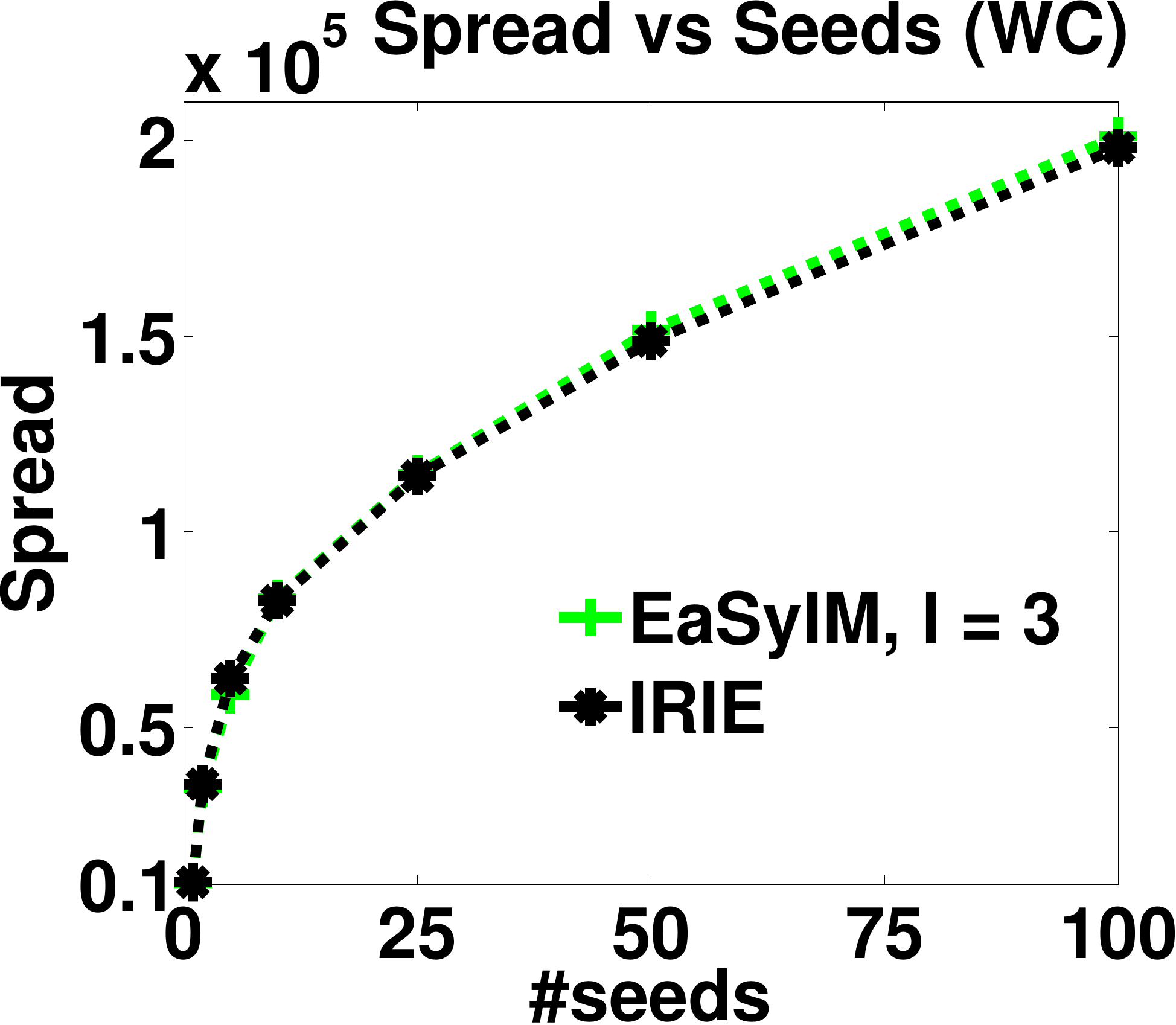}
		}
		\label{fig:spread_youtube_wc_compare}
	}
	\\
	\vspace{-3.5mm}	
	\subfloat[HepPh]
	{
		\scalebox{0.1855}{
			\includegraphics[width = 0.99\linewidth, trim=0cm 0cm 10cm 0cm]{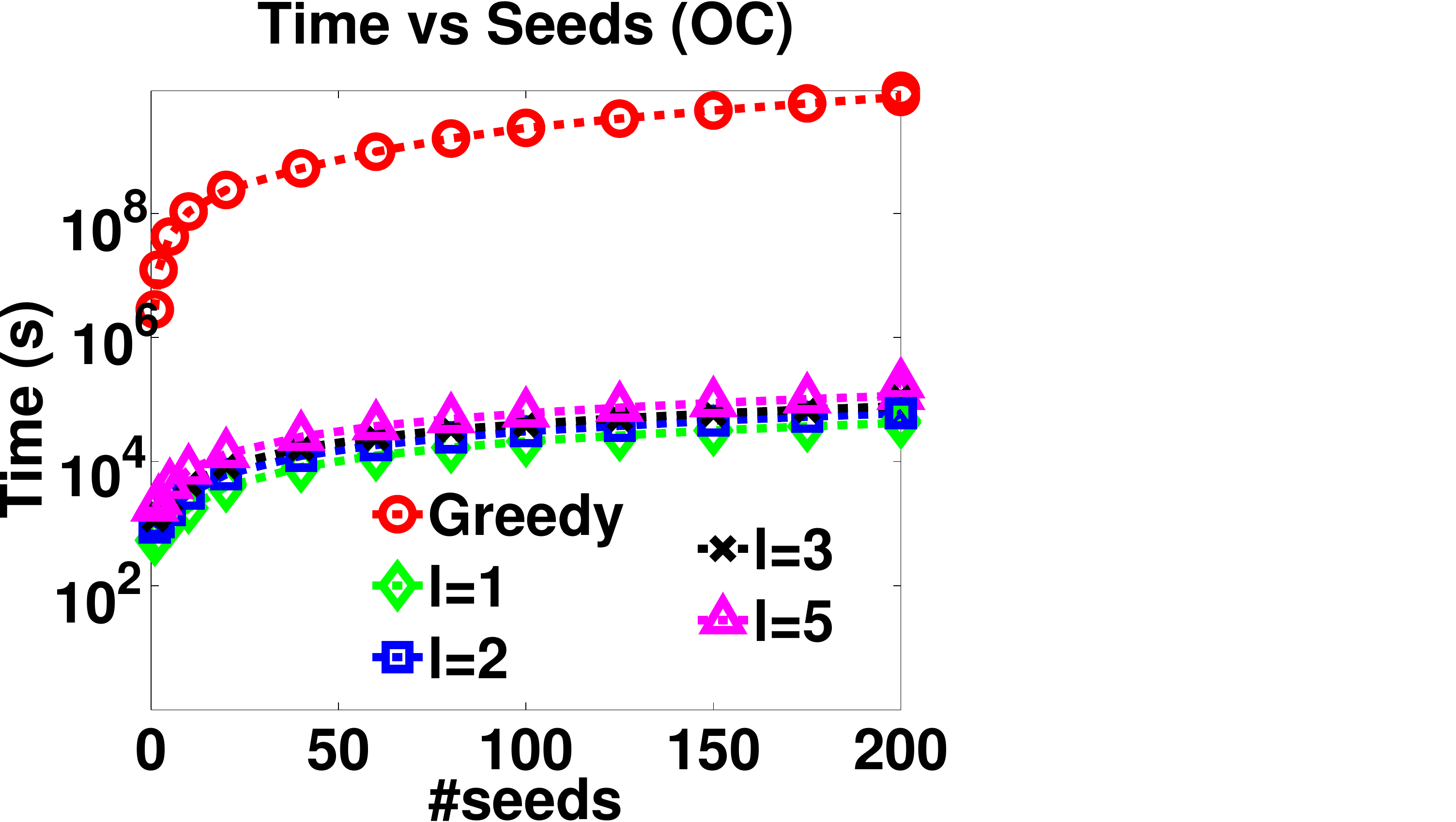}
		}
		\label{fig:time_hepph_oc}
	}
	\subfloat[DBLP and YouTube]
	{
		\scalebox{0.182}{
			\includegraphics[width = 0.99\linewidth, trim=0cm 0cm 10cm 0cm]{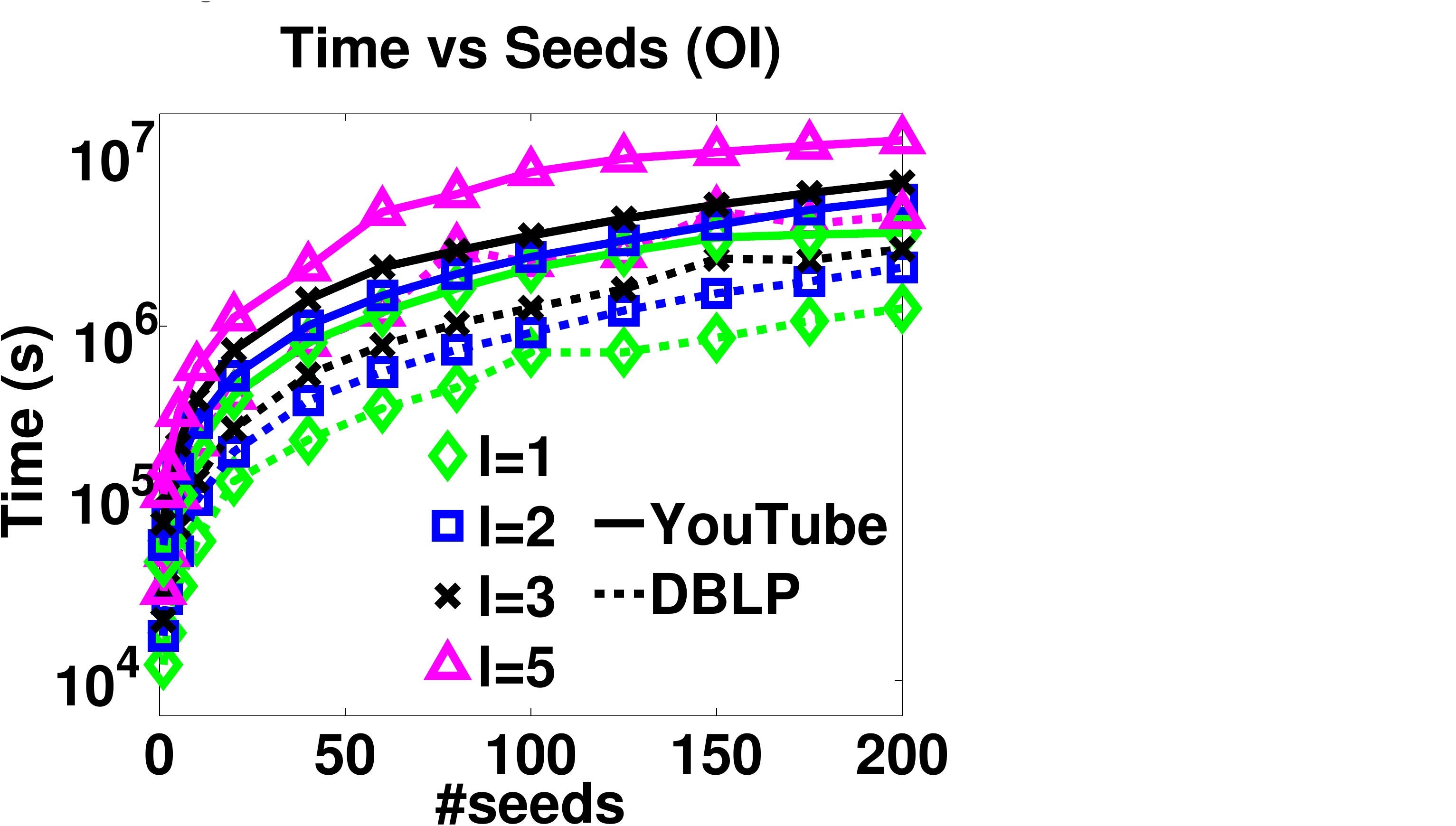}
		}
		\label{fig:time_dblp_youtube_oi}
	}
	\subfloat[Medium Datasets]
	{
		\scalebox{0.1855}{
			\includegraphics[width = 0.99\linewidth, trim=0cm 0cm 10cm 0cm]{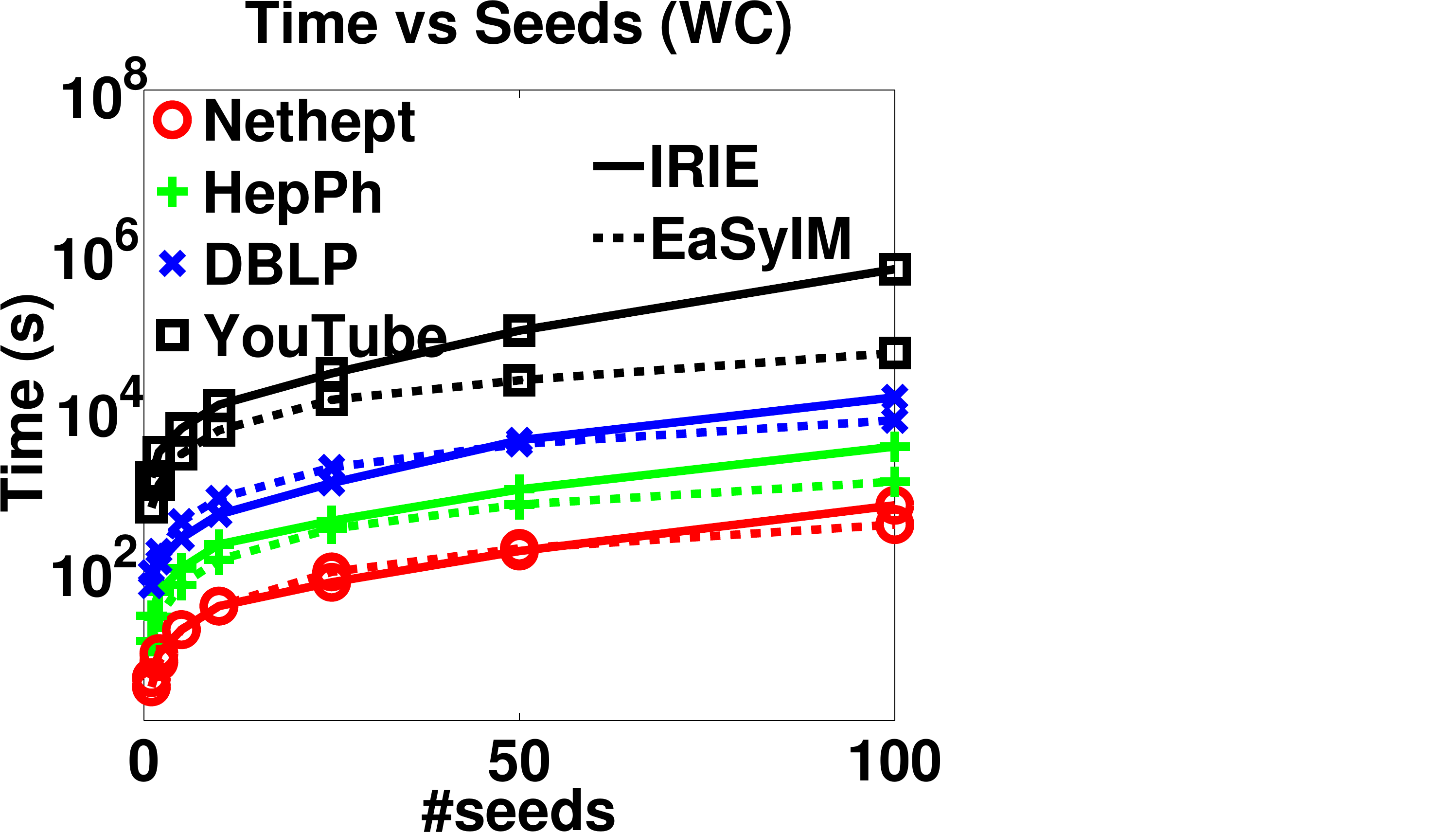}
		}
		\label{fig:time_all_irie}
	}
	\subfloat[Medium Datasets]
	{
		\scalebox{0.1855}{
			\includegraphics[width = 0.99\linewidth, trim=0cm 0cm 10cm 0cm]{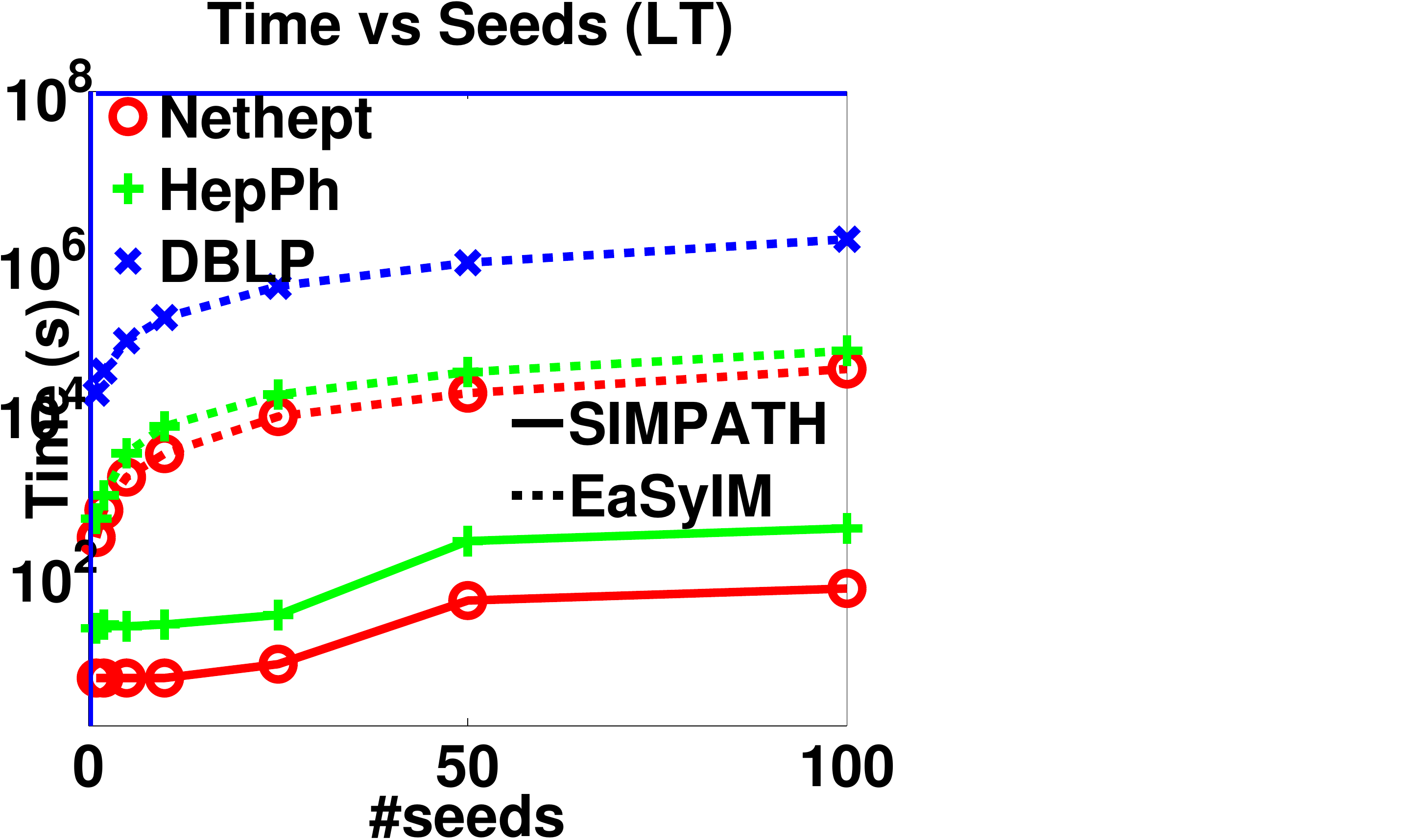}
		}
		\label{fig:time_all_simpath}
	}
	\subfloat[Large Datasets]
	{
		\scalebox{0.21}{
			\includegraphics[width = 0.99\linewidth]{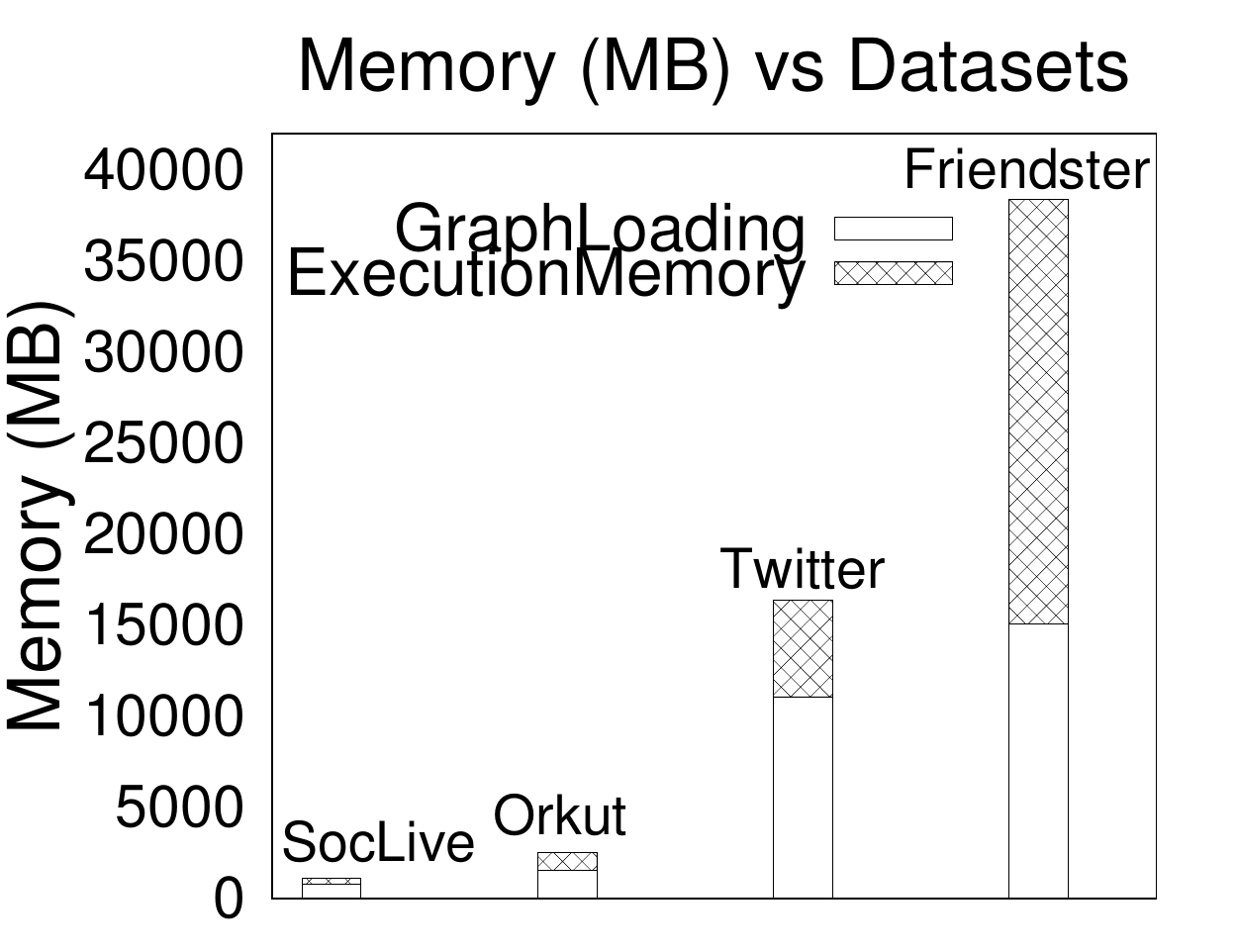}
		}
		\label{fig:memory_easyim_large}
	}
\figcaption{(a) Opinion-spread comparison of $\lambda=1$ with $\lambda=0$ on DBLP and YouTube. Growth of opinion-spread ($\lambda=1$) with $l$ and $k$ on (b) HepPh (OC), and (c) DBLP and YouTube (OI). Spread comparison of \emph{EaSyIM} with SIMPATH and IRIE on (d) NetHEPT (LT) and (e) YouTube (WC). Growth-rate of running time of \emph{OSIM} with $l$ and $k$ on (f) HepPh (OC) and (g) DBLP and YouTube (OI). Running time comparison of \emph{EaSyIM} with IRIE on (h) Medium Datasets (WC), and SIMPATH on (i) Medium Datasets (LT). (j) Memory consumption of \emph{EaSyIM} for $k=100$ on Large Datasets.}
\label{app_fig:all_results}
\end{figure*}

\ignore{
\section{Proof of Lemma 7}
\label{app_thm:lemma7}

{\small

\begin{proof}
The contribution of a node $w$ in the score of another node $u$, considering the effect of cycles exclusively, is dependent upon all possible cyclic paths that contain $w$. Moreover, since the combined contribution of the participating nodes in each cycle $\rho$ should only be considered once to the score of $u$, we scale $\gamma_w(u)$ by $1/|\rho|$. Mathematically,
\begin{align*}
\gamma_w(u) = \sum_{\rho \in \pathset_{ww}}p_{(u,w)}(1+1/|\rho|)\prod_{e \in \rho}p_e.
\end{align*}
However, the contribution using the {\kempegreedy} algorithm is
\begin{align*}
\gamma^*_w(u) = p_{(u,w)}.
\end{align*}
Now, the relative error $\epsilon^{cycle}$, is
\begin{align}
\label{eq:cycle}
\epsilon^{cycle} = \frac{|\gamma^*_w(u) - \gamma_w(u)|}{\gamma^*_w(u)} &= \frac{\sum\limits_{\rho \in \pathset_{ww}}p_{(u,w)}(1/|\rho|)\prod_{e \in \rho}p_e}{p_{(u,w)}}\nonumber\\
		&= \sum_{\rho \in \pathset_{ww}}\Big(\prod_{e \in \rho}p_e\Big)/|\rho|.
\end{align}
\hfill{}
\end{proof}
}

\section{Proof of Lemma 8}
\label{app_thm:lemma8}

{\small

\begin{proof}
It is evident that the contribution of a path of length $l$ in the expected opinion spread of a seed node $u_0$, is the sum of expected effective opinions of nodes in that path. Now we try to solve the following recursive relation to get a closed expression for the expected opinion of the nodes.
\begin{align*}
o_{u_i}' &= \frac{o_{u_i}}{2} + \psi_{i-1} o_{u_{i-1}}'\\
o_{u_{i-1}}' &= \frac{o_{u_{i-1}}}{2} + \psi_{i-2} o_{u_{i-2}}'\\
&.\\
&.\\
&.\\ 
o_{u_1}' & = \frac{o_{u_1}}{2} + \psi_{0} o_{u_{0}}'
\end{align*}
On solving the telescopic sum and substituting $o_{u_0}' = o_{u_0}$
\begin{align*}
o_{u_i}' &= \sum_{j=1}^{i}\Big( \frac{ o_{u_j}}{2} \prod_{k=1}^{i-j} \psi_{i-k}\Big) + o_{u_0} \prod_{k=1}^{i} \psi_{i-k}\\
&=  \sum_{j=0}^{i}\Big( \frac{ (1+\delta_j(\{0\}))  o_{u_j}}{2} \prod_{k=1}^{i-j} \psi_{i-k}\Big).
\end{align*}
\begin{align*}
\sigma^o(\{u_0\}) &= \sum_{i=1}^{l} \bigg(o_{u_i}'\prod_{j=1}^i \thatsymbol_{j-1}\bigg)\\
&= \sum\limits_{i=1}^{l} \bigg(\Big( \prod\limits_{j=1}^i \thatsymbol_{j-1} \Big) \sum\limits_{j=0}^{i} \Big(\frac{o_{u_j}}{2}\big(1+\delta_j(\{0\})\big)\prod\limits_{k=1}^{i-j} \psi_{i-k} \Big) \bigg).
\end{align*}

\ignore{
\begin{align*}
OIspread &= \sum_{i=1}^{d} \left(o_{u_i}'\prod_{j=1}^i p_{(u_{j-1},u_j)}\right)\\
& = \sum_{i=1}^{d} \left(\left( \prod_{j=1}^i p_{(u_{j-1},u_j)} \right) \left( \sum_{j=0}^{i} \frac{(1+\delta_j(\{0\}))  o_{u_j}}{2}\prod_{k=1}^{i-j} \frac{(2\varphi_{(u_{i-k,i-k+1})}-1)}{2}   \right) \right)
\end{align*}
}
\hfill{}
\end{proof}
}

\section{Proof of Lemma 9}
\label{app_thm:lemma9}

{\small

\begin{proof}
In order to prove this, we first obtain an expression of the score assigned to a node as a function of $u_i,\ i\in \{0,1,\ldots, l\}$. From the line $6$ of Algorithm~\ref{alg:OSIM}, we evaluate the expression for $or_i(u_k)$ as follows:
$or_{i}(u_k) = or_{i-1}(u_{k+1}) p_{(u_k,u_{k+1})} = or_{i-1}(u_{k+1}) \thatsymbol_{k} $.
This expression can be written recursively as follows:

\begin{align}
\label{eq:or_recursive}
or_{i}(u_k) &= or_{i-1}(u_{k+1}) \thatsymbol_{k}\nonumber\\
or_{i-1}(u_{k+1}) &= or_{i-2}(u_{k+2}) \thatsymbol_{k+1}\nonumber\\
&.\nonumber\\
&.\nonumber\\
&.\nonumber\\ 
or_{1}(u_{k+i-1}) &= or_{0}(u_{k+i} ) \thatsymbol_{k+i-1}
\end{align}
On solving the telescopic sum and putting $or_0(u_k) = o_{u_k}$, we get
\begin{align*}
or_i(u_k) & = o_{u_{k+i}}\prod_{j=0}^{i-1}\thatsymbol_{k+j}.
\end{align*}
We evaluate $\alpha$ using the line $7$ of Algorithm~\ref{alg:OSIM}, and obtain a similar set of recursive expressions as in Eq.~\ref{eq:or_recursive}.
\ignore{
\begin{align*}
\alpha_{i}(u_k) &= \alpha_{i-1}(u_{k+1}) \thatsymbol_{k}\psi_k\\
\alpha_{i-1}(u_{k+1}) &= \alpha_{i-2}(u_{k+2}) \thatsymbol_{k+1}\psi_{k+1}\\
&.\\
&.\\
&.\\ 
\alpha_{1}(u_{k+i-1}) &= \alpha_{0}(u_{k+i} ) \thatsymbol_{k+i-1}\psi_{k+i-1}
\end{align*}
}
On solving the telescopic sum and putting $\alpha_0(u_k) = 1$, we obtain the following:
\begin{align*}
\alpha_i(u_k) & = \prod_{j=0}^{i-1}\thatsymbol_{k+j}\psi_{k+j}.
\end{align*}
Now using line $10$ of Algorithm~\ref{alg:OSIM}, we obtain an expression of $sc_{i}(u_k)$ defined as follows:
\begin{align*}
sc_{i}(u_k) &= sc_{i-1}(u_{k+1}) \thatsymbol_{k} + o_{u_k} \alpha_i(u_k).
\end{align*}
Putting the evaluated value of $\alpha$ in the above expression, we again obtain a similar set of recursive expressions as in Eq.~\ref{eq:or_recursive}.
\ignore{
\begin{align*}
sc_{i}(u_k) =& sc_{i-1}(u_{k+1}) \thatsymbol_{k} + o_{u_k} \prod_{j=0}^{i-1}\thatsymbol_{k+j}\psi_{k+j} \\
sc_{i-1}(u_{k+1}) =& sc_{i-2}(u_{k+2}) \thatsymbol_{k+1} + o_{u_{k+1}} \prod_{j=0}^{i-2}\thatsymbol_{k+j+1}\psi_{k+j+1} \\
&.\\
&.\\
&.\\ 
sc_{1}(u_{k+i-1}) =& o_{u_{k+i-1}}\thatsymbol_{k+i-1} \psi_{k+i-1}
\end{align*}
\vspace{0.5mm}
}
On solving the telescopic sum, we get:
\begin{align*}
sc_i(u_k) & = \sum_{j=0}^{i-1}\bigg (o_{u_{k+j}}\big(\prod_{r=0}^{j-1}\thatsymbol_{k+r}\big) \bigg)\bigg( \prod_{s=0}^{i-j-1} \thatsymbol_{k+s+j}(\psi_{k+s+j})\bigg)\\
& = \prod_{j=k}^{i+k-1}  \left( \sum_{j=0}^{i-1}o_{u_{k+j}} \prod_{s=0}^{i-j-1} (\psi_{k+s+j})\right).
\end{align*}
Now we evaluate $sc_i(u_0) + or_i(u_0) + o_{u_0}\alpha_{i}(u_0)$ as:
\begin{align*}
& \prod_{j=0}^{i-1} \thatsymbol_{j} \bigg( \sum_{j=0}^{i-1}o_{u_{j}} \prod_{s=0}^{i-j-1} (\psi_{s+j})\bigg) + \prod_{j=0}^{i-1} \thatsymbol_{j} o_{u_i} + o_{u_0} \prod_{j=0}^{i-1}\thatsymbol_{j}\psi_j\\
& = \bigg( \prod_{j=1}^i\thatsymbol_{j-1}\bigg) \bigg(    \Big( \sum_{j=0}^{i}o_{u_{j}} \prod_{s=0}^{i-j-1} (\psi_{s+j})\Big) + o_{u_0} \prod_{j=0}^{i-1}\thatsymbol_{j}\psi_j  \bigg).
\end{align*}

From line $11$ of Algorithm~\ref{alg:OSIM}, it is evident that $\Delta^l(u_0) = \sum\limits_{i=0}^l\bigg(\frac{or_i(u_0)}{2}+ \frac{sc_i(u_0)}{2} + o_{u_0} \frac{\alpha_i(u_0)}{2}\bigg)$ which means:
\begin{align*}
\Delta^l(u_0) & = \sum_{i=1}^{l} \bigg(\Big( \prod_{j=1}^i \thatsymbol_{j-1} \Big) \Big( \sum_{j=1}^{i} \frac{o_{u_j}}{2}\prod_{k=1}^{i-j} \psi_{i-k} + o_{u_0} \prod_{k=1}^{i} \psi_{i-k} \Big) \bigg) \\
& = \sigma^o(\{u_0\}).
\end{align*}
\hfill{}
\end{proof}
}
}

\section{Modified-Greedy Algorithm}
\label{app_alg:mod_greedy}

{\small
\setlength{\textfloatsep}{0pt}
\begin{algorithm}[!htbp]
\caption{Modified-{\kempegreedy}}
\label{algo:modified_greedy}
\begin{algorithmic}[1]
{\scriptsize
\REQUIRE Graph $G = (V,E)$,\#seeds $k=|S|$
\ENSURE Seed set $S$
\STATE $S \leftarrow \emptyset$
\FOR{i = 1 to k}
\STATE $maxId = \argmax\limits_{w\in V\setminus S} \big(\flow^o_\lambda(S\cup\{w\}) - \flow^o_\lambda(S)\big)$ 
\STATE $S \leftarrow S \cup \{ maxId \}$
\ENDFOR
}
\end{algorithmic}
\afterpage{\global\setlength{\textfloatsep}{\oldtextfloatsep}}
\end{algorithm}
}

\section{Additional Experimental Results}
\label{app_res:opinion_aware}

{\small
\textbf{Quality}: It is evident from Figure~\ref{fig:dblp_youtube_lambda} that $\lambda=1$ outperforms $\lambda=0$ and hence, maximizing the effective opinion-spread is better. Figures~\ref{fig:eop_hepph_oc} and~\ref{fig:eop_dblp_youtube_oi} present the results with varying seeds for HepPh under the OC model ($o\sim\mathcal{N}(0,1)$), and DBLP \& YouTube under the OI model ($o\sim rand(-1,1)$) respectively. We omit the results for the Modified-{\kempegreedy} algorithm in Figure~\ref{fig:eop_dblp_youtube_oi} owing to its lack of scalability. Figures~\ref{fig:spread_nethept_lt_compare} and~\ref{fig:spread_youtube_wc_compare} present the spreads obtained for the NetHEPT and YouTube datasets under the LT and WC models respectively.

\textbf{Efficiency}: Figures~\ref{fig:time_hepph_oc} show that \emph{OSIM} is at least $10^3$ times more efficient when compared to Modified-{\kempegreedy} with the gain going as high as $10^5$ for some cases. Figure~\ref{fig:time_dblp_youtube_oi} portrays the running time of \emph{OSIM} for the DBLP and the YouTube datasets. Note that the solid line overlapping with the y-axis and the horizontal top part of the plot shows the time for Modified-{\kempegreedy}, which is indicative of the fact that it did not complete even after a month's time. Figures~\ref{fig:time_all_irie} and~\ref{fig:time_all_simpath} show a comparison of the running times of \emph{EaSyIM} with the state-of-the-art heuristics -- IRIE (WC) and SIMPATH (LT). It is evident that both of these techniques do not scale well. \emph{EaSyIM} is $2-3$ and $4-6$ times faster when compared to IRIE for small and medium sized datasets respectively. It is slower when compared to SIMPATH on smaller datasets, however, highly efficient for the medium sized datasets as SIMPATH did not complete even after $5$ days on the DBLP dataset.


\textbf{Scalability}: Eventually as a part of our scalability experiments, we present the memory required by \emph{EaSyIM} for the $4$ large graph datasets (Table~\ref{tab:dataset}) in Figure~\ref{fig:memory_easyim_large}. This shows that \emph{EaSyIM} possesses the capability of scaling to graphs with billion-scale edges.
}
\section{Additional Note on CELF++}
\label{app_alg:celf_tim}

{\small
This algorithm exploits the sub modularity of the problem by using a ``lazy-forward'' optimization while selecting seeds. This technique calculates marginal gain of nodes in decreasing order of their probability of being a seed node, which in turn is calculated form its marginal gain in previous iterations. Another optimization which helped engineer the running time of this algorithm was to ignore the nodes which do not hold a chance to become a seed node in consequent iterations.
}

\end{document}